\begin{document}
\newtheorem{theorem}{Theorem}[section]
\newtheorem{lemma}[theorem]{Lemma}
\newtheorem{corollary}[theorem]{Corollary}
\newtheorem{proposition}[theorem]{Proposition}
\newtheorem{claim}{Claim}
\theoremstyle{definition}
\newtheorem{definition}[theorem]{Definition}
\newtheorem{fact}{Fact}
\newtheorem{example}[theorem]{Example}
\newtheorem{conjecture}{Conjecture}
\newtheorem{xca}[theorem]{Exercise}
\newtheorem{openproblem}[conjecture]{Open Problem}
\theoremstyle{remark}
\newtheorem{remark}[theorem]{Remark}
\numberwithin{equation}{section}
\title{Using evolutionary computation to create vectorial Boolean functions with low differential uniformity and high nonlinearity.}
\author{James McLaughlin\footnote{Corresponding author, \nolinkurl{jmclaugh@cs.york.ac.uk}}, John A. Clark}
\date{}
\maketitle

\begin{abstract}
The two most important criteria for vectorial Boolean functions used as S-boxes in block ciphers are
differential uniformity and nonlinearity. Previous work in this field has focused only on
nonlinearity and a different criterion, autocorrelation. In this paper, we describe the results of
experiments in using simulated annealing, memetic algorithms, and ant colony optimisation to
create vectorial Boolean functions with low differential uniformity.

\textbf{Keywords:} Metaheuristics, simulated annealing, memetic algorithms, ant colony optimization, cryptography, Boolean functions, vectorial Boolean functions.
\end{abstract}

\thispagestyle{empty}

\section{Introduction}
Most block cipher designs rely on vectorial Boolean functions known as ``S-boxes''. These functions, mapping $n$ input bits to $m$ output bits, introduce nonlinearity into the cipher design, and are responsible for ensuring that no output bit of the cipher can be expressed as an easily-solved, low-degree polynomial in the input bits.

There have been various approaches to designing these functions, and most of the research in this area has focused on bijective S-box designs. Nyberg \cite{Nyberg1993} treated the $n$-bit input $(x_{1}, x_{2}, {\ldots}, x_{n})$ as a single value $x$ in the finite field $GF(2^n)$, and studied the properties of mappings $x^p$, where $p$ was either a fixed value or a polynomial of some particular form. The mapping $S(0)=0$, $S(x \neq 0) = x^{-1}$, known as the ``inverse based'' S-box, was used in the Advanced Encryption Standard \cite{FIPS-197} due to its excellent differential uniformity, nonlinearity and other properties.

The designers of Serpent \cite{Anderson_Biham_Knudsen_postAug1998}, by contrast, started with a set of $4 \times 4$ (i.e. 4 input bits, 4 output bits) bijections and a list of specified properties, and continued making random changes to the bijections until they had obtained eight with the required properties. The designers of PRESENT \cite{Bogdanov_Knudsen_Leander_Paar_Poschmann_Robshaw_Seurin_Vikkelsoe2007}, due to technological advances in the nine years since then, were simply able to choose a $4 \times 4$ bijection from the complete list \cite{Leander_Poschmann2007} of all $4 \times 4$ bijections with these properties.

Although there had been some earlier work using simpler techniques such as hill-climbing, and although metaheuristics had been applied to the related problem of evolving single-output-bit Boolean functions \cite{Clark_Jacob2000, Clark_Maitra_Stanica2004, Clark_Jacob_Stepney_searchcost2004, Clark_Jacob_Maitra_Stanica2004, Clark_Jacob_Stepney_Maitra_Millan2002}, the first major attempt to construct S-boxes by utilising metaheuristics was carried out in 1999 by Millan et al. \cite{Burnett_Carter_aClark_Dawson_Millan1999}. They attempted to maximise one property, known as nonlinearity, and simultaneously to minimise another; autocorrelation. The metaheuristic used was a genetic algorithm variant without mutation and with an unconventional crossover function that the authors claimed introduced enough randomness to eliminate the need for mutation.

Encouraged by the success of their previous research into the use of simulated annealing to evolve Boolean functions with 1-bit output length and low autocorrelation/high nonlinearity \cite{Clark_Jacob2000, Clark_Jacob_Stepney_searchcost2004, Clark_Jacob_Maitra_Stanica2004} \cite{Clark_Jacob_Stepney_Maitra_Millan2002}, Clark, Jacob and Stepney employed the same techniques to try to obtain S-boxes with these qualities \cite{Clark_Jacob_Stepney2004}.
Their approach differed from that which had preceded it in two key respects:

\begin{itemize}
\item Instead of simply using the value of nonlinearity/autocorrelation for the cost/fitness function, the whole linear approximation/autocorrelation table would be taken into account, with the cost function focusing on trying to bring every value close to zero.
\item After using one cost function for the simulated annealing, they would end the search by hill-climbing, usually with a different cost function - so, for example, after obtaining a relatively flat linear approximation table, they would hill-climb with nonlinearity as the fitness function, based on the hypothesis that the primary cost function would have guided the search into a region of the search space in which a higher number of candidate solutions with above average nonlinearity existed.
\end{itemize}

This approach improved on the best nonlinearity obtained by Millan et al. for bijections such that $6 \leq n = m \leq 8$, and equalled it for $n=m=5$. However, the researchers noted that S-boxes with higher nonlinearity were known to exist for $n=6$ and $n=8$. They also noted that a slight increase in the value of $n$ resulted in a massive increase to the problem complexity.

\subsection{Technical background}
\begin{definition}
For some $n, m$, an $n \times m$ S-box is a mapping from $GF(2^{n})$ to $GF(2^{m})$. If every value $\in GF(2^{m})$ is mapped to by an equal number of distinct input values, we say that the S-box is \textit{balanced}.
\end{definition}

Although the original Data Encryption Standard \cite{FIPS-46-3} used S-boxes mapping a six-bit output to a four-bit input, most modern cipher designs use only bijective $n \times n$ S-boxes. In particular, we note that the current Advanced Encryption Standard (AES) \cite{FIPS-197}, and most block ciphers designed for environments where the AES is too resource-intensive (e.g. PRESENT \cite{Bogdanov_Knudsen_Leander_Paar_Poschmann_Robshaw_Seurin_Vikkelsoe2007}, PRINTCIPHER \cite{Knudsen_Leander_Poschmann_Robshaw2010}), use only bijective $n \times n$ S-boxes. For this reason, we will focus primarily on functions of this sort.

Even if it is not bijective, the S-box is usually required to be balanced. Otherwise, the two most important criteria for a cryptographically secure S-box are low \textit{differential uniformity} and high \textit{nonlinearity}. There are other relevant criteria as well - as an example, it has been suggested that if large numbers of low-degree implicit equations with the S-box's input and output bits as variables hold, this may lead to the cipher being broken \cite{Courtois_Pieprzyk2002FinalIACR, Mouha_Preneel_Sun_Wang2011}.

\begin{definition}\label{DDT_defn}
Let $S$ be an S-box. Let us construct a table with $2^{n}$ rows and $2^{m}$ columns by defining the entry in row $i$, column $j$ to be the number of inputs $x$ such that $S(x) {\oplus} S(x \oplus i) = j$. (${\oplus}$ will signify exclusive-or throughout this paper.)
This table is known as the \textit{difference distribution table}, or DDT.

Note that all entries in this table must be even, for if $S(x) {\oplus} S(x \oplus i) = j$, then let $y = (x {\oplus} i)$, and we see that $S(y) {\oplus} S(y {\oplus} i) = j$. Furthermore, it is clear that all entries in the row of a DDT must sum to $2^{n}$, and hence that the sum of all entries in all rows must be equal to $2^{m+n}$.
\end{definition}

\begin{definition}\label{differential_uniformity_defn}
The \textit{differential uniformity} of an S-box $S$, sometimes denoted $DU(S)$ (or just $DU$ when it is clear from the context that $S$ is meant), is the largest value present in its difference distribution table (barring the entry for $i=j=0$).

Where $DU(S) = k$, we refer to $S$ as being \textit{differentially $k$-uniform}.
\end{definition}

\begin{definition}\label{APN_defn}
An S-box with differential uniformity 2 is referred to as being \textit{almost perfect nonlinear} (or ``APN''). In theory, perfect nonlinearity would correspond to $DU\ 1$, however since all entries in the DDT must be even, this is not achievable in practice. The term "almost perfect nonlinear" was introduced by Vaudenay et al. \cite{Chabaud_Vaudenay1994} in 1994.
\end{definition}

\begin{definition}\label{differential_frequency_defn}
The number of times that the value $D$ equal to $DU(S)$ appears in the difference distribution table of $S$ is referred to as its \textit{differential frequency} ($DF$ for short).
\end{definition}

\begin{definition}\label{LAT_defn}
Let $S$ be an S-box. Let us construct another table with $2^{n}$ rows and $2^{m}$ columns by defining the entry in row $i$, column $j$ as follows: where the parity of $i\ AND\ x$ is equal to the parity of $j\ AND\ S(x)$ for $k$ values of $x$, the corresponding entry in the table is the value $k - 2^{n-1}$.

This table is known as the \textit{linear approximation table}, or LAT. The proof is a little too long to reproduce here, but as long as the S-box is balanced, this table also has the property that all its entries are even.
\end{definition}

\begin{definition}\label{nonlinearity_defn}
The \textit{nonlinearity} of an S-box, $NL(S)$ (or just $NL$ when the box involved is clear from the context.) is equal to $2^{n-1} - max_{i, j}|LAT_{ij}|$ - that is, $2^{n-1}$ minus the maximum absolute value in the LAT (excluding the entry for $i=j=0$.)
\end{definition}

\begin{definition}\label{nonlinear_frequency_defn}
The number of times that the value $2^{n-1} - NL(S)$ appears in the linear approximation table of $S$ is referred to as the \textit{nonlinear frequency} of $S$ ($NF$ for short).
\end{definition}

Differential uniformity and nonlinearity are the most important properties in a cipher's S-boxes, being crucial to its resistance against differential cryptanalysis \cite{Biham_Shamir1990full, Biham_Shamir1992} and linear cryptanalysis \cite{Matsui1993, Matsui1994} respectively, as well as the many variants on and hybrids of these techniques.

As stated earlier, previous research also focused on a property known as \textit{autocorrelation}. While this has some relevance to the cipher's resistance against the hybrid technique \textit{differential-linear cryptanalysis} \cite{Hellman_Langford1994, Biham_Dunkelman_Keller2002}, differential uniformity and nonlinearity are far more important even in this context. We did not therefore focus on autocorrelation in our experiments, but for the sake of completeness when discussing previous research we provide the relevant definitions here:

\begin{definition}\label{autoc_table_defn}
Where $S$ is an S-box, let us construct a third table with $2^{n}$ rows and $2^{m}$ columns. Let
$S_{j}(x)$ be the Boolean function with one output bit defined by xoring the output bits of $S$
corresponding to the 1s in the bitmask $j$. Let the entry in row $i$, column $j$ be equal to the
number of inputs for which $S_{j}(x) = S_{j}(x \oplus i)$, minus the number of inputs for which
this was not the case.

This table is the \textit{autocorrelation table}, or ACT.
\end{definition}

\begin{definition}\label{autocorrelation_defn}
The \textit{autocorrelation} of an S-box, $AC(S)$ (or just $AC$ when it is clear from the context
that $S$ is the box involved) is the maximum absolute value in its autocorrelation table, barring
the trivial entry for $i=j=0$.
\end{definition}

\begin{definition}
The number of times that the value $AC(S)$ appears in the autocorrelation table of $S$ is referred
to as the \textit{autocorrelation frequency} of $S$ ($AF$ for short).
\end{definition}

The three tables are in fact related. Using matrix transformations, the ACT can be derived from
the DDT, and a table $C$ such that $C_{i, j} = (2{\cdot}LAT_{i, j})^{2}$ can be derived from either
the LAT or the ACT \cite{Imai_Zhang_Zheng2000}. Since this will prove relevant later on, we explain it in more detail here.

\begin{definition}
The Walsh-Hadamard matrix is also known as the ``Sylvester-Hadamard matrix''. It is defined recursively thus:

\begin{equation*}
H_{0} =
\begin{bmatrix}
1
\end{bmatrix}
\end{equation*}

\begin{equation*}
H_{n} =
\begin{bmatrix}
H_{n-1} &   H_{n-1} \\
H_{n-1} & -1{\times}H_{n-1}
\end{bmatrix}
\end{equation*}

So, for example,
\begin{equation*}
H_{1} = 
\begin{bmatrix}
1 &  1 \\
1 & -1
\end{bmatrix}
\qquad
H_{2} =
\begin{bmatrix}
1 &  1 &  1 &  1 \\
1 & -1 &  1 & -1 \\
1 &  1 & -1 & -1 \\
1 & -1 & -1 &  1
\end{bmatrix}
\end{equation*}
\end{definition}

\begin{theorem}\label{hadamard_ddt_lat_theorem}
For an $n \times m$ S-box $S$, let $D_{S}$ denote the DDT of $S$. Treat it as a matrix for the purposes of multiplication; then $H_{n}D_{S}H_{m}$ is the aforementioned table $C$ such that $C_{i, j} = (2{\cdot}LAT_{i, j})^{2}$.
\end{theorem}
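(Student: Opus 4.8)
The plan is to work entirely with the Walsh transform of $S$ and relate both $D_S$ and the matrix $C$ to it. First I would introduce, for each output mask $b \in GF(2^m)$, the $\pm 1$-valued sign vector $\chi_b$ indexed by $x \in GF(2^n)$ with $\chi_b(x) = (-1)^{b \cdot S(x)}$, and its Walsh transform $W_b(a) = \sum_x (-1)^{a\cdot x}(-1)^{b\cdot S(x)}$. The key observation is that $W_b(a)$ is exactly $2\cdot LAT_{a,b}$ (this is immediate from Definition \ref{LAT_defn}, since $LAT_{a,b}$ counts agreements minus $2^{n-1}$, and $W_b(a)$ counts agreements minus disagreements $= 2(\text{agreements}) - 2^n$). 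So the target entry $C_{a,b} = (2\cdot LAT_{a,b})^2 = W_b(a)^2$.

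Next I would compute $W_b(a)^2$ as a double sum and reorganize it into a convolution over the difference variable $i = x \oplus y$:
\begin{equation*}
W_b(a)^2 = \sum_{x,y}(-1)^{a\cdot(x\oplus y)}(-1)^{b\cdot(S(x)\oplus S(y))}
= \sum_{i}(-1)^{a\cdot i}\sum_{x}(-1)^{b\cdot(S(x)\oplus S(x\oplus i))}.
\end{equation*}
Now I would recognize the inner sum as a transform of a single DDT row: fixing $i$, the inner sum equals $\sum_{j} D_{S}[i,j]\,(-1)^{b\cdot j}$, because $D_S[i,j]$ counts precisely those $x$ with $S(x)\oplus S(x\oplus i) = j$. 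Hence $W_b(a)^2 = \sum_{i}\sum_{j}(-1)^{a\cdot i}(-1)^{b\cdot j}D_S[i,j]$. The final step is to identify the two $\pm 1$ sign factors with the entries of the Sylvester--Hadamard matrices: $(H_n)_{a,i} = (-1)^{a\cdot i}$ and $(H_m)_{j,b} = (-1)^{j\cdot b}$ (this sign-character description of $H_n$ follows by a one-line induction from the recursive block definition, using the bit decomposition $a = (a_1, a')$, $i = (i_1, i')$). Substituting, $W_b(a)^2 = \sum_{i,j}(H_n)_{a,i}\,D_S[i,j]\,(H_m)_{j,b} = (H_n D_S H_m)_{a,b}$, which is the claim.

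I expect the only real friction to be bookkeeping: being careful that indices of $H_n$ and $H_m$ match the row/column conventions of the DDT (so that the left multiplication by $H_n$ acts on the difference index $i$ and the right multiplication by $H_m$ acts on the output-difference index $j$), and confirming the symmetry $(-1)^{a\cdot i} = (-1)^{i \cdot a}$ makes the transposition in $H_m$ a non-issue since $H_m$ is symmetric. The rearrangement of the double sum into a sum over $i = x\oplus y$ is the one genuinely substantive step, and it is a standard bijective reindexing $(x,y)\mapsto(x, x\oplus y)$ on $GF(2^n)^2$; everything else is the identification $W_b = 2\,LAT$ and the character formula for Hadamard matrices, both of which are routine.
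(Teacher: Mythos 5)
Your proof is correct. The paper itself gives no proof of this theorem --- it simply defers to the cited reference --- and your argument is the standard Walsh-transform proof that presumably appears there: the identification $2\,LAT_{a,b}=\sum_x(-1)^{a\cdot x\oplus b\cdot S(x)}$, the reindexing $(x,y)\mapsto(x,x\oplus i)$ turning the squared sum into $\sum_{i,j}(-1)^{a\cdot i}(-1)^{b\cdot j}D_S[i,j]$, and the character formula $(H_n)_{a,i}=(-1)^{a\cdot i}$ (with $H_m$ symmetric) all check out, so every step is sound and the bookkeeping of row/column indices matches the paper's conventions.
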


For a proof, the reader is referred to the paper in which this theorem originally appeared \cite{Imai_Zhang_Zheng2000}.

\begin{definition}\label{ANF_defn}
Let $f$ be a Boolean function mapping $n$ input bits $x_i$ to one output bit $y$. $f$ may be uniquely expressed as a multivariate polynomial in which the variables are the values $x_i$, the AND operation is used for multiplication, and the XOR operation is used for addition. This is its \textit{algebraic normal form} (ANF). By means of the so-called M{\" o}bius Transform, the truth table of $f$ may be converted to the ANF, and vice versa.
\end{definition}

\begin{definition}\label{degree_defn}
The \textit{algebraic degree} of a Boolean function $f$ with one output bit is defined thus: Consider the ANF of $f$. Let the Hamming weight of a monomial in the ANF be defined as the number of variables multiplied together in the monomial - so, for instance, $x_{1}x_{4}$ has weight 2. The algebraic degree of a monomial is defined as being equal to its weight, and the algebraic degree of $f$ is equal to the algebraic degree of the highest-weight monomial in its ANF.

For example, the algebraic degree of $1 \oplus x_{2} \oplus x_{1} \oplus x_{3}x_{4} \oplus x_{1}x_{4} \oplus x_{1}x_{3} \oplus x_{1}x_{3}x_{4} \oplus x_{1}x_{2}x_{4}$ is 3.
\end{definition}

There exist various generalisations of the concept of algebraic degree to the case of $n \times m$ vectorial Boolean functions. The most common such definition \cite{Carlet2010_VBC} is as follows:

\begin{definition}\label{vectorial_degree_defn}
Consider the $2^{m}-1$ non-trivial Boolean functions $S_{i}$ $(i \in \{1, \ldots, 2^{m}-1\})$ that can be
obtained by forming linear combinations of the $m$ functions that map the input bits to individual
output bits of $S$. The algebraic degree of $S$ is defined as $\max_{i}(deg(S_{i}))$, i.e. the largest
algebraic degree of any of these functions $S_{i}$.
\end{definition}

\subsection{Transformations preserving relevant S-box properties, and notions of S-box equivalence.}
Various equivalence notions exist according to which there may be several S-boxes in the search space of bijections over $GF(2^{n})$ with identical differential uniformity and nonlinearity; and indeed with the same sets of absolute values in their DDTs and LATs \cite{Budaghyan_Carlet_Pott2006}. This has the potential to be particularly problematic for genetic and memetic algorithms, since it implies that many different ``genes'' may result in the same cryptanalytically relevant properties (see Appendix \ref{POWERS_Appendix} for more information on this.) We define these notions here:

\begin{definition}
Let $S_{1}$, $S_{2}$ be two S-boxes.

$S_{1}$ and $S_{2}$ are \textit{affine-equivalent} iff $S_{2} = A{\cdot}S_{1}{\cdot}B$, where $A, B$ are bijective affine transformations (so $A(x)$ would be the result of applying the transformation $M_{A}x \oplus V_{A}$, where $M_{A}$ was some invertible matrix and $V_{A}$ a vector.)
\end{definition}

\begin{definition}\label{EA_equivalence_defn}
Let $S_1$, $S_2$ be two S-boxes.

$S_1$ and $S_2$ are \textit{extended affine-equivalent} (EA-equivalent) iff $S_2 = A{\cdot}S_{1}{\cdot}B \oplus C$, where $A, B$ are bijective affine transformations and C is some, not necessarily bijective, affine transformation acting on the same input $x$ as $A{\cdot}S_{1}{\cdot}B$.
\end{definition}

\begin{definition}\label{CCZ_equivalence_defn}
Let $S_1$, $S_2$ be two S-boxes.

Let $gr(S_1)$ denote the graph of $S_1$, i.e. the set of all $(x, S_1(x))$ pairs. Let each such pair be viewed as a value in $GF(2)^{2n}$.

$S_1$ and $S_2$ are \textit{Carlet-Charpin-Zinoviev equivalent} (CCZ-equivalent for short) iff there exists some affine permutation $L: GF(2)^{2n} \rightarrow GF(2)^{2n}$ such that $L(gr(S_1)) = gr(S_2)$.

(so $L(x, S_1(x)) = (L_{1}(x, S_1(x)), S_2(L_{1}(x, S_1(x))))$, where $L_{1}: GF(2)^{2n} \rightarrow GF(2)^{n}$ maps $L$'s input bits to its first $n$ output bits.)
\end{definition}

\begin{definition}
An \textit{affine-invariant} property is a property which, if possessed by an S-box $S$, is also possessed by all S-boxes affine-equivalent to $S$.
Similarly, an EA-invariant property is a property which, if possessed by an S-box $S$, is also possessed by all S-boxes EA-equivalent to $S$, and CCZ-invariant properties are defined in terms of CCZ-equivalent S-boxes in the same way.
\end{definition}

It may be seen from the above definition that any two S-boxes which are affine equivalent are also EA-equivalent. It follows from this that all EA-invariant properties are also affine-invariant. Furthermore, CCZ-equivalence generalises EA-equivalence \cite{Budaghyan_Carlet_Pott2006}, so all CCZ-invariant properties are also EA-invariant and hence affine-invariant.

Importantly, differential uniformity and nonlinearity are CCZ-invariant properties \cite{Budaghyan_Carlet_Pott2006}.

We sought to find ways of utilising the equivalence classes to reduce the size of the search space, and to reveal patterns in the truth tables of the evolved mappings which could be exploited by the ant colony and memetic algorithm experiments (as well as some experiments with genetic algorithms that were later superseded by the memetics). The following theorem resulted from this:

\begin{theorem}\label{POWERS}
Every bijective S-box $S$ is affine-equivalent to at least one bijective S-box $S_2$ such that $S_2$ maps all inputs with Hamming weight less than 2 to themselves, 3 to 3 or 5 (we can restrict this to 5 if the S-box has differential uniformity $\leq 4$), 5 to some value $\leq 11$ (this may be restricted to 6 or 10 if the S-box has differential uniformity 2), and all $2^{i}+1 (3 \leq i \leq (n-1))$ to some value $\leq 2^{i+2}-2i-1$.
\end{theorem}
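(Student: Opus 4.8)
The plan is to use the full freedom in affine equivalence — a translation and an invertible linear map on each of the input and output sides — to normalise $S$ one group of inputs at a time, preserving at each stage the constraints already achieved. Identify inputs and outputs with their binary representations, so the weight-$\le 1$ inputs are $0$ and the $n$ powers $2^0,\dots,2^{n-1}$, and $2^i+1 = 1\oplus 2^i$. First, post-composing $S$ with the translation $y\mapsto y\oplus S(0)$ (an affine bijection, hence an affine equivalence) lets me assume $S(0)=0$, which also forces the $S_2$ I build to fix $0$.

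\emph{Step 1 (weight-$\le 1$ inputs).} The key lemma is: any bijection $F$ of $GF(2)^n$ with $F(0)=0$ admits a basis $b_1,\dots,b_n$ whose image $F(b_1),\dots,F(b_n)$ is again a basis. I would prove this greedily: given independent $b_1,\dots,b_k$ with $F(b_1),\dots,F(b_k)$ independent, an admissible $b_{k+1}$ must lie outside both $V=\langle b_1,\dots,b_k\rangle$ and $F^{-1}(W)$, where $W=\langle F(b_1),\dots,F(b_k)\rangle$; since $0,b_1,\dots,b_k$ lie in both sets, $|V\cup F^{-1}(W)|\le 2^{k+1}-(k{+}1)$, leaving $\ge 2^n-2^{k+1}+1\ge 1$ choices while $k<n$. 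Composing $S$ with the linear map $2^{j-1}\mapsto b_j$ on the input side and $S(b_j)\mapsto 2^{j-1}$ on the output side produces an affine-equivalent bijection fixing every weight-$\le 1$ input; call it $S$ again. I record the residual freedom: for any ordered basis $c_1,\dots,c_n$ with $S(c_1),\dots,S(c_n)$ also a basis, the box $A_c\,S\,B_c$ (with $B_c:2^{j-1}\mapsto c_j$ and $A_c:S(c_j)\mapsto 2^{j-1}$, both linear) is affine-equivalent to $S$, still fixes all weight-$\le 1$ inputs, and sends $1\oplus 2^i$ to $A_c\!\left(S(c_1\oplus c_{i+1})\right)$.

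\emph{Step 2 (the inputs $2^i+1$, for $i=1,\dots,n-1$, in increasing order).} The $i=1$ case shows the mechanism. Fix any independent $c_1,c_2$ (say $1,2$), put $\delta=S(c_1)\oplus S(c_1\oplus c_2)\ne 0$ and $c_3=S^{-1}(\delta)$. A two-line check rules out $c_3\in\{0,c_1,c_1\oplus c_2\}$ (each is immediately impossible), so either $c_3\notin\langle c_1,c_2\rangle$, whence extending to an admissible basis beginning $c_1,c_2,c_3$ gives image $A_c(S(c_1)\oplus S(c_3))=1\oplus 4=5$; or $c_3=c_2$, meaning $S$ is linear on the flat $\{0,c_1,c_2,c_1\oplus c_2\}$, whence the image is $1\oplus 2=3$. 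So the image of $3$ can always be made $3$ or $5$. Since linearity of $S$ on a $2$-flat forces a DDT entry $\ge 4$, and the only linear bijection fixing all weight-$\le 1$ inputs is the identity ($DU=2^n$), any $S$ with $DU(S)\le 4$ is non-linear, has a $2$-flat on which it is non-linear, and so its image of $3$ can be made exactly $5$. For $i\ge 2$ the same dichotomy is used, but one must also keep the earlier images $S(1\oplus 2^k)$, $k<i$, fixed; this cuts the residual freedom, and the threshold $2^{i+2}-2i-1$ is precisely what a pigeonhole count requires: the already-used outputs below it are $0$, the powers $2^1,\dots,2^{i+1}$, and the $i-1$ earlier images (each below its own smaller threshold) — at most $2i+1$ values among $2^{i+2}-2i$ — so a free target remains, and the $i=1$ argument, augmented to preserve the earlier flats, reaches one. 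The $DU=2$ sharpenings follow because the APN constraint forbids further DDT entries, deleting all but a couple of the small targets.

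The main obstacle is the bookkeeping in Step 2: showing that once $S(1\oplus 2^k)$ is pinned down for all $k<i$, enough affine freedom genuinely survives to land $S(1\oplus 2^i)$ below $2^{i+2}-2i-1$ — that is, marrying the pigeonhole count on used small outputs with a reachability argument that the surviving transformations actually realise one of the free values. By comparison, Step 1 is routine once one notices that $0$ lies in both forbidden sets, and the low-differential-uniformity refinements are short consequences of bounds on DDT entries; it is satisfying all the $2^i+1$ constraints simultaneously that takes the work.
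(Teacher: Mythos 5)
Your Step 1 (the greedy basis-to-basis lemma) and your treatment of $S(3)$ are sound, and in fact take a genuinely different route from the paper, which never fixes all of $1,2,4,\ldots,2^{n-1}$ at once: it fixes $0,1,2$, adjusts $S(3)$, and only then fixes each $2^{h+1}$ one at a time (input-side when $S^{-1}(2^{h+1})\geq 2^{h+1}$, otherwise via a ``CXOR'' output matrix), interleaving these steps with the adjustments of the $S(2^i+1)$. Your dichotomy ($c_3=S^{-1}(S(c_1)\oplus S(c_1\oplus c_2))$ either equals $c_2$, giving image $3$, or extends to an admissible basis, giving image $5$) is a clean replacement for the paper's explicit matrix manipulations, and your observation that a non-identity normalised box must be non-additive on some $2$-dimensional subspace gives the $DU\leq 4$ refinement of $S(3)$ more directly than the paper's case analysis.

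The genuine gap is exactly the part you flag and then do not close: the bounds $S(5)\leq 11$, $S(2^i+1)\leq 2^{i+2}-2i-1$, and the APN restriction of $S(5)$ to $\{6,10\}$. Your pigeonhole over unused outputs below the threshold (which, incidentally, should count $2i+2$ used values, since $S(1)=1$ is also spent) establishes only that small free targets exist, not that the residual transformations --- those preserving all fixed points \emph{and} all earlier images $S(2^k+1)$ --- can reach one; and this reachability claim is the entire content of the theorem at this stage. It is also not the mechanism behind the stated bound: in the paper the value $2^{i+2}-2i-1$ arises because one first forces $2^{i+1}\leq S(2^i+1)\leq 2^{i+2}-1$ (by repeated ``MSB-shift'' matrices that leave all outputs below $2^{i+1}$ untouched) and then adjusts the low bits by a CXOR, counting that at most $2^{i+1}-2i-2$ of the possible CXOR targets are blocked because using them would drag the output of a small input onto $2^{i+1}$ and force the re-normalisation machinery (the analogue of your ``fix the next power of two'') to undo earlier work; the achievable value in the worst case therefore lies just above $2^{i+1}$, not at an arbitrary unused small value, so a count of free values below the threshold cannot substitute for this argument. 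Likewise the ordering matters: the paper proves separately (its Theorem \ref{two_twenty}) that each later power-of-two fixing step leaves the already-adjusted $S(2^k+1)$ untouched, a compatibility check your ``augmented to preserve the earlier flats'' sentence assumes rather than provides. Finally, the APN sharpening of $S(5)$ to $\{6,10\}$ does not ``follow'' from the APN constraint deleting targets: the paper needs several further lemmas with explicit transformation chains and DDT-based exclusions (e.g.\ that $S(6)$ and $S(7)$ cannot both lie in $\{10,11\}$, and a sequence of input/output CXORs and swaps to escape the value $11$ without disturbing $0,1,2,3,4$), and nothing of that kind is present in your sketch.
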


The proof, which also describes the procedure to construct the equivalent S-box, is included in Appendix \ref{POWERS_Appendix}.

\section{Experiments with simulated annealing}
\subsection{A description of simulated annealing}
Simulated annealing (SA) is a local-search based metaheuristic, inspired by a technique used in metallurgy to eliminate defects in the crystalline structures in samples of metal. A ``search space'', in this case the set of bijections over $GF(2^{n})$, is defined, in which we search for an entity possessing good properties according to our criteria. Every entity within this space is a ``candidate solution''.

The space is defined not merely by specifying the set of objects within it, but also by the ``move function''. The move function defines a transformation mapping one element in the search space to another. In our experiments, this function chose two of the S-box output values at random and swapped them round. If we were evolving bitstrings, the move function might flip the value of one of the bits, or change a 0 to a 1 and a 1 to 0 simultaneously. A ``move'', in this context, is the replacement of the current solution candidate with one obtained by applying the move function to it once, and the $x$-neighbourhood, or $x$-move-neighbourhood, of a candidate solution $C$ is defined as the set of all candidate solutions that can be obtained by making at most $x$ moves.

In simulated annealing, some initial candidate solution, $S_{0}$, usually chosen at random, is input to the SA algorithm, along with the following parameters:

\begin{itemize}
\item A \textit{cost function} $C$. The cost function takes a candidate solution as input, and outputs a scalar value, the ``cost''. The better the properties of the candidate solution in terms of what we wish to achieve, the lower the cost should be.

The cost function should also define a ``smooth search landscape'', in that there should exist some low value upper-bounding the extent to which the cost can change when one move is made.
\item The initial value $T_{0}$ for the ``temperature''. The higher the temperature, the more likely the search algorithm is to accept a move which results in a candidate solution with higher cost than the current candidate (that is, to store said candidate solution as the ``current candidate''). The temperature drops over time, causing the algorithm to accept fewer non-improving moves and hence to shift away from exploration and towards optimisation. Towards the end of the search, it is extremely rare for the algorithm to accept a non-improving move, and its behaviour is very close to that of a hill-climbing algorithm.
\item In choosing the value of $T_{0}$, various sources state that it should be chosen so that a particular proportion of moves are accepted at temperature $T_{0}$. There is very little information or advice available as to what this proportion should be. In \cite{Kirkpatrick1984} it is stated that any temperature leading to an initial acceptance rate of 80\% or more will do; however our initial experiments indicated that this was far too high and we eventually settled on an initial acceptance rate of 0.5 instead of 0.8.

Having chosen the initial acceptance rate, the experimenter executes the annealing algorithm with various $T_0$ until a temperature is found that achieves a fraction close enough to this. We started with the temperature at 0.1, and repeatedly ran the algorithm, doubled the temperature, and re-ran the algorithm until an acceptance rate at least as high as that specified was obtained. Where $T_a$ was the temperature at which this had been achieved, and $T_b = T_a/2$, we then used a binary-search-like algorithm to obtain a temperature between $T_a$ and $T_b$ that would result in an acceptance rate $\approx 50\%$.
\item A value $\alpha$; the ``cooling factor'', determining how far the temperature decreases at each iteration of the algorithm.
\item An integer value: $MAX\_INNER\_LOOPS$, determing the number of moves that the local search algorithm can make at each temperature.
\item The stopping criterion must also be specified. We used a $MAX\_OUTER\_LOOPS$ value, indicating how many times the algorithm was to be allowed to reduce the temperature and continue searching before it stopped.
\item We also specified a $MAX\_FROZEN\_OUTER\_LOOPS$ parameter. If the algorithm had, at any stage, executed this many outer loops without accepting a single move, it would be considered extremely unlikely to do anything other than remain completely stationary from then on, and would be instructed to terminate early.
\end{itemize}

In pseudocode:

\begin{algorithm}
\caption{Pseudocode for simulated annealing algorithm}
\begin{algorithmic}
\State $S \gets S_{0}$
\State $bestsol \gets S_{0}$
\State $T \gets T_0$
\State $ZERO\_ACCEPT\_LOOPS \gets 0$
\For{$x \gets 0, MAX\_OUTER\_LOOPS-1$}
	\State $ACCEPTS\_IN\_THIS\_LOOP \gets false$
	\For{$y \gets 0, MAX\_INNER\_LOOPS-1$}
		\State Choose some $S_{n}$ in the 1-move neighbourhood of $S$.
		\State $cost\_diff \gets C(S_{n}) - C(S)$
		\If{$cost\_diff < 0$}
			\State $S \gets S_{n}$
			\State $ACCEPTS\_IN\_THIS\_LOOP \gets true$
			\If{$C(S_{n}) < C(bestsol)$}
				\State $bestsol \gets S_{n}$
			\EndIf
		\Else
			\State $u \gets Rnd(0, 1)$
			\If{$u < exp(-cost\_diff/T)$}
				\State $S \gets S_{n}$
				\State $ACCEPTS\_IN\_THIS\_LOOP \gets true$
			\EndIf
		\EndIf
	\EndFor

	\If{$ACCEPTS\_IN\_THIS\_LOOP = false$}
		\State $ZERO\_ACCEPT\_LOOPS \gets ZERO\_ACCEPT\_LOOPS + 1$
		\If{$ZERO\_ACCEPT\_LOOPS = MAX\_FROZEN\_OUTER\_LOOPS$}
			\State \Comment Algorithm terminates early.
			\State \textbf{return} $bestsol$
		\EndIf
	\EndIf

	\State $T \gets T \times {\alpha}$
\EndFor

\State \textbf{return} $bestsol$
\end{algorithmic}
\end{algorithm}

\subsection{Our experiments}
We adopted the approach of Clark et al. of annealing with one cost function, and then hill-climbing with another. We also based our annealing cost functions on theirs \cite{Clark_Jacob_Stepney2004}, in that for whichever table $T$ was relevant to the criterion of primary interest the cost function took the form:

\begin{equation*}
\sum_{i=0}^{2^{n}-1} \sum_{j=0}^{2^{m}-1} ||T_{i, j}| - X|^{r}
\end{equation*}

This family of cost functions aimed to ``flatten'' the contents of the table $T$ as much as possible, achieving a relatively uniform table in which few entries deviated significantly from the value $X$. For a difference distribution table in particular, $X$ equal to 0 or 1 would be an intuitive choice, since for the best possible APN S-boxes all values were 0 or 2 with mean 1, and since low values in this table were desirable. In focusing on all the values in the table, instead of the single most extreme value, the search was better equipped to optimise table values that were not the extremal values at that particular point in time.

As stated earlier, we focused on differential uniformity and nonlinearity, meaning that $T$ would be either the DDT or the LAT. We experimented with various values of $X$ and $r$ for both the LAT and DDT, and tried multiplying and adding the cost function output values for different tables to achieve a multiobjective optimisation.

Our hill-climbing cost functions were not based solely on whichever of the differential uniformity and nonlinearity was being targeted. Since each of these was defined in terms of some extremal value in its corresponding table, we adapted the cost function to subtract $k$/(the number of times this value appeared) - where $k$ was the largest value for that table that could be guaranteed to divide all entries in it. ($k$ was equal to 2 in both cases). This gave us as cost functions:
\begin{itemize}
\item $DU - 2/DF$ and
\item $max_{i, j}|LAT_{ij}| - 2/NF$
\end{itemize}

By minimising the respective frequencies, we hoped to minimise the number of ways in which the cryptanalyst could exploit the extreme table values - in particular, we hoped to improve the S-box's resistance against linear cryptanalysis with multiple approximations \cite{Biryukov_DeCannière_MQuisquater2004, Cho_Hermelin_Nyberg2009, ChoPRESENTv2009}. We also hoped that this would guide the search towards lower values for the $DU$ and $max_{i, j}|LAT_{ij}|$. Part of this was based on the fact that using only $DU$ or $max_{i, j}|LAT_{ij}|$ to define the cost function would have made it harder for the hill-climbing stage to find improved candidate solutions if there were none with lower cost within the 1-move neighbourhood of the current candidate - in fact, whether it could do so at all would be dependent on the precise hill-climbing method implemented. Another motivation was the fact that the 1-move neighbourhoods of the known APN S-boxes contained differentially-4-uniform S-boxes with extremely low differential frequencies.

\subsubsection{Refining the annealing cost functions for differential uniformity}
During experiments with the following cost functions:

\begin{equation*}
\sum_{i=0}^{2^{n}-1} \sum_{j=0}^{2^{m}-1} ||DDT_{i, j}| - X|^{2}
\end{equation*}

we observed that where the random number generator (the boost::mt19937 Mersenne Twister from the C++ Boost libraries \cite{Boost}) had been seeded with the same seed, experiments for different values of $X$ were producing the same results.

To work out why this was, we first note that there are no negative values in the DDT, so the above equation is equivalent to $\sum_{i=0}^{2^{n}-1} \sum_{j=0}^{2^{m}-1} |DDT_{i, j} - X|^{2}$. Since, for any $d$, $d^{2} = |d|^{2}$, we have:

\begin{IEEEeqnarray*}{l}
\sum_{i=0}^{2^{n}-1} \sum_{j=0}^{2^{m}-1} ||DDT_{i, j}| - X|^{2} \\
= \sum_{i=0}^{2^{n}-1} \sum_{j=0}^{2^{m}-1} |DDT_{i, j} - X|^{2} \\
= \sum_{i=0}^{2^{n}-1} \sum_{j=0}^{2^{m}-1} (DDT_{i, j} - X)^{2} \\
= \sum_{i=0}^{2^{n}-1} \sum_{j=0}^{2^{m}-1} (DDT_{i, j}^{2} - 2{\cdot}X{\cdot}DDT_{i, j} + X^{2}) \\
= \sum_{i=0}^{2^{n}-1} \sum_{j=0}^{2^{m}-1} DDT_{i, j}^{2} - 2X \sum_{i=0}^{2^{n}-1} \sum_{j=0}^{2^{m}-1} DDT_{i, j} + \sum_{i=0}^{2^{n}-1} \sum_{j=0}^{2^{m}-1} X^{2}
\end{IEEEeqnarray*}

Note that $\sum_{i=0}^{2^{n}-1} \sum_{j=0}^{2^{m}-1} X^{2}$ is a constant value, independent of any values present in the DDT. Moreover, $\sum_{i=0}^{2^{n}-1} \sum_{j=0}^{2^{m}-1} DDT_{i, j}$ was shown in Definition \ref{DDT_defn} to be equal to $2^{m+n}$, and hence $2X \sum \sum DDT_{i, j}$ is equal to $2^{m+n+1}X$; another constant.

We see that the only part of the above cost function which can change at all when a move is made is $\sum_{i=0}^{2^{n}-1} \sum_{j=0}^{2^{m}-1} DDT_{i, j}^{2}$, and this is not dependent on the value of $X$.

Early experiments to compare the effects of the various DDT cost functions used:

\begin{itemize}
\item $n=6$,
\item Cooling factor 0.99 (we would later reduce this to 0.97),
\item $MAX\_INNER\_LOOPS=10000$,
\item $MAX\_OUTER\_LOOPS=500$,
\item $MAX\_FROZEN\_OUTER\_LOOPS = 200$.
\end{itemize}

These parameters had already been observed to find S-boxes with optimal differential uniformity, and optimal nonlinearity in some cases, for $n=5$. We had not, however, managed to find S-boxes with the optimal properties for $n=6$, and we felt that more information might be obtained from the differential frequencies of those S-boxes which achieved near-optimal differential uniformity of 4. In all cases, the hill-climb cost function was $DU - 2/DF$.
\\
\begin{table}[h]
\centering
\begin{tabular}{|l|l|l|l|l|l|}
\hline
	& \multicolumn{5}{|c|}{$X$} \\
\hline
	& -2		& -1		& 0		& 1		& 2 \\
\hline
$r=2$	& (59, 224.44)	& (59, 224.44)	& (59, 224.44)	& (59, 224.44)	& (59, 224.44) \\
\hline
$r=3$	& (65, 224.68)	& (71, 223.1)	& (58, 225.72)	& (59, 222.71)	& (71, 219,97) \\
\hline
$r=4$	& (71, 223.69)	& (68, 223.5)	& (67, 223.75)	& (76, 229.3)	& (75, 232.54) \\
\hline
\multicolumn{2}{|c|}{$DU - 2/DF$ to anneal} & \multicolumn{4}{|c|}{(0, N/A)} \\
\hline
\multicolumn{2}{|c|}{$r=2,\ X=0$ optimised} & \multicolumn{4}{|c|}{(78, 217.12)} \\
\hline
\end{tabular}
\caption{(Percentage of $DU\ 4$, average $DF$ for $DU\ 4$) for $n=6$. No boxes with $DU\ 2$ were found.}
\label{tab:du4n6_table}
\end{table}

The best $DU$ we found in any of our experiments for $n=6$ was 4, even though S-boxes of this size with $DU\ 2$ are known to exist \cite{Dillon2009_2}. Exponent 2 resulted in fewer $DU\ 4$ S-boxes being found than the other exponents; however the average $DF$ of these was not significantly different. However, using exponent 2 and $X=0$ allowed us to code a highly optimised version of the ``lookahead'' function that evaluated the change in cost when a move was being considered (the most generic approach to the lookahead would be to make the move, update relevant data, call the cost function, then undo the preceding) - and the result of this was that the search concluded in just over one-twelfth the amount of time as the fastest of the other cost functions. Increasing $MAX\_INNER\_LOOPS$ proportionately, we obtained 78\% $DU\ 4$ with average $DF\ 217.12$. We therefore accepted $r=2,\ X=0$ as the best cost function to use in large scale attempts to obtain low differential uniformity through simulated annealing.

\subsubsection{Relating nonlinearity and differential cost functions.}
The formula $\sum_{i=0}^{2^{n}-1} \sum_{j=0}^{2^{m}-1} ||T_{i, j}| - X|^{r}$ bears a strong resemblance to the formula for the sample variance of the entries of $T$. Indeed, for the DDT, let $r=2$, and let $X=1$, since this is the sample mean of the entries in the DDT of a bijective S-box, then apart from the division by $2^{n+m}$, the formulae are identical. Prior to our discovering that the value of $X$ did not affect the behaviour of the search algorithm for the DDT with exponent 2, this led to our experimenting with the variance of the DDT as a cost function. During these experiments, the variance of the absolute LAT values, as well as the variance of the squared LAT values (due to Theorem \ref{hadamard_ddt_lat_theorem}) was also recorded, and we noticed that in every case the latter was equal to $2^{2n-4}{\times}$ the DDT variance. However, the ratio between the two variances was not a power of 2, or indeed an integer, for non-bijective S-boxes. Investigating further, the same relationship was observed to hold between the sum of squares of the DDT and the sum of fourth powers of the LAT for a bijective S-box. Again, however, this was not the case for a non-bijective S-box.

We therefore make the following conjecture: For bijective S-boxes, the value of DDT cost function $r=2, X=0$ is always a fixed multiple of the value of LAT cost function $r=4, X=0$ (the precise multiple being determined by the value of $n$.) Although the temperatures required to achieve the desired initial acceptance rate may differ for these two cost functions, neither should be any more or less effective than the other in simulated annealing - they are in practice equivalent.

\begin{table}[h]
\centering
\begin{tabular}{|l|l|l|l|l|l|l|l|}
\hline
    & \multicolumn{7}{|c|}{$X$} \\
\hline
    & -6		& -4		& -2		& 0		& 2		& 4		& 6\\
\hline
r=2 & (55, 59.13)	& (72, 61.46)	& (67, 59.78)	& (N/A, N/A)	& (73, 61.53)	& (70, 58.69)	& (70, 58.31)\\
\hline
r=3 & (99, 48.46)	& (100, 41.94)	& (100, 41.22)	& (100, 37.18)	& (95, 50.4)	& (56, 62.98)	& (59, 50.61)\\
\hline
r=4 & (100, 40.83)	& (100, 41.04)	& (100, 40.24)	& (100, 34.3)	& (95, 49.4)	& (58, 64.98)	& (59, 61.3)\\
\hline
r=5 & (100, 41.35)	& (100, 41.78)	& (100, 40.8)	& (100, 32.7)	& (96, 49.7)	& (73, 61.11)	& (62, 63.42)\\
\hline
r=6 & (100, 42.18)	& (100, 43.28)	& (100, 42.65)	& (100, 32.45)	& (98, 52.02)	& (71, 59.92)	& (64, 64.88)\\
\hline
\multicolumn{4}{|c|}{DDT $(r=2,\ X=0)$ with speed/power tradeoff} & \multicolumn{4}{|c|}{(100, 30.39)} \\
\hline
\multicolumn{4}{|c|}{$max_{i, j}|LAT_{ij}| - 2/NF$ used to anneal} & \multicolumn{4}{|c|}{(100, 59.53)} \\
\hline
\end{tabular}
\caption{(Percentage of $NL\ 22$, average $NF$ for $NL\ 22$) for $n=6$. No $NL\ 24$ were found.}
\label{nl22nf_table}
\end{table}

In the experiments shown for the $6 \times 6$ problem size in Table \ref{nl22nf_table}, $(r=4, X=0)$ is not the best-performing cost function. Nevertheless, using the DDT cost function with $(r=2, X=0)$, the fastest of the other cost functions was 554 times slower in completing a search than it was. Increasing $MAX\_INNER\_LOOPS$ accordingly, as may be seen in the second-last row of the table, results in the best average $NF$ for $NL\ 22$. More generally, using the DDT cost function for the annealing stage offers the following advantages:

\begin{enumerate}
\item It allows us to use a lower exponent than we would otherwise, reducing the extent to which higher exponents would slow down the exponentiation involved.
\item As stated before, the optimised lookahead algorithm for DDT exponent 2 and $X=0$ can be used.
\item Updating the difference distribution table for a candidate solution when a move is made requires $O(2^{n})$ time. Updating the linear approximation table requires $O(2^{n+m})$ time. Similarly, the initial calculation of these tables also differs in complexity by a factor of $2^{m}$.
\end{enumerate}

We therefore accepted this cost function as the best candidate for larger-scale searches both for S-boxes with high nonlinearity and with low differential uniformity. Two sets of experiments, each consisting of 100 runs, were carried out. These used $\alpha = 0.97$ and, respectively, 3,000,000 and 30,000,000 inner loops per value of temperature.

\begin{table}[h]
\centering
\begin{tabular}{|l|l|l|l|}
\hline
max. inner loops	& \% DU 2 after		& \% NL 12 after 	& \% (DU 2, NL 12)\\
			& DU/DF hill-climb	& NL/NF hill-climb 	& after dual hill-climb\\
\hline
3,000,000		& 11			& 11 			& 6\\
\hline
30,000,000		& 45			& 34 			& 35\\
\hline
\end{tabular}
\caption{Experiments for $n=5$. For this size, NL cannot exceed 12 \cite{Chabaud_Vaudenay1994}.}
\label{tab:nfivethreetentable}
\end{table}

\begin{table}[h]
\centering
\begin{tabular}{|l|l|l|l|}
\hline
Inner loops	& (\% DU 4, avg. DF	& (\% NL 22, avg. NF		& \% (DU 4, NL 22)\\
		& for DU 4) after	& for NL 22) after 		& after dual hill-climb\\
		& DU/DF hill-climb	& NL/NF hill-climb		&	\\
\hline
3,000,000	& (80, 204.25)		& (100, 29.83)			& 35 (avg. DF 204, \\
		&			&				& avg. NF 36.09)\\
\hline
30,000,000	& (86, 197.65)		& (100, 28.06)			& 36 (avg. DF 199.25, \\
		&			&				& avg. NF 31.78)\\
\hline
\end{tabular}
\caption{Experiments for $n=6$. For this size, (DU 2, NL 24) and (DU 4, NL 24) S-boxes are known to exist but were not found. It is not known if NL 26 boxes exist.}
\label{tab:nsixthreetentable}
\end{table}

For the $n=5$ problem size, the APN S-boxes found fell into two categories. The boxes with nonlinearity 12 have properties consistent with the Gold exponent S-boxes \cite{Cheon_Lee2004} and their inverses, whereas the APNs with nonlinearity 10 appear to be ``inverse-based'' S-boxes \cite{Cheon_Lee2004}. All S-boxes with nonlinearity 12 were APN, this being due to the properties of almost bent S-boxes \cite{Chabaud_Vaudenay1994}. This means that we have, for this problem size, obtained results matching the theoretical best-possible (and the best S-boxes obtained through mathematical construction), as well as improving on previous applications of simulated annealing to this problem (which only achieved nonlinearity 10 - see Table \ref{MA_NL_table}.)

The best S-box we achieved for $n=6$ in terms of differential uniformity had $DU\ 4$, $DF\ 159$, $NL\ 22$, and $NF\ 24$, occurring as part of the 30,000,000 inner loop experiment. Unfortunately, since a $6 \times 6$ APN S-box has been constructed \cite{Dillon2009_2}, this means that we have not managed to match the best existing constructions for this size.

Although $6 \times 6$ S-boxes with nonlinearity 24 have been constructed mathematically \cite{Nyberg1993} \cite{Dillon2009_2}, we were unable to obtain any such for any choice of cost function. The best we were able to achieve had nonlinearity 22. The lowest corresponding $NF$ we found was 22 (unfortunately in conjunction with $DU\ 6$).

For larger problem sizes, we were unable to compete with the best known existing results. We did not achieve nonlinearity higher than 48 for $n=7$, or higher than 104 for $n=8$ (the best-known values for these sizes are 56 and 112). The best $DU$ possible for $n=7$ is 2, we achieved only $DU\ 6$ with best $DF$ 11. Similarly, the best known $DU$ for $n=8$ is 4; the best we achieved was 6 with $DF$ 185.

\begin{table}[h]
\centering
\begin{tabular}{|l|l|l|l|}
\hline
	& Highest known NL	& Previous best							& Highest evolved NL \\
	&			& evolved NL \cite{Clark_Jacob_Stepney2004}			& (this paper) \\
\hline
$n=5$	& 12			& 10								& 12 \\
\hline
$n=6$	& 24			& 22								& 22 \\
\hline
$n=7$	& 56			& 48								& 48 \\
\hline
$n=8$	& 112			& 104								& 104 \\
\hline
\end{tabular}
\caption{Comparison of best-known and evolved nonlinearities for $n=$ 5, 6, 7 and 8.}
\label{MA_NL_table}
\end{table}

\section{Experiments with memetic algorithms}

\subsection{A brief description of memetic algorithms}
Memetic algorithms \cite{Moscato1989} combine genetic algorithms with local optimisation, and have proven to be extremely effective search techniques.

There is some variation in their working - in particular, not every implementation for every problem domain will go through the four main stages in the same order that we do, and some will use more sophisticated machine learning techniques in the local optimisation stage instead of the straightforward hill-climbing we do. With this noted, we continue with our description.

The memetic algorithm maintains a ``population'' of candidate solutions, in the form of a multiset with size determined by the parameter $popsize$. Over several ``generations'' - analogous to the outer loops of simulated annealing - new populations $P_{i}$ will be derived from their immediate predecessors. Members of the population $P_{0}$ at the start of the algorithm are generated randomly and hill-climbed to local optima. In our impentation, an ``interim'' multiset contains the results of applying the various stages of the algorithm to $P_{i}$ - for the purposes of this paper we denote this set $PCP$ (Post-Crossover Population). $PCP$ is cleared at the start of each generation, and repopulated by the ``crossover'' operation in said generation. The members of $PCP$ are then altered during the ``mutation'' phase of the iteration, hill-climbed, and, during the ``selection'' phase, used to generate $P_{i+1}$.

The number of generations is one of the parameters - $NO\_OF\_GENERATIONS$, analogous to the $MAX\_OUTER\_LOOPS$ parameter of simulated annealing.

The crossover function, $cross(p_1, p_2)$, takes two ``parent'' solution candidates $p_1$ and $p_2$ as input, and outputs a ``child'' candidate solution $o_1$ which in some way combines features from both $p_1$ and $p_2$. Note that $o_1 = cross(p_1, p_2)$ is not necessarily equal to $o_2 = cross(p_2, p_1)$.

Several different crossover algorithms have been designed for the evolution of bijective functions (or, indeed, any entity representable as a permutation on a set of integers), and it is considered extremely important to choose a good crossover algorithm for the problem domain. We experimented with two different crossover methods; PMX (``Partially Mapped CROSSover'') and cycle crossover, both of which are described in detail in \cite{Mathias_McDaniel_Starkweather_cWhitley_dWhitley1991}. These two crossover methods were chosen because of their focus on the values of $x$ mapping to each output, instead of the order in which these outputs appeared.

\begin{itemize}
\item \textit{Cycle crossover} works as follows
\begin{align*}
\texttt{PARENT 1:}				& \mathtt{a\ b\ c\ \mathbf{d}\ e\ f\ g\ h\ i\ j}	\\
\texttt{PARENT 2:}				& \mathtt{c\ f\ a\ \mathbf{j}\ h\ d\ i\ g\ b\ e}
\end{align*}
\begin{equation*}
\texttt{(Randomly chosen cycle start point is marked in bold.)}
\end{equation*}

The element of Parent 1 at the cycle start point is copied into the child in the same position:
\begin{align*}
\texttt{CHILD:}					& \mathtt{?\ ?\ ?\ d\ ?\ ?\ ?\ ?\ ?\ ?}
\end{align*}

The element in the same position as Parent 2 is the next to be copied into the child. However, it is copied in into the same position in which it occurs in Parent 1:

\begin{align*}
\texttt{CHILD:}					& \mathtt{?\ ?\ ?\ d\ ?\ ?\ ?\ ?\ ?\ j}
\end{align*}

This process continues until the process returns us to the original cycle start point - in other words, when a ``loop'' or ``cycle'' has been created. In this case:

$(d, j) \to (j, e) \to (e, h) \to (h, g) \to (g, i) \to (i, b) \to (b, f) \to (f, d) \to (d, j)$ again.

\begin{align*}
\texttt{CHILD:}					& \mathtt{?\ b\ ?\ d\ e\ f\ g\ h\ i\ j}
\end{align*}

Any still-vacant positions in the child are then filled by copying in the corresponding values from Parent 2:

\begin{align*}
\texttt{CHILD:}					& \mathtt{c\ b\ a\ d\ e\ f\ g\ h\ i\ j}
\end{align*}

\item \textit{PMX crossover} begins by choosing two ``crossing points'' at random, as illustrated by the vertical lines in the below. The elements of Parent 1 between these points are copied into the child:
\begin{align*}
\texttt{PARENT 1:}				& \mathtt{a\ b\ |\ c\ d\ e\ f\ |\ g\ h\ i\ j}	\\
\texttt{PARENT 2:}				& \mathtt{c\ f\ |\ a\ j\ h\ d\ |\ i\ g\ b\ e}	\\
\texttt{CHILD:}					& \mathtt{?\ ?\ |\ c\ d\ e\ f\ |\ ?\ ?\ ?\ ?}
\end{align*}

Next, any elements of Parent 2 which have not already been copied into the child are copied in:
\begin{align*}
\texttt{PARENT 1:}				& \mathtt{a\ b\ |\ c\ d\ e\ f\ |\ g\ h\ i\ j}	\\
\texttt{PARENT 2:}				& \mathtt{c\ f\ |\ a\ j\ h\ d\ |\ i\ g\ b\ e}	\\
\texttt{CHILD:}					& \mathtt{?\ ?\ |\ c\ d\ e\ f\ |\ i\ g\ b\ ?}
\end{align*}

C was already copied in from Parent 1, so the element in position [0] cannot be equal to C. We see that the Parent 2 element in the same position as C in Parent 1 is A, and copy that into position [0]. Similarly, the final element cannot equal E, and so we put H in that position, the Parent 2 element in the same position as the E of Parent 1.

\begin{align*}
\texttt{PARENT 1:}				& \mathtt{a\ b\ |\ c\ d\ e\ f\ |\ g\ h\ i\ j}	\\
\texttt{PARENT 2:}				& \mathtt{c\ f\ |\ a\ j\ h\ d\ |\ i\ g\ b\ e}	\\
\texttt{CHILD:}					& \mathtt{a\ ?\ |\ c\ d\ e\ f\ |\ i\ g\ b\ h}
\end{align*}

The final unallocated position is trickier. We cannot copy F in, as it is already present in the child. We look for F in Parent 1, and find D in the corresponding position of Parent 2. Unfortunately, D has also been copied into the child by this point! We go on to look for D in Parent 1, and find that J is present in the same position of Parent 2 and has not been copied into the child, allowing us to complete the process:

\begin{align*}
\texttt{PARENT 1:}				& \mathtt{a\ b\ |\ c\ d\ e\ f\ |\ g\ h\ i\ j}	\\
\texttt{PARENT 2:}				& \mathtt{c\ f\ |\ a\ j\ h\ d\ |\ i\ g\ b\ e}	\\
\texttt{CHILD:}					& \mathtt{a\ j\ |\ c\ d\ e\ f\ |\ i\ g\ b\ h}
\end{align*}

\end{itemize}

Whichever crossover method we choose, the following two parameters are involved:
\begin{itemize}
\item $no\_of\_children$: When $p_1$ and $p_2$ are selected from $P_i$, this determines whether, if the crossover function is applied, it will merely be used to add $o_1 = cross(p_1, p_2)$ to $PCP$, or whether $o_2 = cross(p_2, p_1)$ will also be calculated and added.

If the crossover function is not applied, this determines whether $p_1$ alone, or both $p_1$ and $p_2$, are added to $PCP$.
\item $crossover\_probability$: When $p_1$ and $p_2$ are selected from $P_i$ during the crossover phase, this determines the probability of the crossover function being applied - i.e whether $o_1$ (and $o_2$, depending on the previous parameter), or $p_1$ and perhaps $p_2$, are added to $PCP$ in this generation.
\end{itemize}

We also need a ``mutation function'', $mutate(c)$, taking a candidate solution from $PCP$ as input, making a small random change (the ``mutation'') of some form to it, and returning the result (which replaces the original in $PCP$). In this instance, the mutation function makes one move as defined by the same local search methodology used for the simulated annealing and hill-climbs - that is, it swaps two truth table elements. Two parameters are relevant to this:

\begin{itemize}
\item $max\_possible\_mutations$: During the ``mutation phase'' of the algorithm, this defines the maximum number of mutations that may be applied to any single candidate.
\item $mutation\_probability$: Each potential mutation (up to the number defined by the candidate above) occurs randomly with this probability, independent of the other potential mutations.
\end{itemize}

Mutation adds an aspect of exploration into the memetic search, enabling it to escape from local optima.

The third phase, hill-climbing, is relatively simple. Note, however, that this phase and the one after it utilise a ``fitness function''. This is similar to the cost function we used in the simulated annealing experiments, except that high values are returned for high quality solutions instead of low values. In this phase, the members of $PCP$ are all hill-climbed to local optima with respect to the fitness function.

Finally, we have the ``selection'' phase, which is itself divided into various subphases. The implementer may decide to sort the elements of $PCP$ by their fitness values for the sake of efficiency at the start of the selection phase, if so this sorting is the first subphase.

After the sorting is carried out (or not), the next subphase is the ``elitism'' subphase. If the parameter $elitism\_level$ has a nonzero value, the $elitism\_level$ members of $P_{i}$ with the highest fitness values are copied directly into $P_{i+1}$. If this results in a full population (which is not advisable!) the selection phase ends. If not, we need to use a selection method to keep choosing elements from $PCP$ to add to $P_{i+1}$.

Let $|PCP|$ be denoted $M$. The two selection methods we experimented with here were:
\begin{enumerate}
\item \textbf{Roulette-wheel selection:} This method requires the fitness function to output a value $\geq 0$ for all possible inputs. Let $\sum_{i=0}^{M}\ fitness(c_i)$ be denoted $Z$. Let the number of places remaining in the population be denoted $r$. Then we follow the procedure in the pseudocode for Algorithm \ref{alg:roulette}:

\begin{algorithm}[H]
\caption{Pseudocode for roulette-wheel selection}
\label{alg:roulette}
\begin{algorithmic}
\For{$i \gets 1, r$}
	\State One member of $PCP$ is selected at random with probability $fitness(c_i)/Z$ \\
	\Comment{All $r$ selections are independent and at random.} \\
	\Comment{A candidate may be selected more than once.}
	\State A copy of this member is added to $P_{i+1}$.
	\State The original member is placed back in $PCP$.
\EndFor
\end{algorithmic}
\end{algorithm}
\item \textbf{Rank selection:} For this selection method, the members of $PCP$ must be sorted by fitness. Indexing from 1 upward, $PCP[1]$ is the candidate with the lowest fitness; $PCP[M]$ the candidate with the highest.

As before, in each of $r$ independent trials, a candidate is selected from $PCP$. A copy of this candidate is placed in $P_{i+1}$, and the candidate is replaced in $PCP$. The difference between this and roulette wheel selection is the probability with which each candidate is chosen:

\begin{equation*}
P(PCP[i]) = \frac{2i}{M(M+1)}
\end{equation*}
\end{enumerate}

\begin{algorithm}
\caption{Pseudocode for memetic algorithm}
\begin{algorithmic}
\State \Comment{Stage 1: Crossover.}
\State Reset $PCP$ to an empty multiset.
\While{$size(PCP) < POST\_CROSSOVER\_POPULATION\_SIZE$}
	\State Choose $p_1$ and $p_2$ from $P_{i}$ uniformly at random.
	\If{$Rnd(0, 1) < crossover\_probability$}
		\State $o_1 \gets cross(p_1, p_2)$
		\State Add $o_1$ to $PCP$
		\If{$no\_of\_children = 2$}
			\State $o_2 \gets cross(p_2, p_1)$
			\State Add $o_2$ to $PCP$
		\EndIf
	\Else
		\State Add $p_1$ to $PCP$
		\If{$no\_of\_children = 2$}
			\State Add $p_2$ to $PCP$
		\EndIf
	\EndIf
\EndWhile

\Comment{Stage 2: Mutation}
\For{$i \gets 0, POST\_CROSSOVER\_POPULATION\_SIZE-1$}
	\For{$j \gets 0, max\_possible\_mutations-1$}
		\If{$Rnd(0, 1) < mutation\_probability$}
			\State Apply one move (as defined for local search) to $PCP[i]$
		\EndIf
	\EndFor
\EndFor

\Comment{Stage 3: Hill-climbing}
\For{$i \gets 0, POST\_CROSSOVER\_POPULATION\_SIZE-1$}
	\State Hill-climb $PCP[i]$ to a local optimum.
\EndFor

\Comment{Stage 4: Selection.}
\State Reset the population to the empty multiset.
\If{$elitism\_level$ is specified}
	\State copy $elitism\_level$ members of $P_{i}$ into $P_{i+1}$
\EndIf

\While{$size(population) < popsize$}
	\State use a selection function to choose the next member of $PCP$ to add to $P_{i+1}$.
\EndWhile
\end{algorithmic}
\end{algorithm}

\subsection{Our experiments}
Due to its success and efficiency in the simulated annealing experiments, we focused exclusively on the sum of squares in the DDT as the basis for the fitness function. In all the below experiments, we carried out 100 runs of the memetic algorithm.

The first set of experiments varied the crossover method and selection method, as well as the $max\_possible\_mutations$ and $mutation\_probability$ criteria. The population size was set to 400, crossover probability to 1, $elitism\_level$ to 1, and $no\_of\_children$ to 2:

\begin{table}[h]
\centering
\begin{tabular}{|l|l|l|l|l|l|l|l|l|l|l|}
\hline
	& \multicolumn{10}{|c|}{$max\_possible\_mutations\ (m)\ {\times}\ mutation\_probability$} \\
\hline
	& 0.1     & 0.2    & 0.3    & 0.4    & 0.5    & 0.6    & 0.7     & 0.8    & 0.9    & 1.0 \\
\hline
$m=1$	& (9, 8)  & (6, 4) & (5, 3) & (6, 2) & (5, 5) & (4, 4) & (12, 7) & (7, 2) & (2, 0) & (11, 6) \\
\hline
$m=2$	& (12, 8) & (7, 5) & (5, 3) & (8, 6) & (7, 6) & (5, 4) & (10, 5) & (9, 4) & (6, 5) & (8, 4) \\
\hline
\end{tabular}
\caption{(Percentage of $DU\ 2$, percentage of $(DU\ 2, NL\ 12)$) for $n=5$ with cycle crossover and roulette-wheel selection. $(DU\ 2, NL\ 12)$ is the theoretical optimum with regard to our search criteria.}
\label{cycle_roulette_table}
\end{table}

\begin{table}[h]
\centering
\begin{tabular}{|l|l|l|l|l|l|l|l|l|l|l|}
\hline
	& \multicolumn{10}{|c|}{$max\_possible\_mutations\ (m)\ {\times}\ mutation\_probability$} \\
\hline
	& 0.1    & 0.2    & 0.3    & 0.4    & 0.5    & 0.6    & 0.7    & 0.8    & 0.9    & 1.0 \\
\hline
$m=1$	& (2, 0) & (1, 1) & (1, 1) & (3, 3) & (3, 3) & (3, 0) & (4, 3) & (0, 0) & (3, 2) & (1, 1) \\
\hline
$m=2$	& (2, 1) & (0, 0) & (2, 1) & (2, 2) & (2, 2) & (2, 1) & (2, 1) & (2, 0) & (3, 2) & (6, 1) \\
\hline
\end{tabular}
\caption{(Percentage of $DU\ 2$, percentage of $(DU\ 2, NL\ 12)$) for $n=5$ with cycle crossover and rank selection.}
\label{cycle_rank_table}
\end{table}

\begin{table}[h]
\centering
\begin{tabular}{|l|l|l|l|l|l|l|l|l|l|l|}
\hline
	& \multicolumn{10}{|c|}{$max\_possible\_mutations\ (m)\ {\times}\ mutation\_probability$} \\
\hline
	& 0.1     & 0.2    & 0.3     & 0.4     & 0.5     & 0.6     & 0.7    & 0.8    & 0.9     & 1.0 \\
\hline
$m=1$	& (11, 6) & (6, 4) & (9, 7)  & (9, 4)  & (13, 8) & (11, 5) & (8, 4) & (8, 6) & (12, 7) & (6, 3) \\
\hline
$m=2$	& (4, 4)  & (6, 3) & (12, 6) & (11, 7) & (7, 4)  & (6, 4)  & (9, 7) & (9, 2) & (10, 5) & (3, 2) \\
\hline
\end{tabular}
\caption{(Percentage of $DU\ 2$, percentage of $(DU\ 2, NL\ 12)$) for $n=5$ with PMX crossover and roulette wheel selection.}
\label{PMX_roulette_table}
\end{table}

\begin{table}[h]
\centering
\begin{tabular}{|l|l|l|l|l|l|l|l|l|l|l|}
\hline
	& \multicolumn{10}{|c|}{$max\_possible\_mutations\ (m)\ {\times}\ mutation\_probability$} \\
\hline
	& 0.1      & 0.2    & 0.3    & 0.4     & 0.5     & 0.6      & 0.7     & 0.8    & 0.9      & 1.0 \\
\hline
$m=1$	& (14, 8)  & (6, 4) & (8, 8) & (9, 8)  & (9, 4)  & (16, 10) & (14, 5) & (9, 7) & (16, 10) & (12, 5) \\
\hline
$m=2$	& (15, 11) & (8, 5) & (9, 3) & (14, 7) & (11, 5) & (8, 6)   & (12, 9) & (5, 2) & (16, 8)  & (14, 8) \\
\hline
\end{tabular}
\caption{(Percentage of $DU\ 2$, percentage of $(DU\ 2, NL\ 12)$) for $n=5$ with PMX crossover and rank selection.}
\label{PMX_rank_table}
\end{table}

During earlier experiments with genetic algorithms (memetic algorithms without a hill-climbing stage), there had been reason to believe that, depending on the population size, a certain value of $max\_possible\_mutations\ {\times}\ mutation\_probability$ would prove to be optimal. In the above experiments, there is far too much variation among the results to draw any such conclusion. However, it is clear from Tables \ref{cycle_roulette_table}, \ref{cycle_rank_table}, \ref{PMX_roulette_table} and \ref{PMX_rank_table} that the combination of cycle crossover and rank selection performs much more poorly than the other three (crossover, selection) choices. Furthermore, the combination of PMX and rank selection has led to higher percentages (14, 15, 16) of APN S-boxes than any of the other combinations, so we opted to stick with this for the second set of experiments. Choosing $max\_possible\_mutations$ and $mutation\_probability$ was similarly difficult due to the extent to which the results varied - we eventually opted for 1 mutation with probability 0.6.

We compared the results of imposing restrictions based on Theorem \ref{POWERS} with the results of not doing so (as also of allowing the solution candidates to make moves during mutation/hill-climbing that would violate these constraints and then retransform.) However, for every set of parameters for which this was tried, it resulted in worsened performance. We did not therefore make use of Theorem \ref{POWERS} in the experiments which followed.

Using these parameters, we experimented with varying the crossover probability. As part of this, we re-ran the original (1 mutation, probability 0.6, crossover probability 1.0) experiment. As may be seen from Table \ref{crossprob_table}, the performance of the memetic search drops markedly as crossover probability is reduced.

\begin{table}[h]
\centering
\begin{tabular}{|l|l|l|l|l|l|l|l|l|l|}
\hline
0.1	& 0.2	& 0.3	& 0.4	& 0.5	& 0.6	& 0.7	& 0.8	& 0.9	& 1.0 \\
\hline
0	& 0	& 0	& 0	& 0	& 0	& 2	& 3	& 8	& 15 \\
\hline
\end{tabular}
\caption{Percentage of $DU\ 2$ for various crossover probabilities. Again, $n=5$.}
\label{crossprob_table}
\end{table}

\begin{table}[H]
\centering
\begin{tabular}{|l|l|l|l|}
\hline
Population size	& $\%\ DU\ 2$	&	$\%\ (DU\ 2,\ NL\ 12)$	&	Time taken (d:h:m:s) \\
\hline
200		& 1		&	1			&	00:01:46:01 \\
\hline
400		& 15		&	7			&	00:03:44:21 \\
\hline
800		& 18		&	11			&	00:07:27:48 \\
\hline
1600		& 44		&	22			&	00:15:06:30 \\
\hline
3200		& 66		&	37			&	01:06:14:15 \\
\hline
\end{tabular}
\caption{Memetic algorithm results for various population sizes with $n=5$.}
\label{popsize_table}
\end{table}

For this reason, we kept this fixed to 1.0 for the third set of experiments, in which we varied the size of the population. It may be seen from the results of these experiments in Table \ref{popsize_table} that the quality of the solutions increased with the size of the population, although the time required to obtain these solutions also increased. Nevertheless, the experiment with population size 3200 outperformed ($66\%\ DU\ 2$ instead of $45\%$) the simulated annealing experiment with 30,000,000 inner loops in Table \ref{tab:nfivethreetentable}; and both of these experiments required roughly the same amount of time.

\section{Experiments with ant colony optimisation}
\subsection{A brief description of ant colony optimisation.}
The first ant colony optimization method was Ant System, described in \cite{Colorni_Dorigo_Maniezzo1996} as a metaheuristic that might be applied to the Travelling Salesman Problem (TSP). Later refinements produced the more effective Ant Colony System \cite{Dorigo_Gambardella1997} \cite{Luke2009Metaheuristics}, which took a more elitist approach and achieved superior results against the TSP.

Any problem to which ant algorithms can be applied must be possible to represent as a graph. For the S-box experiments, the graph nodes are the values of $x$, and the graph is directional - an edge from node $x$ to node $y$ signifies that $S(x)=y$. Furthermore, each edge carries with it a cost - and unlike the conventional TSP, the edge leading from $y$ to $x$ may not have the same cost as that from $x$ to $y$ (making our problem more akin to the Asymmetric TSP).

The problem should also allow a useful cost function to be devised such that, during the construction of each candidate solution, the cost starts at zero and is increased whenever a new component is added until the final cost is derived. In this case, basing the cost function on the DDT allowed us to do this; we could, for each $S(i)$ that was to be assigned a value, calculate for each $j$ which values in the DDT would be increased, and could thus calculate what the new cost would be if the resultant DDT values were input directly to the cost function. As with the memetic algorithms, we used the sum of squared DDT entries as a cost function; with each new truth table entry assigned a value, we could deduce which DDT entries would increase by 2 and how this would affect the sum of squares.

The value $d_{ij}$ denotes the amount by which the cost is increased if the edge from node $i$ to node $j$ is added, i.e. if $S(i)$ is assigned the value $j$. While for some problems (such as the TSP) $d_{ij}$ is constant, here it is affected by the truth table values that have already been assigned, and so must be recalculated every time we need to add a value for $S(i)$.

The following parameters are involved:

\begin{itemize}
\item The particular type of ant algorithm. In our experiments, we compared Ant System, Dorigo's original Ant Colony System, and the version of ACS defined in Sean Luke's ``Essentials of Metaheuristics''\cite{Luke2009Metaheuristics}. Other algorithms exist; for instance ``AntNet'' \cite{DiCaro_Dorigo1998, DiCaro_Dorigo_Gambardella1998}, a specialised variant designed for network routing problems.
\item $hillclimb\_trails$. This is a boolean value which determines whether or not local optimisation is used during trail-building. Early experiments indicated that setting this to false always led to worse results, so we fixed it at true in the experiments that follow.
\item $next\_index\_method$. After adding an edge from $i$ to $j$, this parameter determines which node the ant should try to add an edge leading from next. We experimented with ``cycle'', in which the next node is node $j$, and ``increment'', in which the next node is node $(i+1)$.
\item $\alpha$ and $\beta$ are floating-point values. The value of $\alpha$ determines the amount of influence pheromone levels have on edge selection, and the value of $\beta$ determines the influence of $d_{ij}$.

We used control values $\alpha=1$, $\beta=2$ in our experiments.
\item $e$ - the elitist pheromone update parameter (Used only in ACS versions of the global update stage.)
The values 0.05, 0.1, 0.2, 0.3 and 0.4 were tried, with 0.1 as control value as per \cite{Bianchi_Dorigo_Gambardella2002} and \cite{Dorigo_Gambardella1997}.
\item $\tau_{0}$ - the initial amount of pheromone on each edge. Following guidance in \cite{Bianchi_Dorigo_Gambardella2002}, and based on calculations of the optimal cost, we set this to $1.0/(2^{n}*((2^{n}-1)*2^{n})/2.0)$.
\item $no\_of\_ants$. The control value for the number of ants was 10 in accordance with the arguments in \cite{Dorigo_Gambardella1997}.
\item $Q$ - a scalar value by which the amount of pheromone deposited in the global update is multiplied. In the paper in which Ant System was originally described \cite{Colorni_Dorigo_Maniezzo1996}, after experiments with $Q=$ 1, 100, and 10000, 100 was accepted as the ``experimentally determined optimal value''. However, in the descriptions of ACS in \cite{Dorigo_Gambardella1997, Dorigo_Stützle2000, Luke2009Metaheuristics}, $Q=1$ was implicitly used, and no other values were mentioned. We experimented with the values 1, 10, 100, and $(2^{n}*(2^{n}-1))/2\ =\ 496$ for $n=5$.
\item $q_{0}$ - for ACS algorithms, this dictates the probability that a given edge selection will use an elitist selection method instead of the exploratory Ant System method. The control value was 0.98, with 0, 0.1, 0.25, 0.5, and 0.75 also being tried. For Ant System, $q_{0}$ is always zero.
\item $\rho$ - the non-elitist pheromone update parameter (also known as the ``evaporation rate''). The control value was 0.1 in accordance with \cite{Bianchi_Dorigo_Gambardella2002} and \cite{Dorigo_Gambardella1997}. 0.05, 0.2, 0.3, 0.4, and 0.5 were also tried.
\end{itemize}

In the below pseudocode, which describes all three of Ant System, Dorigo's ACS, and Luke's ACS, $\tau_{ij}$ denotes the amount of pheromone on the edge corresponding to $S(i)=j$. An ant trail is deemed to be complete when every node has an edge leading from it; i.e. when every $i$ has been assigned an output value $j=S(i)$.

\begin{algorithm}
\caption{Pseudocode for ant algorithms}
\begin{algorithmic}
\State Set the amount of pheromone on each graph edge to $\tau_{0}$.
\State $best\_solution \gets$ some randomly generated solution candidate.

\For{$x \gets 0, no\_of\_iterations-1$}
	\State \Comment{Each ant builds a trail}
	\State Clear all ant trails
	\State (remove all edges, set current nodes of all ants to 0.)
	\While {ant trails incomplete}
		\For{$a \gets 0, no\_of\_ants-1$}
			\State Let $i_a$ denote ant $a$'s current node.
			\State Let the set of unassigned output values at this point be denoted $U$.
			\State \Comment Add an edge from $i_a$ to some node $j_a \in U$
			\State \Comment (based on the cost of the edge and level of
			\State \Comment pheromone on it).
			\State $q \gets rnd(0, 1)$
			\If{$q \leq q_{0}$}
				\State Choose node $j_a$ where $j_a$ is the value of $k$ corresponding
				\State to $max_{k\ \in\ U}(\tau_{i_{a}k}^{\alpha}{\cdot}d_{i_{a}k}^{\beta})$
			\Else
				\State Node $j_a$ is chosen from the set $U$ with probability:
				\State $\frac{(\tau_{i_{a}j_{a}}^{\alpha}{\cdot}d_{i_{a}j_{a}}^{\beta})}{\sum_{k\ \in\ U}(\tau_{i_{a}k}^{\alpha}{\cdot}d_{i_{a}k}^{\beta})}$
			\EndIf

			\If{ant method is Dorigo's original ACS}
				\State \Comment Decrease pheromone levels on chosen edge (local update).
				\State $\tau_{i_{a}j_{a}} \gets (1 - \rho){\cdot}\tau_{i_{a}j_{a}} + {\rho}{\cdot}\tau_{0}$
			\EndIf

			\Comment Update current node:
			\If{$next\_index\_method = CYCLE$}
				\State $i_{a} \gets j_{a}$
			\ElsIf{$next\_index\_method = ITERATE$}
				\State $i_{a} \gets (i_{a}+1)$ modulo $no\_of\_nodes$
			\EndIf
		\EndFor

		\If{$hillclimb\_trails$}
			\State Hill-climb all constructed solutions represented by the ant trails
			\State to local optima.
		\EndIf
	\EndWhile

	\State Let $best\_iteration$ be the ant which constructed the best solution in this iteration.
	\State Let $best\_itera\_sol$ be that solution.

	\If{$cost(best\_itera\_sol) < cost(best\_solution)$}
		\State $best\_solution \gets best\_itera\_sol$
	\EndIf

	\State Update pheromone levels on all edges (global update).
	\State (The method varies depending on the choice of ant algorithm).
\EndFor
\State \textbf{return} $best\_solution$
\end{algorithmic}
\end{algorithm}

\subsection{Our experiments}
As with the memetic algorithms, we used the sum of DDT squares (and the cumulative effect of each added edge on it) as a cost function. The first major set of experiments varied $Q,\ q_0$, and the ant method:

\begin{table}[h]
\centering
\begin{tabular}{|l|l|l|l|l|}
\hline
$Q$				& 1	& 10	& 100	& 496 \\
\hline
$(\%\ DU\ 2)$ (cycle)		& 9	& 5	& 9	& 6 \\
\hline
$(\%\ DU\ 2)$ (increment)	& 11	& 6	& 7	& 8 \\
\hline
\end{tabular}
\caption{$(\%\ DU\ 2)$ for $n=5$ with Ant System}
\label{antsystem_q_table}
\end{table}

\begin{table}[H]
\centering
\begin{tabular}{|l|l|l|l|l|}
\hline
         	& \multicolumn{4}{|c|}{$Q$} \\
\hline
         	& 1	& 10	& 100	& 496 \\
\hline
$q_{0}=0$	& 7	& 5	& 12	& 2 \\
\hline
$q_{0}=0.1$	& 3	& 3	& 6	& 8 \\
\hline
$q_{0}=0.25$	& 9	& 4	& 7	& 3 \\
\hline
$q_{0}=0.5$	& 6	& 7	& 5	& 2 \\
\hline
$q_{0}=0.75$	& 1	& 1	& 4	& 3 \\
\hline
$q_{0}=0.98$	& 1	& 1	& 4	& 3 \\
\hline
\end{tabular}
\caption{$(\%\ DU\ 2)$ for $n=5$ with various values of $Q$ and $q_{0}$ for Dorigo ACS (cycle).}
\label{Dorigo_cycle_q_qzero_table}
\end{table}

\begin{table}[H]
\centering
\begin{tabular}{|l|l|l|l|l|}
\hline
         	& \multicolumn{4}{|c|}{$Q$} \\
\hline
         	& 1	& 10	& 100	& 496 \\
\hline
$q_{0}=0$	& 9	& 1	& 2	& 5 \\
\hline
$q_{0}=0.1$	& 8	& 5	& 3	& 2 \\
\hline
$q_{0}=0.25$	& 10	& 3	& 3	& 1 \\
\hline
$q_{0}=0.5$	& 2	& 2	& 2	& 1 \\
\hline
$q_{0}=0.75$	& 6	& 1	& 6	& 0 \\
\hline
$q_{0}=0.98$	& 1	& 0	& 2	& 0 \\
\hline
\end{tabular}
\caption{$(\%\ DU\ 2)$ for $n=5$ with various values of $Q$ and $q_{0}$ for Dorigo ACS (increment).}
\label{Dorigo_increment_q_qzero_table}
\end{table}

\begin{table}[H]
\centering
\begin{tabular}{|l|l|l|l|l|}
\hline
         	& \multicolumn{4}{|c|}{$Q$} \\
\hline
         	& 1	& 10	& 100	& 496 \\
\hline
$q_{0}=0$	& 7	& 10	& 5	& 6 \\
\hline
$q_{0}=0.1$	& 4	& 7	& 8	& 5 \\
\hline
$q_{0}=0.25$	& 4	& 4	& 7	& 2 \\
\hline
$q_{0}=0.5$	& 5	& 10	& 8	& 7 \\
\hline
$q_{0}=0.75$	& 13	& 5	& 9	& 4 \\
\hline
$q_{0}=0.98$	& 5	& 5	& 3	& 5 \\
\hline
\end{tabular}
\caption{$(\%\ DU\ 2)$ for $n=5$ with various values of $Q$ and $q_{0}$ for Luke ACS (cycle)}
\label{Luke_cycle_q_qzero_table}
\end{table}

\begin{table}[H]
\centering
\begin{tabular}{|l|l|l|l|l|}
\hline
         	& \multicolumn{4}{|c|}{$Q$} \\
\hline
         	& 1	& 10	& 100	& 496 \\
\hline
$q_{0}=0$	& 5	& 11	& 6	& 10 \\
\hline
$q_{0}=0.1$	& 7	& 1	& 3	& 3 \\
\hline
$q_{0}=0.25$	& 7	& 5	& 8	& 12 \\
\hline
$q_{0}=0.5$	& 12	& 3	& 4	& 10 \\
\hline
$q_{0}=0.75$	& 8	& 5	& 6	& 7 \\
\hline
$q_{0}=0.98$	& 7	& 10	& 11	& 4 \\
\hline
\end{tabular}
\caption{$(\%\ DU\ 2)$ for $n=5$ with various values of $Q$ and $q_{0}$ for Luke ACS (increment).}
\label{Luke_increment_q_qzero_table}
\end{table}

In the above tables, we show only the percentage of $DU\ 2$, since for this size $NL\ 12$ occurs only for $DU\ 2$ and the cost function has no effect on whether the $DU\ 2$ S-box is one of the $NL\ 10$ or $NL\ 12$ boxes we have seen so far.

As before, the amount of variation in the results makes them hard to interpret. However, large $Q$ for Dorigo's ACS appears to be detrimental to its performance, and we set $Q$ to 1 for the second set of experiments. It was difficult to draw any similar conclusion for Luke's ACS, but we set $Q$ to 1 for future experiments with this for the sake of comparison. Likewise, too high a value of $q_{0}$ seemed detrimental to the performance of Dorigo's algorithm, so we set this to 0.

In the second set of experiments, we varied the evaporation rate $\rho$ and elitist update parameter $e$:

\begin{table}[H]
\centering
\begin{tabular}{|l|l|l|l|l|l|l|l|}
\hline
$\rho$		& 0.05	& 0.1	& 0.2	& 0.3	& 0.4	& 0.5	& Total \\
\hline
$(\%\ DU\ 2)$	& 8	& 8	& 7	& 4	& 5	& 7	& 39 \\
\hline
\end{tabular}
\caption{$(\%\ DU\ 2)$ for Ant System (increment) with various evaporation rates.}
\label{Ant_evap_Increment_Table}
\end{table}

\begin{table}[H]
\centering
\begin{tabular}{|l|l|l|l|l|l|l|l|}
\hline
$\rho$		& 0.05	& 0.1	& 0.2	& 0.3	& 0.4	& 0.5	& Total \\
\hline
$(\%\ DU\ 2)$	& 11	& 4	& 8	& 11	& 9	& 7	& 50\\
\hline
\end{tabular}
\caption{$(\%\ DU\ 2)$ for Ant System (cycle) with various evaporation rates.}
\label{Ant_evap_Cycle_Table}
\end{table}

\begin{table}[H]
\centering
\begin{tabular}{|l|l|l|l|l|l|l|l|}
\hline
& \multicolumn{7}{|c|}{Evaporation rate ($\rho$)} \\
\hline
		& 0.05	& 0.1	& 0.2	& 0.3	& 0.4	& 0.5	& Total \\
\hline
$e = 0.05$	& 3	& 6	& 8	& 4	& 11	& 8	& 40 \\
\hline
$e = 0.1$	& 5	& 8	& 10	& 8	& 6	& 4	& 41 \\
\hline
$e = 0.2$	& 3	& 6	& 10	& 7	& 6	& 7	& 39 \\
\hline
$e = 0.3$	& 9	& 7	& 11	& 6	& 9	& 6	& 48 \\
\hline
$e = 0.4$	& 7	& 6	& 9	& 4	& 6	& 8	& 40 \\
\hline
$e = 0.5$	& 6	& 4	& 9	& 7	& 8	& 6	& 40 \\
\hline
Total		& 33	& 37	& 57	& 36	& 46	& 39	& 248 \\
\hline
\end{tabular}
\caption{$(\%\ DU\ 2)$ for Luke ACS (cycle) with varying $e$ and ${\rho}$.}
\label{Luke_evap_epu_cycle_table}
\end{table}

\begin{table}[H]
\centering
\begin{tabular}{|l|l|l|l|l|l|l|l|}
\hline
& \multicolumn{7}{|c|}{Evaporation rate ($\rho$)} \\
\hline
		& 0.05	& 0.1	& 0.2	& 0.3	& 0.4	& 0.5	& Total \\
\hline
$e = 0.05$	& 4	& 18	& 4	& 10	& 7	& 7	& 50 \\
\hline
$e = 0.1$	& 5	& 9	& 9	& 4	& 10	& 7	& 44 \\
\hline
$e = 0.2$	& 8	& 10	& 5	& 9	& 6	& 5	& 43 \\
\hline
$e = 0.3$	& 4	& 9	& 4	& 5	& 3	& 10	& 35 \\
\hline
$e = 0.4$	& 8	& 8	& 7	& 3	& 5	& 5	& 36 \\
\hline
$e = 0.5$	& 5	& 7	& 11	& 7	& 6	& 6	& 42 \\
\hline
Total		& 34	& 61	& 40	& 38	& 37	& 40	& 250 \\
\hline
\end{tabular}
\caption{$(\%\ DU\ 2)$ for Luke ACS (increment) with varying $e$ and ${\rho}$.}
\label{Luke_evap_epu_tables}
\end{table}

\begin{table}[H]
\centering
\begin{tabular}{|l|l|l|l|l|l|l|l|}
\hline
& \multicolumn{7}{|c|}{Evaporation rate ($\rho$)} \\
\hline
		& 0.05	& 0.1	& 0.2	& 0.3	& 0.4	& 0.5	& Total \\
\hline
$e = 0.05$	& 9	& 6	& 8	& 5	& 8	& 6	& 42 \\
\hline
$e = 0.1$	& 9	& 10	& 6	& 3	& 7	& 8	& 43 \\
\hline
$e = 0.2$	& 3	& 5	& 15	& 8	& 6	& 5	& 42 \\
\hline
$e = 0.3$	& 7	& 5	& 9	& 7	& 6	& 6	& 40 \\
\hline
$e = 0.4$	& 3	& 5	& 8	& 5	& 6	& 7	& 34 \\
\hline
$e = 0.5$	& 3	& 4	& 7	& 7	& 7	& 4	& 32 \\
\hline
Total		& 34	& 35	& 53	& 35	& 40	& 36	& 233 \\
\hline
\end{tabular}
\caption{$(\%\ DU\ 2)$ for Dorigo ACS (cycle) with varying $e$ and $\rho$.}
\label{Dorigo_evap_epu_cycle_table}
\end{table}

\begin{table}[H]
\centering
\begin{tabular}{|l|l|l|l|l|l|l|l|}
\hline
& \multicolumn{7}{|c|}{Evaporation rate ($\rho$)} \\
\hline
		& 0.05	& 0.1	& 0.2	& 0.3	& 0.4	& 0.5	& Total \\
\hline
$e = 0.05$	& 7	& 5	& 3	& 6	& 8	& 6 	& 35 \\
\hline
$e = 0.1$	& 7	& 8	& 7	& 9	& 6	& 7	& 44 \\
\hline
$e = 0.2$	& 7	& 4	& 10	& 4	& 9	& 6	& 40 \\
\hline
$e = 0.3$	& 2	& 6	& 7	& 6	& 10	& 4	& 35 \\
\hline
$e = 0.4$	& 4	& 4	& 3	& 6	& 7	& 6	& 30 \\
\hline
$e = 0.5$	& 6	& 2	& 7	& 7	& 6	& 5	& 33 \\
\hline
Total		& 33	& 29	& 37	& 38	& 46	& 34	& 217 \\					
\hline
\end{tabular}
\caption{$(\%\ DU\ 2)$ for Dorigo ACS (increment) with varying $e$ and $\rho$.}
\label{Dorigo_evap_epu_incr_table}
\end{table}

The above tables show a similar level of variation in the achieved results to their predecessors, with no apparent patterns. In an attempt to obtain some information on the merits of the various techniques and parameter choices, we have summed the numbers of APN S-boxes achieved across all experiments for each algorithm. Again, these results have a great deal of variance, but it does appear that Dorigo ACS with increment index is underperforming compared to the other ACS variants. In fact, we believe that the above tables are evidence that Luke's ACS outperforms Dorigo's for this particular problem.

While this is not so certain, we also note that high values of $e$ do appear to impair performance for Dorigo ACS; there is a clear drop in performance for cycle index; and also for increment index with the exception of $e=0.05$ (which we believe to be a statistical outlier). No such pattern is visible for the values of $\rho$ tried, however.

Cycle clearly outperforms increment index for the Dorigo ACS experiments, and while the effective sample size for the Ant System experiments is smaller, cycle still outperforms increment in these experiments by a clear margin. This is not the case for the Luke ACS experiments, however the number of APNs in the increment-index experiments would be reduced below the number for cycle-index by a slight margin if the statistical outlier for $e = 0.05,\ \rho=0.1$ were disregarded. We therefore believe that while this differs by ant algorithm, cycle-index is in general more effective than increment-index for this problem.

For our third set of experiments (in which the number of ants was varied), we decided to conduct the experiments for both Luke ACS and Ant System, since it was not possible to draw a firm conclusion from the evidence so far as to which was the more effective. We set $\rho$ to 0.1 and (for Luke ACS) $e$ to 0.1. We also set $q_0$ to 0:

\begin{table}[H]
\centering
\begin{tabular}{|l|l|l|l|l|}
\hline
& \multicolumn{4}{|c|}{Number of ants} \\
\hline
		& 10	& 32	& 64	& 128 \\
\hline
Ant System	& 6	& 17	& 40	& 68 \\
\hline
ACS (Luke)	& 7	& 22	& 37	& 57 \\
\hline
\end{tabular}
\caption{$(\%\ DU\ 2)$ for varying numbers of ants.}
\label{No_of_ants_table}
\end{table}

The Ant System result, as well as outperforming the best run with the memetic algorithm, also did so in just over 19 hours as opposed to just over 34 for the memetic.

(As an aside, we had also experimented again with restricting S-box output values in accordance with Theorem \ref{POWERS}, believing that these restrictions would be necessary (though perhaps insufficient) for the search to succeed. Ant algorithms reward edges that feature in ``good'' solutions by increasing the amount of pheromone on them, but any given edge $(i, S(i)=j)$, could feature in any S-box with any set of properties - it would always be possible to make an affine transformation that ensured this was so - and we hoped that the truth table restrictions imposed in Theorem \ref{POWERS} would overcome this. Unfortunately, this did not turn out to be the case - as with the memetic algorithms, it resulted in poorer-quality solutions on average for all the ant algorithm variants and all values of $Q$/$q_0$ tried - and we had to abandon this line of inquiry.)

\section{Conclusions, and directions for future research.}
We have demonstrated that where the criteria are differential uniformity and nonlinearity - the two most important criteria for block cipher S-boxes - that metaheuristic search is capable of matching the best theoretical results for S-boxes of size $5 \times 5$ and smaller. Unfortunately, the significant increase in the size of the search space for higher $n$ means that the difficulty level of the problem increases extremely rapidly, regardless of the metaheuristic used, and we were not able to achieve the same success for any $n \geq 6$.

Experimenting with various cost functions, we have found a particularly fast and effective cost function for this particular problem. We have also experimented with various parameter choices and found particularly effective parameters for three different types of metaheuristic applied to this problem. In comparing the metaheuristics, we have observed that ant algorithms - and in particular Ant System - appear to be more effective than either memetic algorithms or simulated annealing. This is also the first time ant algorithms and memetic algorithms have been applied to the S-box problem.

It seems unlikely that evolutionary methods acting on the truth table alone will be sufficient to find almost perfect nonlinear (or $DU\ 4$ with low $DF$) S-boxes for $n=6$ or higher. In future research, it may prove beneficial to try alternate representations of the S-box \cite{Clark_Jacob_Maitra_Stanica2004}, however it is not currently known if any alternate representations with suitable search landscapes exist.

Further research focusing on the use of theoretical results to impose further constraints on the truth table values (and hence the search space) may also yield a breakthrough. Currently, the restrictions imposed by Theorem \ref{POWERS} act on a fraction of truth table entries that decreases exponentially with $n$, and leave a great deal of latitude for the remaining values.

Furthermore, Theorem \ref{POWERS} only focuses on affine equivalence due to the difficulty in finding EA and CCZ transformations that achieved the results desired while preserving bijectivity. Attempting to find such transformations may achieve the stronger restrictions required, especially given that replacing a bijective S-box with its inverse is a CCZ transformation that preserves bijectivity.

Another possibility for future research might be to look at ways to apply a ``divide-and-conquer'' methodology to the problem - for example, evolving two subfunctions with low differential uniformity that would then join up into the larger function. This would not guarantee that the main function had low differential uniformity, but since it would be necessary for, say, two $5 \times 6$ S-boxes to have differential uniformity 2 if their $6 \times 6$ concatenation was to, it might lead to a way of tackling a smaller search space to solve the problem, or produce S-boxes closer to the optimum.

Finally, the cost of a given edge in the ant colony experiments varied depending on the other edges added beforehand. This may indicate that a version of ant colony optimization designed for dynamic problems, such as AntNet, might be possible to adapt to obtain superior performance.

\bibliographystyle{plain_misccommafix}
\bibliography{unifiedbib}
\begin{appendices}
\appendix
\section{The proof of Theorem \ref{POWERS}}\label{POWERS_Appendix}
We show that given any bijection $S$ over $GF(2^{n})$, an affine-equivalent bijection $S_2$ exists such that $S_2$ maps 0 and all $2^{i}$ to themselves (i.e. all $(n+1)$ values with Hamming weight $< 2$ are fixed points), and all $2^{i}+1$ to values within a certain restricted range. We then show that for $S$ with differential uniformity 2 or 4, the range of output values for certain inputs can be made even narrower. We do not succeed, however, in obtaining a unique representation - or ``normal form'' for the
affine equivalence class of $S$ in this way; and the ramifications of this when trying to evolve bijections over $GF(2^{n})$ (such as cryptographic S-boxes) using memetic or ant-based metaheuristic algorithms are considered.

\subsection{Preliminaries.}\label{SectionPreliminariesOne}

\begin{lemma}
Let $S$ denote a bijective S-box. It is trivial to construct another $n \times n$ S-box, $S_2$, which is EA-equivalent to $S$ and which maps all inputs with Hamming weight $\leq 1$ to themselves. However, $S_{2}$ is not necessarily bijective.
\end{lemma}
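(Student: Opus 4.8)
The plan is to produce $S_2$ by composing $S$ with an affine transformation on the input side together with a constant XOR on the output side, chosen so that the $n+1$ inputs of Hamming weight $\le 1$ — namely $0$ and the basis vectors $e_i = 2^{i-1}$ for $1 \le i \le n$ — are all mapped to themselves. Concretely, I would set $S_2(x) = A(S(x)) \oplus c$ for a suitable invertible linear map $A$ (a matrix over $GF(2)$) and constant $c \in GF(2)^n$, which is an EA-equivalence by Definition \ref{EA_equivalence_defn}.

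\textbf{First} I would arrange for $0 \mapsto 0$: put $c = S(0)$, so that replacing $S$ by $S \oplus c$ (still a bijection) sends $0$ to $0$. Rename this map $S$ again. \textbf{Next}, consider the $n$ images $S(e_1), \dots, S(e_n)$. The key point is that \emph{if these $n$ vectors are linearly independent}, they form a basis of $GF(2)^n$, so there is a unique invertible linear map $A$ with $A(S(e_i)) = e_i$ for every $i$; moreover $A(S(0)) = A(0) = 0$. Then $S_2 = A \circ S$ fixes $0$ and every $e_i$, i.e. every input of Hamming weight $\le 1$, as required. Composition with an invertible linear map and with a translation keeps the map EA-equivalent (indeed affine-equivalent once $c$ is absorbed) to the original $S$, and the construction is completely explicit — just Gaussian elimination to invert the matrix whose columns are the $S(e_i)$.

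\textbf{The main obstacle}, and the reason the lemma only promises EA-equivalence and not bijectivity, is precisely that $S(e_1), \dots, S(e_n)$ need \emph{not} be linearly independent for a general bijection $S$: nothing forces the images of a basis to themselves be a basis. When they are dependent, no invertible $A$ can send them to $e_1, \dots, e_n$. The fix is to drop the invertibility requirement on $A$: since $\{e_1, \dots, e_n\}$ is a basis, one can still define a linear map $A$ by prescribing $A(S(e_i)) = e_i$ — but this is only well-defined and forced on the span of the $S(e_i)$, and more to the point, when the $S(e_i)$ are dependent the assignment is over-determined (two different combinations of the $e_i$ would have to equal the same vector). So instead I would define $A$ the other way: pick any linear $A$ with $A(e_i') $ behaving correctly — more cleanly, define $A$ on a basis extending a maximal independent subset of $\{S(e_i)\}$, sending each $S(e_i)$ to $e_i$ and extending arbitrarily (e.g. to $0$) on the complementary basis vectors. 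This $A$ is a (possibly non-invertible) linear map with $A(S(e_i)) = e_i$ and $A(S(0)) = 0$, so $S_2 = A \circ S$ maps all weight-$\le 1$ inputs to themselves; but since $A$ is singular, $S_2$ is not injective, hence not bijective. This is exactly the ``trivial to construct, but not necessarily bijective'' phenomenon the lemma records, and it motivates the more delicate argument of Theorem \ref{POWERS}, where one must work harder — fixing weight-$\le 1$ inputs \emph{and} preserving bijectivity simultaneously — which is why the stronger theorem only controls the images of the weight-$2$ inputs $2^i+1$ up to a range rather than pinning them down.
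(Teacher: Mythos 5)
Your first case (when $S(e_1)\oplus S(0),\dots,S(e_n)\oplus S(0)$ happen to be linearly independent) is fine, and in fact gives more than asked: an affine-equivalent, bijective $S_2$. But the fallback for the dependent case contains a genuine gap. If the vectors $S(e_i)\oplus S(0)$ are linearly dependent, then \emph{no} linear map $A$ on the output side can satisfy $A(S(e_i)\oplus S(0))=e_i$ for all $i$: any dependence relation among the $S(e_i)\oplus S(0)$ would be transported by $A$ to the same relation among the $e_i$, which are independent. Your proposed fix --- defining $A$ on a maximal independent subset of the images and extending arbitrarily --- only forces $A(S(e_i)\oplus S(0))=e_i$ for the indices $i$ in that subset; for any $e_j$ outside it, $A(S(e_j)\oplus S(0))$ is determined by linearity to be a sum of the other $e_i$'s and cannot equal $e_j$. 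So in the dependent case your $S_2$ does not map all weight-$\le 1$ inputs to themselves, and the lemma is not proved (the non-bijectivity of that $A\circ S$ is beside the point, since the fixing property already fails).

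The paper's construction avoids this obstruction by exploiting the ``extended'' part of EA-equivalence rather than composition: it sets $S_2(x)=S(x)\oplus Cx\oplus d$ with $d=S(0)$ and with $C$ the (not necessarily invertible) matrix whose columns are chosen so that $Ce_i = S(e_i)\oplus S(0)\oplus e_i$. Because $C$ is prescribed on the \emph{input} basis $e_1,\dots,e_n$, which is always linearly independent, the prescription is always consistent, no invertibility of $C$ is required by Definition \ref{EA_equivalence_defn}, and one checks directly that $S_2(0)=0$ and $S_2(e_i)=e_i$ for every $i$ --- so the construction is indeed trivial and unconditional. The second claim of the lemma is then established by a concrete example (the Courtois Toy Cipher S-box), for which the resulting $S_2$ sends two distinct inputs to the same output. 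If you want to salvage your approach, you would need to switch to this additive form of the transformation in the dependent case; as written, your argument only covers bijections whose images of the standard basis (after the constant shift) form a basis.
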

\begin{proof}
Let $S_2 = S \oplus Cx \oplus d$. Let $d$ be the bitstring representation of $S(0)$. By choosing the matrix $C$ so that it will map every input $x$ with Hamming weight 1 to $S(x) \oplus S(0) \oplus x$, we obtain an $S_2$ with the properties described.

Such a $C$ can be constructed by letting the $i$th column be the bitstring representation of $2^{n-i} \oplus S(2^{n-i}) \oplus S(0)$.

We now present an example where $S_2$ is constructed as described above from some bijective $S$, but is not itself bijective. Let $S$ be the S-box from the Courtois Toy Cipher \cite{Courtois2006}:

\begin{equation*}
\begin{array}{cccccccc}
0 & 1 & 2 & 3 & 4 & 5 & 6 & 7 \\
7 & 6 & 0 & 4 & 2 & 5 & 1 & 3
\end{array}
\end{equation*}

The matrix $C$ must then be defined as follows:

\begin{equation*}
C=
\begin{bmatrix}
0 & 1 & 0\\
0 & 0 & 0\\
1 & 1 & 0
\end{bmatrix}
\end{equation*}

Note that $S(0) {\oplus} C{\cdot}0 = 7 {\oplus} 0 = 7$. However, $S(7) {\oplus} C{\cdot}7 = 3 {\oplus} 4 = 7$.

It follows that $S_2(0) = S_2(7) = 7 \oplus d$. (Since $d$ is defined as $S(0)$, $d=7$.)
\end{proof}

In the following section, we shall prove that at least one S-box affine-equivalent to $S$, and mapping all inputs with Hamming weight $\leq 1$ to themselves as described, but which is also bijective, must always exist. We shall describe a procedure to construct this S-box from $S$; firstly by constructing an S-box mapping 0, 1 and 2 to themselves, and then applying a more complicated procedure to construct the final S-box from this. Although we do not manage to prove that any other values can be mapped to themselves without jeopardising affine equivalence, we do limit the range of values that can be mapped to by some of the S-box inputs with Hamming weight 2.

\subsection{Constructing the equivalent bijection.}\label{construction_main_section}

We shall construct a sequence of S-boxes, the first of which we shall construct from the original S-box, and will construct each successive S-box from its predecessor until we obtain one with the desired properties.

\begin{lemma}\label{SPR_two_point_one}
Let $S_{1}$ be a bijective S-box.

Then there exists at least one bijective S-box $S_2$ with $S_2(0)=0$, and which is also affine-equivalent to $S_1$.
\end{lemma}
\begin{proof}
To construct $S_2$, we can either
\begin{itemize}
\item Xor every output of $S_1$ with $S_1(0)$, or
\item Let $S_2$ be defined as $S_2(x) = S_1(x {\oplus} S_{1}^{-1}(0))$.
\end{itemize}
\end{proof}

\begin{lemma}
If $S_1$ has differential uniformity $< 2^n$, there exist at least two $S_2$ equivalent to $S_1$ with the properties described above.
\end{lemma}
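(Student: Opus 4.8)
The plan is to exhibit two distinct affine-equivalent bijections, both mapping $0$ to $0$, by using the two constructions given in the proof of Lemma \ref{SPR_two_point_one} and showing that when $S_1$ has differential uniformity $< 2^n$ they cannot collapse to the same S-box. Recall those two constructions: $S_2'(x) = S_1(x) \oplus S_1(0)$, and $S_2''(x) = S_1(x \oplus S_1^{-1}(0))$. Both are affine-equivalent to $S_1$ and both send $0$ to $0$. If $S_1(0) = 0$ already, then $S_2' = S_1$, so these two need not be distinct; instead I would take $S_1$ itself together with one genuinely different affine-equivalent box fixing $0$. So the argument splits into the case $S_1(0) \neq 0$ and the case $S_1(0) = 0$.

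First I would handle the case $S_1(0) \neq 0$. Here $d := S_1(0) \neq 0$ and $c := S_1^{-1}(0) \neq 0$. I claim $S_2' \neq S_2''$. Suppose for contradiction $S_1(x) \oplus d = S_1(x \oplus c)$ for all $x$. Then $S_1(x) \oplus S_1(x \oplus c) = d$ for every $x$, i.e. the row indexed by $c$ in the DDT of $S_1$ has its entire mass ($2^n$ inputs) concentrated in the single column $d$. That forces the DDT entry $(c,d)$ to equal $2^n$, so $DU(S_1) = 2^n$, contradicting the hypothesis $DU(S_1) < 2^n$. Hence $S_2' \neq S_2''$, giving the two required boxes.

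Next, the case $S_1(0) = 0$. Now I need a second affine-equivalent bijection, distinct from $S_1$, that also fixes $0$. The natural choice is to compose with a nontrivial linear bijection on the input: pick any invertible matrix $M \neq I$ over $GF(2)$ and set $S_2(x) = S_1(Mx)$; this is affine-equivalent and $S_2(0) = S_1(0) = 0$. It remains to argue we can choose $M$ so that $S_2 \neq S_1$, i.e. $S_1(Mx) \neq S_1(x)$ for some $x$; since $S_1$ is a bijection, $S_1 \circ M = S_1$ would force $M = I$. Because $n \geq 1$ and there exists at least one $M \neq I$ exactly when $n \geq 1$ with $GF(2)^n$ nontrivial — and in the degenerate $n$ where no such $M$ exists the statement would be vacuous or handled by the $S_1(0)\neq 0$ branch being the only relevant one — we obtain $S_2 \neq S_1$. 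Alternatively, one may compose on the output with a nontrivial linear bijection $N$: $S_2(x) = N(S_1(x))$, again fixing $0$ and distinct from $S_1$ for $N \neq I$; using $DU(S_1) < 2^n$ is not even needed in this sub-case, only the existence of a nonidentity linear bijection.

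The main obstacle is the bookkeeping around degenerate small cases and making sure the ``two distinct'' claim is airtight rather than merely plausible: in the $S_1(0)\neq 0$ branch the hypothesis $DU < 2^n$ is used in an essential way (it is precisely what rules out the pathological S-box all of whose derivatives in some direction are constant), whereas in the $S_1(0)=0$ branch distinctness comes for free from composing with a nonidentity linear map. I would write the proof so that the $DU(S_1) < 2^n$ hypothesis is invoked cleanly exactly where it is needed, namely to prevent $S_2'$ and $S_2''$ from coinciding, and note that when $S_1(0) \neq 0$ these are the two boxes, while when $S_1(0)=0$ we take $S_1$ and a linearly-twisted copy.
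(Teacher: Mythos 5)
Your proof is correct, and its core coincides with the paper's: in both, the decisive step is that if the output-translation $S_1(x)\oplus S_1(0)$ and the input-translation $S_1(x\oplus S_1^{-1}(0))$ coincided, the DDT entry in row $S_1^{-1}(0)$, column $S_1(0)$ would equal $2^n$, contradicting $DU(S_1)<2^n$. Where you diverge is the case $S_1(0)=0$. The paper keeps a single uniform argument: it first passes to an affine-equivalent $S_a$ with $S_a(0)\neq 0$ (e.g.\ by xoring all outputs with 1), runs the same two constructions on $S_a$, and transfers the hypothesis via the affine-invariance of differential uniformity. You instead exhibit $S_1$ itself together with $S_1\circ M$ for a non-identity invertible $M$, which is simpler and makes visible that the $DU<2^n$ hypothesis is not needed in that branch (indeed, for $n\geq 2$ the same linear-twist trick would dispense with the hypothesis altogether; the hypothesis is genuinely load-bearing only in the paper's uniform argument and in excluding the $n=1$ case). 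Two small points of polish: your parenthetical about when a non-identity invertible matrix exists is garbled — such $M$ exist exactly when $n\geq 2$, not $n\geq 1$ — but this is harmless because for $n=1$ every bijection over $GF(2)$ has differential uniformity $2=2^n$, so the statement is vacuous there, as you essentially note; and in the first branch it is worth stating explicitly that the offending DDT entry lies in a row $i=S_1^{-1}(0)\neq 0$, so it is not the excluded trivial entry of Definition \ref{differential_uniformity_defn} (your hypotheses do guarantee this).
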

\begin{proof}
If $S_{1}(0) = 0$, let $S_{a}$ be some affine-equivalent S-box such that this is not the case. Perhaps we could xor all of $S_{1}$'s outputs with 1 to achieve this. If $S_1(0) \neq 0$, let $S_a = S_1$.

Let $S_{2a}$ be the box we obtain by xoring every output of $S_{a}$ with $S_{a}(0)$.

Let $S_{2b}$ be the box defined by $S_{2b}(x) = S_{a}(x {\oplus} S_{a}^{-1}(0))$.

(Note that one of these may be the same S-box as the original $S_{1}$.)

If $S_{2a}$ and $S_{2b}$ were the same S-box, it would follow that $S_{a}(x {\oplus} S_{a}^{-1}(0))$ was equal to $S_{a}(x) {\oplus} S_{a}(0)$ in all cases. The entry in $S_{a}'s$ difference distribution table for row $S_{a}^{-1}(0)$ and column $S_{a}(0)$ would then be equal to $2^n$. However, since differential uniformity is affine-invariant, this would lead to a contradiction, as $S_1$ does not have differential uniformity $2^n$.
\end{proof}

If $n$ is equal to 1, then this is enough to prove the main result, since the bijectivity of the constructed S-box would mean that it also had to map 1 to itself. We shall therefore assume from here on that $n \geq 2$.

\begin{lemma}
For any nonzero $n$-vector, there exists at least one linear bijection expressed as an $n \times n$ matrix over $GF(2)$ such that said vector is a column thereof.

Furthermore, for any distinct pair of nonzero $n$-vectors, there exists at least one linear bijection expressed as an $n \times n$ matrix over $GF(2)$ such that each of these vectors is a column thereof.
\end{lemma}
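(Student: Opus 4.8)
The plan is to reduce both claims to two standard facts of linear algebra over a field: first, that any linearly independent subset of $GF(2)^n$ extends to a basis (the Steinitz exchange lemma / basis-extension theorem); and second, that an $n \times n$ matrix over a field is invertible if and only if its columns form a basis of the ambient space. Everything then becomes a matter of choosing columns.

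For the first statement I would argue as follows. Since the given vector $v$ is nonzero, the singleton $\{v\}$ is linearly independent, so by basis extension there exist vectors $w_2, \ldots, w_n$ with $\{v, w_2, \ldots, w_n\}$ a basis of $GF(2)^n$. Form the $n \times n$ matrix over $GF(2)$ having $v$ as one of its columns and $w_2, \ldots, w_n$ (in any order) as the remaining $n-1$ columns. Its columns are a basis, so it is invertible; equivalently it is a linear bijection of $GF(2)^n$, and it has $v$ as a column, as required.

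For the second statement the only additional ingredient is an observation special to $GF(2)$: a pair of \emph{distinct} nonzero vectors is automatically linearly independent. Indeed, the only scalars are $0$ and $1$, so a nontrivial dependence $\lambda_1 v_1 \oplus \lambda_2 v_2 = 0$ with, say, $\lambda_2 = 1$ forces $v_2 = \lambda_1 v_1 \in \{0, v_1\}$, contradicting either that $v_2 \neq 0$ or that $v_1 \neq v_2$; and $n \geq 2$ (already assumed in the excerpt) guarantees $GF(2)^n$ actually contains such a pair. Given that $\{v_1, v_2\}$ is independent, I would again invoke basis extension to get a basis $\{v_1, v_2, w_3, \ldots, w_n\}$, assemble the matrix with $v_1$ and $v_2$ as two of its columns and the $w_i$ filling the rest, and conclude invertibility exactly as before.

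I do not anticipate a genuine obstacle: the argument is routine. The one point deserving a sentence of care is the $GF(2)$-specific step in the second part (distinct $+$ nonzero $\Rightarrow$ linearly independent), since it is precisely what lets the hypothesis read ``distinct pair'' rather than ``linearly independent pair'' — and it is exactly this step that would fail over a larger field.
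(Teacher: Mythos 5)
Your proof is correct and follows essentially the same route as the paper's: both arguments rest on the observation that a nonzero vector (respectively, a pair of distinct nonzero vectors, which over $GF(2)$ is automatically linearly independent) can be completed to a set of $n$ linearly independent columns, yielding an invertible matrix. The paper merely phrases the completion as a greedy column-by-column choice rather than citing the basis-extension theorem, so the two proofs are the same in substance.
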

\begin{proof}
The result is trivial. The only condition imposed on the matrix by its bijectivity is that it must be invertible, which is the case if and only if all its columns are linearly independent. This does not prevent us from choosing the first column arbitrarily (as long as it is nonzero), and the only restriction imposed on the second column is that it should not be equal to zero or to the first column. We choose the rest of the columns accordingly, and then reorder them if we do not wish the vector (or vectors) we started with to occupy the first column (or first two columns.)
\end{proof}

\begin{corollary}\label{corollary_4}
For all nonzero $x \in GF(2^n)$, and for any $0 \leq k < n$, there exists at least one bijective matrix $M$ such that $M(2^k) = x$.
\end{corollary}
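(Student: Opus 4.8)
The plan is to read this off the preceding lemma with essentially no extra work, since it is literally stated as a corollary. First I would observe that, under the identification of $GF(2^{n})$ with column vectors over $GF(2)$, the field element $2^{k}$ corresponds to the standard basis vector $e_{k}$ having a single $1$ in the position indexed by $k$ (and the hypothesis $0 \leq k < n$ guarantees this position is valid). Hence, for any matrix $M$, the product $M(2^{k}) = M e_{k}$ is simply the column of $M$ indexed by $k$. So the assertion ``there is a bijective $M$ with $M(2^{k}) = x$'' is exactly the assertion ``there is an invertible $n \times n$ matrix over $GF(2)$ one of whose columns, namely the $k$-th, equals $x$''.

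Next I would invoke the first half of the previous lemma: since $x$ is nonzero, there exists at least one invertible $n \times n$ matrix over $GF(2)$ having $x$ as a column, say in position $\ell$. If $\ell = k$ we are already done. Otherwise I would swap columns $k$ and $\ell$; a column permutation of an invertible matrix is invertible (it amounts to right-multiplication by a permutation matrix, which is invertible over any field), and it moves $x$ into the desired position without destroying the property that $x$ occurs as a column. This yields the required $M$. In fact the proof of the lemma already performs exactly this reordering step, so one could equally just quote the lemma in the form ``$x$ can be made any prescribed column''.

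There is genuinely no obstacle here: the corollary is a bookkeeping restatement of the lemma once one translates between ``the field element $2^{k}$'' and ``a standard basis vector'', together with the trivial remark that permuting columns preserves invertibility. The only point requiring a moment's care is the indexing convention (whether $2^{0}$ labels the first or the last column, and whether columns are numbered from $0$ or from $1$), but this is immaterial to an existence statement, precisely because the earlier lemma already guarantees that the columns may be reordered freely.
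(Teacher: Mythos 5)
Your proposal is correct and follows essentially the same route as the paper: the paper's proof simply invokes the preceding lemma and places $x$ in the column that the standard-basis vector representing $2^{k}$ selects (the $(n-k)$th column in its indexing convention), which is exactly your observation that $M(2^{k})$ is a column of $M$ together with the lemma's freedom to position that column. Your extra remark about swapping columns is already built into the lemma's proof, so nothing new is needed.
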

\begin{proof}
Simply choose $M$ as described above so that its $(n-k)$th column is the vector $x$.
\end{proof}

\begin{corollary}\label{corollary_4_1}
For all distinct nonzero $(x, y) \in GF(2^n)$, and for any $0 \leq k < l < n$, there exists at least one bijective matrix $M$ such that $M(2^k) = x$ and $M(2^l) = y$.
\end{corollary}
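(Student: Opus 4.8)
The plan is to deduce this directly from the preceding lemma (the one guaranteeing that any distinct pair of nonzero $n$-vectors can be realised as two columns of an invertible $GF(2)$ matrix), exactly as Corollary \ref{corollary_4} was deduced from the single-vector version. The key observation is that $2^k$ and $2^l$, viewed as vectors in $GF(2)^n$, are distinct standard basis vectors; say $2^k$ corresponds to basis position $n-k$ and $2^l$ to position $n-l$. Since $k < l < n$, these two positions are distinct, so requiring $M(2^k) = x$ and $M(2^l) = y$ is precisely requiring that column $n-k$ of $M$ equal $x$ and column $n-l$ of $M$ equal $y$.

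First I would invoke the lemma with the given distinct nonzero pair $(x,y)$ to obtain an invertible matrix $M_0$ having both $x$ and $y$ among its columns, say $x$ in some column $a$ and $y$ in some column $b$ with $a \neq b$. Then I would compose $M_0$ on the right with the permutation matrix $P$ that swaps column index $a$ with $n-k$ and column index $b$ with $n-l$ (choosing the transpositions so they are compatible, i.e. handling the case where one of $\{a,b\}$ already coincides with one of $\{n-k, n-l\}$; a single permutation sending $a \mapsto n-k$ and $b \mapsto n-l$ always exists since both source and target pairs consist of two distinct indices). Set $M = M_0 P$. Permuting columns preserves invertibility (equivalently, $\det$ changes only by a sign, which over $GF(2)$ is irrelevant, and the rank is unchanged), so $M$ is still bijective, and by construction its $(n-k)$th column is $x$ and its $(n-l)$th column is $y$, giving $M(2^k) = x$ and $M(2^l) = y$.

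There is essentially no obstacle here: the only thing to be mildly careful about is the bookkeeping of the column permutation when $a$ or $b$ happens to already equal one of the target positions $n-k, n-l$, but since in all cases we are matching one two-element set of indices to another two-element set, the required permutation exists and the argument goes through unchanged. This mirrors exactly the one-vector argument in Corollary \ref{corollary_4}, so I would keep the write-up to a single short paragraph citing the lemma and noting the reindexing.
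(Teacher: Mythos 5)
Your proposal is correct and is essentially the paper's own argument: the paper simply chooses $M$ (via the preceding lemma, whose proof already allows reordering columns) so that its $(n-k)$th column is $x$ and its $(n-l)$th column is $y$. Your explicit right-composition with a permutation matrix just spells out the reordering step the lemma leaves implicit.
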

\begin{proof}
Choose M so that its $(n-k)$th column is $x$ and its $(n-l)$th column is $y$.

Note that $M(2^{k} \oplus 2^{l})$ will be equal to $(x {\oplus} y)$.
\end{proof}

\begin{corollary}\label{corollary_5}
For all nonzero $x \in GF(2^n)$, and for any $0 \leq k < n$, there exists at least one bijective matrix $M$ such that $M(x) = 2^{k}$.
\end{corollary}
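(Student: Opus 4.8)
The plan is to obtain this statement as an immediate consequence of Corollary \ref{corollary_4} by passing to the inverse matrix. First I would invoke Corollary \ref{corollary_4}, which under exactly the hypotheses we are given (namely $x \neq 0$ and $0 \leq k < n$) produces a bijective matrix $N$ over $GF(2)$ with $N(2^k) = x$. Then I would set $M = N^{-1}$. Since $N$ is bijective as a linear map on $GF(2)^n$, it is invertible as an $n \times n$ matrix over $GF(2)$, so $M$ exists, is again an $n \times n$ matrix over $GF(2)$, and is again bijective (with inverse $N$). Evaluating $M$ at $x = N(2^k)$ gives $M(x) = N^{-1}(N(2^k)) = 2^k$, which is precisely what is claimed.

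An equivalent route, in the more explicit style of the preceding lemmas, would be to build a bijective $N$ whose $(n-k)$th column is $x$ by extending $\{x\}$ to a basis of $GF(2)^n$ and filling the remaining columns with the other basis vectors (in any order), and then to take $M = N^{-1}$. This is the same argument with the basis-extension step written out, so I would favour the inversion phrasing for brevity.

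I do not anticipate any genuine obstacle here; the result is as routine as Corollary \ref{corollary_4} itself. The only point worth a single sentence of care is the observation that the inverse of a $GF(2)$-linear bijection of $GF(2)^n$ is itself represented by a matrix over $GF(2)$ and is itself bijective — standard linear algebra over a field, requiring nothing special in characteristic $2$.
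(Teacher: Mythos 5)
Your proposal is correct and is essentially identical to the paper's own proof: both invoke Corollary \ref{corollary_4} to obtain a bijective $N$ with $N(2^k)=x$ and then take $M=N^{-1}$. Nothing further is needed.
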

\begin{proof}
Choose some $N$ as described in Corollary \ref{corollary_4} such that $N(2^k)=x$. Then $N^{-1}$ is $M$ as desired.
\end{proof}

\begin{corollary}\label{corollary_5_1}
For all distinct nonzero $(x, y) \in GF(2^n)$, and for any $0 \leq k < l < n$, there exists at least one bijective matrix $M$ such that $M(x) = 2^k$ and $M(y) = 2^l$.
\end{corollary}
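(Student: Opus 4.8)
The plan is to obtain this corollary from Corollary~\ref{corollary_4_1} by the same dualisation trick used to derive Corollary~\ref{corollary_5} from Corollary~\ref{corollary_4}. The point is that ``$M$ sends the basis vectors $2^k, 2^l$ to $x, y$'' and ``$M$ sends $x, y$ to the basis vectors $2^k, 2^l$'' are related by passing to the inverse matrix, and bijectivity is exactly the property that lets us do this.

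Concretely, first I would invoke Corollary~\ref{corollary_4_1} with the roles of the vectors as given: since $x$ and $y$ are distinct and nonzero and $0 \le k < l < n$, there exists a bijective $n \times n$ matrix $N$ over $GF(2)$ with $N(2^k) = x$ and $N(2^l) = y$. Then I would set $M = N^{-1}$, which exists and is bijective precisely because $N$ is. Applying $N^{-1}$ to both sides of $N(2^k) = x$ gives $M(x) = 2^k$, and likewise $M(y) = 2^l$, which is exactly what is required.

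There is essentially no obstacle here: the only thing to be careful about is that Corollary~\ref{corollary_4_1} already respects the ordering constraint $k < l$, so no reindexing or case analysis is needed, and $N$ being invertible is built into its statement. One could, if desired, also record the analogue of the remark in Corollary~\ref{corollary_4_1} that $M(x \oplus y) = 2^k \oplus 2^l$, which follows immediately from linearity, but this is not needed for the statement as written.
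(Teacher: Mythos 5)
Your proposal is correct and matches the paper's own argument exactly: the paper also takes the matrix $N$ from Corollary~\ref{corollary_4_1} with $N(2^k)=x$, $N(2^l)=y$ and sets $M=N^{-1}$, even noting the same consequence $M(x \oplus y) = 2^k \oplus 2^l$. Nothing is missing.
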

\begin{proof}
Choose some $N$ as described in Corollary \ref{corollary_4_1} such that $N(2^k)=x$ and $N(2^l)=y$. Then $N^{-1}$ is $M$ as desired.

Note that since $N(2^{k} {\oplus} 2^{l})$ will equal $x {\oplus} y$, $M(x {\oplus} y) = (2^{k} \oplus 2^l)$.
\end{proof}

\begin{theorem}\label{theorem_7_1}
Let $S$ be a bijective S-box such that $S(0)=0$. For any $0 \leq k < l < n$, there exists at least one affine-equivalent bijective S-box $S_{2}$ such that:
\begin{itemize}
\item $S_{2}(0)=0$,
\item $S_{2}(2^k) = 2^k$,
\item $S_{2}(2^l) = 2^l$, and
\item $S_{2}(x) = M{\cdot}S(x)$ for some linear bijective matrix $M$.
\end{itemize}
\end{theorem}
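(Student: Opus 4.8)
The plan is to realise $S_2$ purely as a left-composition $S_2 = M\cdot S$ with an invertible linear matrix $M$; this is the most restrictive shape allowed by the statement, and it is also the most convenient one, because left-composition by a linear map is in particular affine (so $S_2$ is automatically affine-equivalent to $S$), it automatically preserves bijectivity, and it automatically keeps $0$ a fixed point since $M\cdot 0 = 0$. Thus only the two conditions $S_2(2^k)=2^k$ and $S_2(2^l)=2^l$ remain to be arranged, and these amount to asking that $M$ send $S(2^k)$ to $2^k$ and $S(2^l)$ to $2^l$.

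First I would check that Corollary \ref{corollary_5_1} is applicable to the pair $\bigl(S(2^k),\,S(2^l)\bigr)$ with the indices $k<l$. Set $x := S(2^k)$ and $y := S(2^l)$. Since $S$ is injective and $S(0)=0$, both $x$ and $y$ are nonzero; since $S$ is injective and $2^k\neq 2^l$, they are distinct from each other. The index hypothesis $0\le k<l<n$ is given. Hence Corollary \ref{corollary_5_1} supplies a bijective matrix $M$ with $M(x)=2^k$ and $M(y)=2^l$. I would then define $S_2(z) := M\cdot S(z)$ and verify the four bullet points in turn: $S_2$ is bijective as a composition of bijections; $S_2(0)=M\cdot 0 = 0$; $S_2(2^k)=M(S(2^k))=M(x)=2^k$; $S_2(2^l)=M(S(2^l))=M(y)=2^l$; and $S_2=M\cdot S$ has exactly the claimed form, with affine-equivalence to $S$ following from $M$ being an invertible linear map.

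There is essentially no obstacle here: the combinatorial content has already been isolated in Corollary \ref{corollary_5_1}, and the only point requiring care is confirming the nonzero-and-distinct hypotheses before invoking it, which is precisely where $S(0)=0$ and the injectivity of $S$ are used. I would close with a brief remark explaining why fixing $0$ costs nothing even while we pin down the two weight-one inputs $2^k$ and $2^l$: no translation component is needed, so a linear (rather than affine) $M$ suffices, and every linear map fixes $0$. This theorem will then serve as the base case for the inductive construction in the subsequent lemmas, where further weight-one inputs are forced to become fixed points.
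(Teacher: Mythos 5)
Your proposal is correct and follows essentially the same route as the paper: invoke Corollary \ref{corollary_5_1} to obtain a bijective linear $M$ sending $S(2^k)\mapsto 2^k$ and $S(2^l)\mapsto 2^l$, set $S_2 = M\cdot S$, and note that linearity of $M$ fixes $0$. Your explicit check that $S(2^k)$ and $S(2^l)$ are nonzero and distinct (via $S(0)=0$ and injectivity) is left implicit in the paper, but otherwise the arguments coincide.
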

\begin{proof}
Any bijective linear transformation of 0 is 0, so $S_{2}(0) = 0$.

From Corollary \ref{corollary_5_1}, we can choose $M$ to be a transformation mapping $S(2^k)$ to $2^k$ and $S(2^l)$ to $2^l$.
(Again, we note that $M$ will also map $(S(2^k) \oplus S(2^l))$ to $(2^k \oplus 2^l)$.)
\end{proof}

\begin{theorem}\label{theorem_8_1}
Let $S$ be a bijective S-box such that $S(0)=0$. For any $0 \leq k < l < n$, there exists at least one affine-equivalent bijective S-box $S_{2}$ such that:
\begin{itemize}
\item $S_{2}(0)=0$,
\item $S_{2}(2^k) = 2^k$,
\item $S_{2}(2^l) = 2^l$, and
\item $S_{2}(x) = S{\cdot}M(x)$ for some linear bijective matrix $M$.
\end{itemize}
\end{theorem}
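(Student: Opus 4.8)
The plan is to mirror the proof of Theorem~\ref{theorem_7_1}, but acting on the input side rather than the output side, and invoking $S^{-1}$ together with Corollary~\ref{corollary_4_1} in place of Corollary~\ref{corollary_5_1}. The candidate is $S_{2}(x) := S(M(x))$ for a suitable invertible linear matrix $M$; this is affine-equivalent (indeed linear-equivalent) to $S$ for any such $M$, and since $M$ is linear we get $S_{2}(0) = S(M(0)) = S(0) = 0$ for free, so the first required property is immediate.

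Next I would translate the two remaining requirements into conditions on $M$. We need $S(M(2^{k})) = 2^{k}$ and $S(M(2^{l})) = 2^{l}$, i.e. $M(2^{k}) = S^{-1}(2^{k})$ and $M(2^{l}) = S^{-1}(2^{l})$. Because $S$ is a bijection with $S(0) = 0$, the elements $S^{-1}(2^{k})$ and $S^{-1}(2^{l})$ are nonzero, and they are distinct since $2^{k} \neq 2^{l}$. Applying Corollary~\ref{corollary_4_1} with $x = S^{-1}(2^{k})$ and $y = S^{-1}(2^{l})$ then produces an invertible matrix $M$ with exactly these two column images, and $S_{2} = S \cdot M$ has all four asserted properties. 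As in the earlier corollaries one also records the byproduct $M(2^{k} \oplus 2^{l}) = S^{-1}(2^{k}) \oplus S^{-1}(2^{l})$.

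There is essentially no hard step here; the only points requiring care are checking that the two target vectors $S^{-1}(2^{k})$ and $S^{-1}(2^{l})$ are distinct and nonzero so that Corollary~\ref{corollary_4_1} applies, and observing that composing $S$ on the \emph{inside} with a linear bijection is still an affine equivalence in the sense of the relevant definition. The statement is exactly the input-side analogue of Theorem~\ref{theorem_7_1}, and presumably the two are meant to be used in tandem in the construction that follows.
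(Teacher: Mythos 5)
Your proposal is correct and follows the paper's own proof essentially verbatim: choose $M$ via Corollary \ref{corollary_4_1} so that $M(2^{k}) = S^{-1}(2^{k})$ and $M(2^{l}) = S^{-1}(2^{l})$, with $S_{2}(0)=0$ coming from linearity of $M$. The extra care you take in checking that $S^{-1}(2^{k})$ and $S^{-1}(2^{l})$ are distinct and nonzero is a detail the paper leaves implicit, but the approach is identical.
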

\begin{proof}
Any bijective linear transformation of 0 is 0, so $S_{2}(0) = 0$.

From Corollary \ref{corollary_4_1}, we can choose $M$ to be a transformation mapping $2^k$ to $S^{-1}(2^k)$ and $2^l$ to $S^{-1}(2^l)$. (Again, we note that $M$ will also map $(2^k {\oplus} 2^l)$ to $(S^{-1}(2^k) {\oplus} S^{-1}(2^l))$.)
\end{proof}

We see that it is fairly straightforward to construct an affine-equivalent bijective S-box mapping 0, 1, and 2 to themselves. If $n=2$, we have now achieved the desired result, so we shall now assume $n \geq 3$.

\begin{lemma}
Let $S$ be an APN S-box mapping 0, 1 and 2 to themselves. $S(3)$ cannot be equal to 3.
\end{lemma}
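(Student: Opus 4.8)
The plan is to argue by contradiction and to exploit a single entry of the difference distribution table (Definition~\ref{DDT_defn}). Suppose $S$ is APN, maps $0,1,2$ to themselves, and that in addition $S(3)=3$. I would examine the DDT entry in row $i=3$, column $j=3$, i.e.\ count the inputs $x$ with $S(x)\oplus S(x\oplus 3)=3$. Since $n\geq 3$ here (the preceding discussion has already disposed of $n\leq 2$), the values $0,1,2,3$ are all distinct elements of $GF(2^{n})$, and note $3 = 1\oplus 2$, so the pairings $\{0,3\}$ and $\{1,2\}$ are exactly the two cosets of $\{0,3\}$ among these four values.

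The key step is to list the relevant outputs. Using $S(0)=0$, $S(1)=1$, $S(2)=2$ together with the assumption $S(3)=3$, all four of $x=0,1,2,3$ satisfy $S(x)\oplus S(x\oplus 3)=3$: for $x=0$ we get $S(0)\oplus S(3)=0\oplus 3=3$; for $x=3$ we get $S(3)\oplus S(0)=3$; for $x=1$ we get $S(1)\oplus S(2)=1\oplus 2=3$; and for $x=2$ we get $S(2)\oplus S(1)=3$. Hence the entry in row $3$, column $3$ of the DDT is at least $4$.

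But by Definition~\ref{APN_defn} an APN S-box has differential uniformity $2$, so every DDT entry other than the one for $i=j=0$ is at most $2$; and $(i,j)=(3,3)\neq(0,0)$. This is the desired contradiction, so $S(3)\neq 3$. I do not anticipate any real obstacle: the argument is pure bookkeeping of exclusive-ors, the only thing to be careful about being that $0,1,2,3$ are genuinely four distinct field elements, which is guaranteed by the standing assumption $n\geq 3$.
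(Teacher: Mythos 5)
Your proof is correct and is essentially the paper's own argument: the paper likewise observes that $S(3)=3$ would force (input difference $3$, output difference $3$) for the pairs $(0,3),(3,0),(1,2),(2,1)$, giving a DDT entry of at least $4$ and contradicting the APN property. Your extra remark about $n\geq 3$ guaranteeing the distinctness of $0,1,2,3$ is a harmless addition consistent with the paper's standing assumption at that point.
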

\begin{proof}
If $S(3)$ were equal to 3, (input difference 3, output difference 3) would occur for the input pairs (0, 3), (3, 0), (1, 2), and (2, 1), meaning that the differential uniformity of $S$ would be at least 4 and contradicting the statement that it is APN.
\end{proof}

We note that for values of $n$ such as 4, for which APN bijections do not exist \cite{Leander_Poschmann2007}, we cannot guarantee that a given S-box mapping 0, 1 and 2 to themselves will not also map 3 to itself, but it seems highly unlikely that boxes mapping 0, 1, 2 and 3 to themselves will not be in the minority. Certainly APN bijective S-boxes are known to exist for $n=6$ \cite{Dillon2009_2} and for all odd $n$ \cite{Nyberg1993}; furthermore differentially-4-uniform S-boxes are known to exist for all $n$ \cite{Nyberg1993} and a later result in this paper will show that, given some bijective differentially-4-uniform S-box mapping 0, 1, 2 and 3 to themselves, we can construct from it an affine-equivalent differentially-4-uniform bijective S-box mapping 0, 1, 2 and 3 to 0, 1, 2 and 5 respectively.

\begin{theorem}\label{s_three_main_body}
Let $S$ be a bijective S-box mapping 0, 1, 2 to themselves, but not mapping 3 to itself. There exist bijective S-boxes $S_1, S_2, S_3$ affine-equivalent to $S$ which also map 0, 1, and 2 to themselves, such that:
\begin{itemize}
\item $S_1(3) = 5$
\item $S_2(3) = 6$
\item $S_3(3) = 7$
\end{itemize}
\end{theorem}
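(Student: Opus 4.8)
The plan is to leave the input side of $S$ untouched and post-compose with a single well-chosen linear bijection of $GF(2^n)$, in the spirit of Theorem \ref{theorem_7_1} but with one extra prescribed column. Write $v := S(3)$. First I would record that $v \notin \{0,1,2,3\}$: injectivity of $S$ together with $S(0)=0$, $S(1)=1$, $S(2)=2$ rules out $v \in \{0,1,2\}$, and $v\neq 3$ is the hypothesis of the theorem. Since $\{0,1,2,3\}$ is precisely the two-dimensional $GF(2)$-subspace $\langle 1,2\rangle = \langle 2^0,2^1\rangle$, this is the same as saying that $\{1,2,v\}$ is linearly independent (here $n\geq 3$ is being used).

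Next, for each target value $t\in\{5,6,7\}$ I would produce a linear bijection $A_t$ of $GF(2^n)$ with $A_t(1)=1$, $A_t(2)=2$ and $A_t(v)=t$. This is exactly the basis-extension argument of Corollaries \ref{corollary_4_1} and \ref{corollary_5_1}, now with three prescribed images rather than two: extend the independent set $\{1,2,v\}$ to a basis $\{1,2,v,z_4,\ldots,z_n\}$ of $GF(2^n)$; since $t>3$ we also have $t\notin\langle 1,2\rangle$, so $\{1,2,t\}$ is independent and extends to a basis $\{1,2,t,y_4,\ldots,y_n\}$; the linear map sending $1\mapsto 1$, $2\mapsto 2$, $v\mapsto t$, $z_i\mapsto y_i$ is then a bijection with the desired values. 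Such an $A_t$ automatically fixes $0$ and fixes $3=1\oplus 2$, because $A_t(3)=A_t(1)\oplus A_t(2)=1\oplus 2=3$.

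Finally I would set $S_1:=A_5\cdot S$, $S_2:=A_6\cdot S$ and $S_3:=A_7\cdot S$. Each is a composition of bijections, hence a bijection, and each is affine-equivalent to $S$ (post-composition with a linear, hence affine, bijection, taking the input-side map to be the identity). Since $A_t(S(0))=A_t(0)=0$, $A_t(S(1))=A_t(1)=1$ and $A_t(S(2))=A_t(2)=2$, every $S_i$ still fixes $0$, $1$ and $2$; and $S_i(3)=A_t(S(3))=A_t(v)$ equals $5$, $6$, $7$ respectively, as required. (If $v$ already happens to be one of $5,6,7$, the corresponding $A_t$ may simply be taken to be the identity.)

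The argument is essentially mechanical, and I expect the only point requiring any care to be the verification that both $\{1,2,v\}$ and $\{1,2,t\}$ are linearly independent sets --- this is precisely what guarantees that the three-column linear bijection $A_t$ exists, and it is the only place where the hypotheses ``$S$ fixes $0,1,2$ but not $3$'' and ``$t\in\{5,6,7\}$'' actually get used. No genuinely new idea beyond the corollaries already established seems to be needed.
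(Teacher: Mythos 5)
Your proof is correct, and it takes a noticeably different route from the paper's. You post-compose with a single linear bijection $A_t$ obtained by a basis-extension argument (fixing $1$ and $2$ and sending $S(3)$ to the target $t$), having first checked that $\{1,2,S(3)\}$ and $\{1,2,t\}$ are both linearly independent because $S(3)\notin\{0,1,2,3\}=\langle 1,2\rangle$ by injectivity and hypothesis, and $t>3$; this is a clean extension of the two-column Corollaries \ref{corollary_4_1} and \ref{corollary_5_1} to three prescribed images, and it disposes of the theorem in one step with no case analysis. The paper instead builds the output-side transformation explicitly as a product of elementary matrices: if $S(3)>7$ it repeatedly applies $MSB\_SHIFT$ to push $S(3)$ below $8$, then finishes with a $CXOR(3,\cdot)$ to hit $5$, $6$ or $7$. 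Your argument is shorter and arguably more transparent for this statement, since only the values at $0,1,2$ need to be preserved and any bijection fixing $1$ and $2$ does that automatically. What the paper's more laborious construction buys is reuse: the $MSB\_SHIFT$/$CXOR$ machinery has tightly controlled side effects (it leaves all sufficiently small outputs untouched), which is exactly what the later steps (Theorems \ref{fourteen}, \ref{twenty_point_one}, \ref{twenty_point_two_full}) need when many more truth-table entries must be preserved, and it yields an explicitly implementable procedure, which matters for the search algorithms the appendix supports. A generic basis extension gives you no such control over the other outputs, so your method would not substitute for the paper's in those later arguments, but as a proof of this theorem it is complete and valid.
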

\begin{proof}
If $S(3) \leq 7$, we can construct the desired $S_i$ immediately, by applying a process which we will refer to as "controlled-XOR":

\begin{definition}\label{cxor_definition}
Let $CXOR(i, a_{1}a_{2}{\ldots}a_{i-1})$ be a matrix identical to the identity matrix except in its $i$th-last column

Instead of

\begin{equation*}
\begin{vmatrix}
0 \\
\ldots \\
0 \\
1 \\
0 \\
0 \\
\ldots \\
0 \\
\end{vmatrix}
\end{equation*}

let it be

\begin{equation*}
\begin{vmatrix}
0 \\
\ldots \\
0 \\
1 \\
a_{1} \\
a_{2} \\
\ldots \\
a_{i-1} \\
\end{vmatrix}
\end{equation*}

Applying $CXOR$ to the outputs of the S-box will map all values $< 2^{i-1}$ to themselves, as also any values $> 2^{i-1}$ whose $i$th-last bit is zero.
$CXOR$ will map any output values with nonzero $i$th-last bit from

\begin{equation*}
\begin{vmatrix}
c_{1} \\
\ldots \\
c_{p} \\
1 \\
b_{1} \\
b_{2} \\
\ldots \\
b_{i-1} \\
\end{vmatrix}
\end{equation*}

to

\begin{equation*}
\begin{vmatrix}
c_{1} \\
\ldots \\
c_{p} \\
1 \\
b_{1} \oplus a_{1} \\
b_{2} \oplus a_{2} \\
\ldots \\
b_{i-1} \oplus a_{i-1} \\
\end{vmatrix}
\end{equation*}

and we simply choose $(a_1, {\ldots}, a_{i-1})$ to give the desired result.
\end{definition}

In this case, we simply apply $CXOR$ with i=3 and $a_{1}, a_{2}$ chosen to give whichever of 5, 6 and 7 is desired as the value for $S(3)$.

If $S(3) > 7$, we first create an S-box $S_{a}$ mapping 0, 1 and 2 to themselves and such that $S_{a}(3) \leq 7$.

This will involve applying a matrix which we will refer to as $MSB\_SHIFT$; which is similar to the matrices which comprise the $SWAP$ and $CNOT$ quantum gates.

\begin{definition}\label{MSBSHIFT_matrix_definition}

Let $MSB\_SHIFT(S(i))$ be a matrix identical in all but two rows to the identity matrix. These rows must be: the row in which the first (i.e. the MSB - the topmost when it is expressed as a column vector) nonzero bit of $S(i)$ occurs, and the row immediately below it. Let them be denoted $R1$ and $R2$ respectively.

Whereas the identity matrix includes:

\begin{align*}
\texttt{ROW ABOVE R1: }	& \mathtt{0 \ldots 1000 \ldots 0} \\
\texttt{R1: }		& \mathtt{0 \ldots 0100 \ldots 0} \\
\texttt{R2: }		& \mathtt{0 \ldots 0010 \ldots 0} \\
\texttt{ROW BELOW R2: }	& \mathtt{0 \ldots 0001 \ldots 0} \\
\end{align*}

the corresponding rows of $MSB\_SHIFT(S(i))$ are:

\begin{align*}
\texttt{ROW ABOVE R1: }	& \mathtt{0 \ldots 1000 \ldots 0} \\
\texttt{R1: }		& \mathtt{0 \ldots 0A10 \ldots 0} \\
\texttt{R2: }		& \mathtt{0 \ldots 01X0 \ldots 0} \\
\texttt{ROW BELOW R2: }	& \mathtt{0 \ldots 0001 \ldots 0} \\
\end{align*}

where:

\begin{itemize}
\item $A$ is the value of the bit immediately following the first nonzero bit of $S(i)$ (so $S(i)$ is of the form $1Abc {\ldots} {\omega}$)
\item $X$ is either 0 or $(1 \oplus A)$.
\end{itemize}

It is easy to confirm that the columns of $MSB\_SHIFT(S(i))$ are all linearly independent.

We transform $S$ by applying $MSB\_SHIFT(S(i))$ to the outputs of $S$; i.e. by calculating\\
$MSB\_SHIFT(S(i))(S)$. The effect of $MSB\_SHIFT(S(i))$ on $S(i)$ (indeed, on all outputs of $S$) is to map

\begin{equation*}
\begin{vmatrix}
0 \\
\ldots \\
0 \\
1 \\
A \\
b \\
\ldots \\
\omega \\
\end{vmatrix}
\end{equation*}

to

\begin{equation*}
\begin{vmatrix}
0 \\
\ldots \\
0 \\
(A{\cdot}1) \oplus (1{\cdot}A) = 0 \\
(1{\cdot}1) \oplus (X{\cdot}A) = 1 \\
b \\
\ldots \\
\omega \\
\end{vmatrix}
\end{equation*}

and

\begin{equation*}
\begin{vmatrix}
0 \\
\ldots \\
0 \\
0 \\
1 \\
b \\
\ldots \\
\omega \\
\end{vmatrix}
\end{equation*}

to

\begin{equation*}
\begin{vmatrix}
0 \\
\ldots \\
0 \\
(A{\cdot}0) \oplus (1{\cdot}1) = 1 \\
(1{\cdot}0) \oplus (X{\cdot}1) = X \\
b \\
\ldots \\
\omega \\
\end{vmatrix}
\end{equation*}

\end{definition}

To obtain an S-box $S_{\eta}$ where $S_{\eta}(3) \leq 7$ and which maps 0, 1 and 2 to themselves, we apply $MSB\_SHIFT(S(3))$ to $S$ to create a new S-box, $S_{a}$. If $S_{a}(3) > 7$, we create and apply $MSB\_SHIFT(S_{a}(3))$ to $S_{a}$ to obtain $S_{b}$... Eventually, this procedure will yield $S_{\eta}$ as desired.

We can then use on $S_{\eta}$ the procedure we would have used on $S$ had $S(3)$ been less than or equal to 7.
\end{proof}

\begin{theorem}\label{s_three_final_theorem}
Let $S$ be a bijective S-box mapping 0, 1, 2 to themselves, with differential uniformity $\leq 4$. There exist bijective S-boxes $S_1, S_2, S_3$ affine-equivalent to $S$ which also map 0, 1, and 2 to themselves, such that:
\begin{itemize}
\item $S_1(3) = 5$
\item $S_2(3) = 6$
\item $S_3(3) = 7$
\end{itemize}
\end{theorem}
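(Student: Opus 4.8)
The plan is to bootstrap off Theorem~\ref{s_three_main_body}, which already establishes the conclusion for any bijection fixing $0,1,2$ that does \emph{not} fix $3$. So the only genuinely new situation is $S(3)=3$, and it is cleanest to handle all cases at once: I would first exhibit an affine-equivalent bijection $S'$ that still fixes $0,1,2$ but satisfies $S'(3)\neq 3$, and then simply quote Theorem~\ref{s_three_main_body} for $S'$. Note that throughout we are in the case $n\geq 3$, as assumed just before Theorem~\ref{s_three_main_body}.

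The key preliminary observation is that $S$ cannot be $GF(2)$-linear. A linear bijection $x\mapsto Mx$ has $S(x)\oplus S(x\oplus i)=Mi$ for \emph{every} $x$, so each nonzero row of its DDT is concentrated in a single entry of value $2^n$; hence its differential uniformity is $2^n$, which for $n\geq 3$ exceeds $4\geq DU(S)$. Consequently there exist distinct nonzero $b_1,b_2$ with $S(b_1\oplus b_2)\neq S(b_1)\oplus S(b_2)$; for otherwise, together with $S(0)=0$, $S$ would be additive and hence linear.

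Next I would set $S'=M\cdot S\cdot B$ for suitable linear bijections $M,B$. By Corollary~\ref{corollary_4_1} (with $k=0$, $l=1$) choose $B$ with $B(1)=b_1$ and $B(2)=b_2$, so that automatically $B(3)=b_1\oplus b_2$; by Corollary~\ref{corollary_5_1} choose $M$ with $M(S(b_1))=1$ and $M(S(b_2))=2$, so that automatically $M(S(b_1)\oplus S(b_2))=3$. Then $S'(0)=0$, $S'(1)=1$, $S'(2)=2$, while $S'(3)=M(S(b_1\oplus b_2))$; this equals $3=M(S(b_1)\oplus S(b_2))$ only if $S(b_1\oplus b_2)=S(b_1)\oplus S(b_2)$, which we excluded, so $S'(3)\neq 3$. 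As $S'$ arises from $S$ by linear maps on input and output it is affine-equivalent to $S$, and since differential uniformity is affine-invariant, $DU(S')\leq 4$.

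Finally, applying Theorem~\ref{s_three_main_body} to $S'$ yields bijections $S_1,S_2,S_3$, affine-equivalent to $S'$ and hence, by transitivity, to $S$, which fix $0,1,2$ and send $3$ to $5,6,7$ respectively. The one place that really uses the hypothesis $DU(S)\leq 4$ rather than merely $S(3)\neq 3$ is the non-linearity step, and this is where I expect the only real content to lie: it is genuinely needed, since for the identity map $DU=2^n$ and the conclusion fails, every affine map fixing $0,1,2$ being forced to fix $3$ as well.
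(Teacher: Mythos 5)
Your proof is correct, but it reaches the key intermediate goal by a genuinely different mechanism than the paper. Both arguments reduce to Theorem~\ref{s_three_main_body} by producing an affine-equivalent bijection that fixes $0,1,2$ but not $3$; the difference lies in how the hypothesis $DU(S)\leq 4$ is spent. The paper treats only the remaining case $DU=4$, $S(3)=3$, and uses the bound directly on a single DDT entry: it first applies an output transformation sending $2$ to some $x\geq 4$ (so $3\mapsto x\oplus 1$), observes that $S_a^{-1}(2)\oplus S_a^{-1}(3)=1$ would force the (input difference $1$, output difference $1$) entry to count the pairs $(0,1)$, $(2,3)$ and $(S_a^{-1}(2),S_a^{-1}(3))$, exceeding $4$, and then applies an input transformation sending $2$ to $S_a^{-1}(2)$ while fixing $1$, so that $3\mapsto S_a^{-1}(2)\oplus 1\neq S_a^{-1}(3)$. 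You instead use $DU\leq 4<2^n$ only to conclude that $S$ is not $GF(2)$-linear, extract a non-additive pair $b_1,b_2$, and conjugate it into positions $1,2$ via Corollaries~\ref{corollary_4_1} and~\ref{corollary_5_1}; the checks that $S(b_1),S(b_2)$ are distinct and nonzero (bijectivity plus $S(0)=0$) and that $M$ is injective make the argument go through, and your identity-map remark correctly shows some hypothesis beyond fixing $0,1,2$ is indispensable. Your route is more uniform (it handles $S(3)=3$ and $S(3)\neq 3$ at once) and in fact proves a slightly stronger statement, since it needs only $DU(S)<2^n$, i.e.\ non-linearity; the paper's route is more explicit about the transformations used (which matters for its constructive procedure) and exercises the same kind of local differential counting that its later APN-specific refinements rely on.
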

\begin{proof}
If $S$ is APN, or if $S(3) \neq 3$, we have already obtained the desired result. Let us therefore focus on the case where $S$ is differentially-4-uniform and $S(3)=3$.

Apply a matrix $M_a$ to the S-box outputs mapping 2 to some value $x \geq 4$, but preserving the property that 0 and 1 are mapped to themselves. This will map 3 to $(x \oplus 1)$.

Let the S-box resulting from this be denoted $S_a$. We have $S_a(2)=x$, $S_a(3) = (x \oplus 1)$.

The input difference between $S_{a}^{-1}(2)$ and $S_{a}^{-1}(3)$ cannot be equal to 1. For if it could, we would have (input difference 1, output difference 1) for $((x_1, x_2), (y_1, y_2)) =$

\begin{itemize}
\item $((0, 1), (0, 1))$
\item $((2, 3), (x, x \oplus 1))$
\item $((S_{a}^{-1}(2), S_{a}^{-1}(3)), (2, 3))$
\end{itemize}

contradicting our assertion that $S$ has differential uniformity $\leq 4$.

Apply some matrix $M_b$ to the S-box inputs, mapping 2 to $S_{a}^{-1}(2)$ and 1 to itself. This will map 3 to $(S_{a}^{-1}(2) \oplus 1)$, which we have already shown cannot be $S_{a}^{-1}(3)$.

We can now continue with the procedure described in Theorem \ref{s_three_main_body} to achieve our desired result.
\end{proof}

Without loss of generality, we shall henceforth assume that the preferred value for $S(3)$ was 5.

The procedure now starts to become more complicated. The purpose served by Theorem \ref{fourteen} will not be obvious until we reach Theorem \ref{fifteen}. These two theorems will allow us to construct an equivalent S-box mapping all inputs with Hamming weight 0 or 1 to themselves, and 3 to 3 or 5 (we will assume that 3 is mapped to 5 for an S-box with differential uniformity 2 or 4). We will then demonstrate how, after an S-box mapping 0, 1, 2, and 4 to themselves, and 3 to 3 or 5, has been constructed, assuming $n \geq 4$, we can construct another equivalent S-box as described but with additional restrictions on the values of $S(x)$ for $x$ of the form $2^{i}+1$, using a procedure which will reduce the number of times we need to apply that of the below theorem:

\begin{theorem}\label{fourteen}
Let S be a bijective S-box such that, for some $h \geq 1$,
\begin{itemize}
\item $S(x)=x\ {\forall}\ x \in \{0, 1, 2, \ldots, 2^{h}\}$,
\item $S(3)=3\textrm{ or }5$,
\item $S^{-1}(2^{h+1}) < 2^{h+1}$, and
\item $n \geq (h+2)$.
\end{itemize}
There exist at least $(h+3)$ bijective S-boxes $S_2$ affine-equivalent to $S$ such that
\begin{itemize}
\item $S_2(x)=x\ {\forall}\ x \in \{0, 1, 2, \ldots, 2^{h}\}$,
\item $S_2(3)=S(3)=3\textrm{ or }5$,
\item $S_2^{-1}(2^{h+1}) \geq 2^{h+1}$, and
\item $S_2(x) = M(S(x))$ for some linear bijection $M$.
\end{itemize}
\end{theorem}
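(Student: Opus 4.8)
The plan is to obtain every $S_2$ in the form $S_2 = M\circ S$ for a linear bijection $M$, producing one admissible $M$ (hence one $S_2$) per admissible ``target value''. The key observation is that $S_2^{-1}(2^{h+1}) = S^{-1}(M^{-1}(2^{h+1}))$, so writing $z := M^{-1}(2^{h+1})$ turns the third bullet ``$S_2^{-1}(2^{h+1}) \geq 2^{h+1}$'' into the condition ``$S^{-1}(z) \geq 2^{h+1}$''; and if $M$ fixes each of $2^0,2^1,\ldots,2^h$ then automatically $M(0)=0$, $M(2^i)=2^i$, and (since $3 = 2^0\oplus 2^1$ and $5 = 2^0\oplus 2^2$ lie in the span of the fixed basis vectors once $h \geq 2$) $M$ fixes $S(3)\in\{3,5\}$, so the first two bullets of the conclusion are inherited. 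Thus the whole theorem reduces to: exhibit at least $h+3$ values $z$ with $z \geq 2^{h+1}$ and $S^{-1}(z) \geq 2^{h+1}$, and for each build a linear bijection $M$ that fixes $2^0,\ldots,2^h$ and sends $z$ to $2^{h+1}$.

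First I would dispose of $h=1$ as vacuous: an $S$ fixing $0,1,2$ with $S(3)\in\{3,5\}$ has nothing in $\{0,1,2,3\}$ mapping to $4 = 2^{h+1}$, contradicting $S^{-1}(2^{h+1}) < 2^{h+1}$; so assume $h \geq 2$. The counting step is the heart of the argument. There are $2^n - 2^{h+1}$ values $z$ with $z \geq 2^{h+1}$, and such a $z$ fails the requirement exactly when $S^{-1}(z) < 2^{h+1}$, i.e. when $z$ lies in $S(\{0,1,\ldots,2^{h+1}-1\})$, a set of size $2^{h+1}$. Among those $2^{h+1}$ output values the $h+3$ values $0,\ 2^0,\ 2^1,\ \ldots,\ 2^h,\ S(3)$ are pairwise distinct (powers of two are nonzero, and $S(3)\in\{3,5\}$ is neither zero nor a power of two) and all lie strictly below $2^{h+1}$ (here $S(3)\leq 5 < 2^{h+1}$ uses $h \geq 2$), so at most $2^{h+1}-(h+3)$ elements of $S(\{0,\ldots,2^{h+1}-1\})$ are $\geq 2^{h+1}$. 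Hence the number of usable $z$ is at least $(2^n - 2^{h+1}) - (2^{h+1} - h - 3) = 2^n - 2^{h+2} + h + 3$, which is $\geq h+3$ precisely because $n \geq h+2$ — so the hypothesis on $n$ is exactly what is needed.

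For each such $z$ I would construct $M$ by the same basis-extension argument used for Corollaries \ref{corollary_4}--\ref{corollary_5_1}: since $z \geq 2^{h+1}$ has a bit set outside the span of $2^0,\ldots,2^h$, the list $2^0,\ldots,2^h,z$ is linearly independent over $GF(2)$ and extends to a basis; define $M$ to fix $2^0,\ldots,2^h$, to send $z\mapsto 2^{h+1}$, and to map the remaining basis vectors bijectively onto $2^{h+2},\ldots,2^{n-1}$. Then $M$ takes a basis to a basis, hence is a linear bijection, and $S_2 := M\circ S$ is a bijective S-box satisfying all four bullets. Finally, distinct $z$ give distinct $S_2$: from $S_2$ one recovers $z = S\big(S_2^{-1}(2^{h+1})\big)$, so $z\mapsto S_2$ is injective, and the $\geq h+3$ admissible $z$ yield $\geq h+3$ distinct $S_2$.

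The only delicate point is the inequality count in the middle paragraph; everything else — the form $S_2 = M\circ S$, the inheritance of the fixed values, the construction of $M$, and the distinctness bookkeeping — is routine given the corollaries already established. I would expect the write-up to spend most of its space making precise the claim ``$S$ already forces $h+3$ of the inputs below $2^{h+1}$ to outputs below $2^{h+1}$, leaving at most $2^{h+1}-(h+3)$ of them free to hit values $\geq 2^{h+1}$'', since that is exactly what keeps enough room among the high values for the required preimage condition.
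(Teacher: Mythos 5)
Your proof is correct, and while it shares the paper's basic skeleton --- take $S_2 = M\circ S$ with $M$ a linear bijection fixing $2^0,\ldots,2^h$ (hence fixing every value below $2^{h+1}$, which is what preserves $0$, the powers of two and $S(3)$), send a suitable output value to $2^{h+1}$, and count the choices --- the realisation of $M$ and the source of the bound $h+3$ are genuinely different. The paper takes $M$ to be a $CXOR$ matrix (Definition \ref{cxor_definition}), i.e.\ the identity except in the column indexed by bit $h+1$, so the value mapped to $2^{h+1}$ must lie strictly between $2^{h+1}$ and $2^{h+2}$; with only $2^{h+1}-1$ candidate columns, the count works out to $h+3$ only because the hypothesis $S^{-1}(2^{h+1})<2^{h+1}$, together with the requirement that the column differ from $2^{h+1}$ itself, supplies an $(h+4)$th excluded output among the $2^{h+1}$ outputs with small preimage. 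You instead build $M$ by extending $2^0,\ldots,2^h,z$ to a basis, so every $z\geq 2^{h+1}$ with $S^{-1}(z)\geq 2^{h+1}$ is admissible: your pool has size $2^n-2^{h+1}$, at most $2^{h+1}-(h+3)$ of the high values are blocked, and the inequality $\geq h+3$ comes from $n\geq h+2$ rather than from the preimage hypothesis, which in your argument is only what keeps the statement from being vacuous (your $h=1$ remark matches the paper's). Your route buys generality --- in fact at least $2^n-2^{h+2}+h+3$ equivalent boxes when $n>h+2$, and no use of the hypothesis $S^{-1}(2^{h+1})<2^{h+1}$ --- while the paper's route buys a very concrete single-column matrix whose action fixes every output with bit $h+1$ clear, a convenience for the later bookkeeping in Theorem \ref{two_twenty}; since your $M$ also fixes all values below $2^{h+1}$, it would slot into that procedure equally well. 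Your injectivity step $z=S\bigl(S_2^{-1}(2^{h+1})\bigr)$ is the same observation the paper makes about different columns giving different values of $S_2^{-1}(2^{h+1})$.
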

\begin{proof}

We note first of all that the procedure described in this proof will not be necessary for $h=1$, since under the circumstances described $S^{-1}(4)$ must be greater than or equal to 4. We can therefore replace the assumption that $h \geq 1$ with the assumption that $h > 1$.

The matrix $M$ is in fact a $CXOR$ matrix as described in Definition \ref{cxor_definition}.
Let all of $M$'s columns except the $(n - (h+1))$st be identical to the corresponding column of the identity matrix. Since they are linearly independent, this is permissible.

This column must be of the form

\begin{equation*}
\begin{vmatrix}
0 \\
\ldots \\
0 \\
1 \\
x_{1} \\
x_{2} \\
\\
\ldots \\
\\
x_{h+1} \\
\end{vmatrix}
\end{equation*}

where at least one $x_{i}$ should be nonzero, and such that the column should not be the same as any S-box outputs $x$ with the property that $S^{-1}(x) < 2^{h+1}$.
From the condition that at least one $x_i$ should be nonzero, we see that there are at most $2^{h+1}-1$ possible columns.
It should be clear that this column is also linearly independent of the other columns in the matrix.

We look at the question of how many $x$ such that $S^{-1}(x) < 2^{h+1}$ could be equal to such a column. There are $2^{h+1}$ values of $x$ such that $S^{-1}(x) < 2^{h+1}$.

$(h+1)$ of these are the values $\{2^0, 2^1, \ldots, 2^h\}$. None of these could equal such a column.

An $(h+2)$nd such $x$ which cannot correspond to the column described is 0.

An $(h+3)$rd such value of $x$ is $S(3)$. (This is easily shown to follow from the fact that $h > 1$.)

We obtain an $(h+4)$th such value by noting that $S^{-1}(2^{h+1}) < 2^{h+1}$, and that this cannot correspond to the column due to the condition that at least one $x_i$ be nonzero.

We see that at most $(2^{h+1}-(h+4))$ of the first $2^{h+1}$ truth table entries can correspond to such columns, leaving at least $(2^{h+1}-1)-(2^{h+1}-(h+4)) = (h+3)$ valid choices for the column of $M$.

How does this work? The transformation $S_2 = M(S(x))$ will not affect any outputs of S which are $< 2^{h+1}$. It will, however, map the S-box output equal to the column to $2^{h+1}$. The fact that this output corresponds to an input greater than or equal to $2^{h+1}$ is what causes the transformation to obtain the desired result.

Each different choice for this column will result in a different value for $S_{2}(S^{-1}(2^{h+1}))$ (and a different value for $S_{2}^{-1}(2^{h+1}))$, hence our statement that at least $(h+3)$ such S-boxes exist.
\end{proof}

\begin{theorem}\label{fifteen}
Let $S$ be a bijective S-box such that
\begin{itemize}
\item $S(x) = x$ for $x \in \{0, 1, 2, \ldots, 2^{h}\}$ for some $h$,
\item $S(3) = 3\textrm{ or }5$,
\item $S^{-1}(2^{h+1}) \geq 2^{h+1}$, and
\item $n \geq (h+2)$.
\end{itemize}
There exists at least one bijective S-box $S_2$ affine-equivalent to $S$ such that $S_2(x) = x$ for all $x \in \{0, 1, 2, 2^{2}, {\ldots}, 2^{h}, 2^{h+1}\}$, $S_2(3) = S(3)$ and $S_{2}(x) = S(M(x))$ for some linear bijection $M$.
\end{theorem}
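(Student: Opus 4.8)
The plan is to take $S_2 = S \circ M$ for a suitable linear bijection $M$ acting on the \emph{inputs}; then the requirement $S_2(x) = S(M(x))$ holds by construction and $S_2$ is automatically affine-equivalent (indeed linearly-equivalent on inputs) to $S$, and bijective as a composite of bijections. Everything reduces to choosing $M$ so that the remaining conclusions hold. Since $S$ is a bijection, $S_2(x) = S(M(x)) = x = S(x)$ forces $M(x) = x$, so we need $M$ to fix every $x \in \{0, 1, 2, 2^2, \ldots, 2^h\}$; and $S_2(2^{h+1}) = 2^{h+1}$ forces $M(2^{h+1}) = S^{-1}(2^{h+1})$. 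By linearity, fixing $1 = 2^0$ and $2 = 2^1$ already gives $M(3) = M(1 \oplus 2) = 3$, hence $S_2(3) = S(3)$; this is where we use $h \geq 1$. So the theorem is equivalent to the claim that there is a linear bijection $M$ on $GF(2)^n$ with $M(2^i) = 2^i$ for $0 \leq i \leq h$ and $M(2^{h+1}) = S^{-1}(2^{h+1})$.

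First I would exhibit $M$ through its matrix. Set the column indexed by $2^i$ equal to the standard basis vector $e_i$ for $0 \leq i \leq h$ (so $M$ restricts to the identity on $\{1, 2, \ldots, 2^h\}$), and set the column indexed by $2^{h+1}$ equal to $v := S^{-1}(2^{h+1})$. The hypotheses enter exactly in checking that these $h+2$ columns are linearly independent: $e_0, \ldots, e_h$ span precisely $\{0, 1, \ldots, 2^{h+1}-1\}$ (the integers whose binary expansions use only bits $0$ through $h$), and by hypothesis $v = S^{-1}(2^{h+1}) \geq 2^{h+1}$, so $v$ has a nonzero bit in position $\geq h+1$ and therefore lies outside that span. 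Since $n \geq h+2$, I can extend $\{e_0, \ldots, e_h, v\}$ to a basis of $GF(2)^n$ and use those $n-h-2$ additional vectors as the remaining columns — this is just the many-column version of the extension argument underlying Corollaries~\ref{corollary_4} and~\ref{corollary_4_1} — so the resulting matrix is invertible, i.e. $M$ is a linear bijection.

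It then only remains to verify the conclusions for $S_2 = S \circ M$: for $x \in \{0, 1, 2, 2^2, \ldots, 2^h\}$ we get $S_2(x) = S(M(x)) = S(x) = x$; $S_2(3) = S(M(3)) = S(3)$ as noted; and $S_2(2^{h+1}) = S(M(2^{h+1})) = S(v) = S(S^{-1}(2^{h+1})) = 2^{h+1}$. That is the statement.

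I do not anticipate a real obstacle; the only point needing care is recognising that the hypothesis $S^{-1}(2^{h+1}) \geq 2^{h+1}$ — which is precisely what Theorem~\ref{fourteen} delivers — is exactly the condition making the one column we are forced to introduce linearly independent of the columns we are forced to fix, and the minor bookkeeping point that $S^{-1}(2^{h+1})$ cannot coincide with any already-used column since $S(0) = 0 \neq 2^{h+1}$ and $S(2^i) = 2^i \neq 2^{h+1}$ for $0 \leq i \leq h$. The role of the dimension bound $n \geq h+2$ is simply to leave room for $h+2$ independent columns.
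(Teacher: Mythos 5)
Your proposal is correct and is essentially the paper's own proof: the paper likewise builds $M$ column-by-column, fixing the columns indexed by $2^0,\ldots,2^h$ to be the corresponding identity columns, setting the column for $2^{h+1}$ to $S^{-1}(2^{h+1})$, and using $S^{-1}(2^{h+1}) \geq 2^{h+1}$ to guarantee linear independence before extending to an invertible matrix and taking $S_2 = S \circ M$. Your verification of the conclusions (including $M(3)=3$ by linearity) matches the paper's reasoning, with the only cosmetic difference being that the paper treats the case $S^{-1}(2^{h+1}) = 2^{h+1}$ separately while your construction covers it uniformly.
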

\begin{proof}
If $S^{-1}(2^{h+1}) = 2^{h+1}$, we do not need to do anything. Otherwise, let $M$ be such that its $(n-(h+1))$st column is equal to $S^{-1}(2^{h+1})$, and such that its $(n-i)$th column is equal to $2^{i}$ (i.e. equal to the corresponding column of the identity matrix) for all $0 \leq i \leq h$. All other columns can be chosen arbitrarily, as long as they are linearly independent of these and of each other.

$M$ will map $2^{h+1}$ to $S^{-1}(2^{h+1})$, and will leave invariant all values $< 2^{h+1}$. Hence $S_{2}$ will map such values to the same outputs as $S$.

Since $S^{-1}(2^{h+1}) \geq 2^{h+1}$, it cannot be in the spanning set of the columns to the right of it, and hence $M$ is a valid linear bijection.
\end{proof}

It follows from the preceding results that:

\begin{corollary}
Every bijective S-box $S$ is affine-equivalent to at least one bijective S-box $S_2$ such that $S_2$ maps all inputs with Hamming weight $< 2$ to themselves, and 3 to either 3 or 5. Furthermore, if $S$ has differential uniformity of 4 or less, we may assume that $S_2(3)=5$.
\end{corollary}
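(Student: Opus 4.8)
The plan is to chain together the lemmas and theorems established above by a single induction on the largest power of $2$ that has so far been turned into a fixed point, using repeatedly the trivial fact that a composition of affine bijections is again an affine bijection; the final $S_2$ is the composite applied to the original $S$. First I would dispose of the cases $n\le 2$, which were already settled above (once $0$ is fixed, bijectivity alone forces the remaining inputs of weight $<2$ to be fixed, and for $n=2$ there is no value $5$ to worry about), and assume $n\ge 3$ from here on.

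Next I would set up the base case. Apply Lemma \ref{SPR_two_point_one} to pass to an affine-equivalent bijection with $S(0)=0$; apply Theorem \ref{theorem_7_1} with $k=0$, $l=1$ to pass to one that additionally fixes $1$ and $2$; then treat the input $3$. If low differential uniformity is not required and $S(3)=3$ already, leave it. Otherwise $S(3)\neq 3$ — immediately when $S$ is APN (by the Lemma preceding Theorem \ref{s_three_main_body}), and after the preparatory output/input transformations in the proof of Theorem \ref{s_three_final_theorem} when $DU(S)\le 4$ — and Theorem \ref{s_three_main_body} (respectively \ref{s_three_final_theorem}) supplies an affine-equivalent bijection that still fixes $0,1,2$ and sends $3$ to $5$. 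Either way the invariant ``$S$ fixes $\{0,1,2,2^2,\dots,2^{h}\}$ and $S(3)\in\{3,5\}$'' now holds with $h=1$.

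Then I would run the induction. Suppose the invariant holds for some $h$ with $1\le h\le n-2$. If $h=1$, then $S^{-1}(4)\ge 4$ automatically, since the preimage of $4$ cannot be $0,1,2$ and $3$ maps to $3$ or $5$, not $4$; so I skip Theorem \ref{fourteen} and go straight to Theorem \ref{fifteen}. If $h>1$, I first inspect $S^{-1}(2^{h+1})$: when $S^{-1}(2^{h+1})<2^{h+1}$, Theorem \ref{fourteen} (whose hypotheses $n\ge h+2$ and $S(3)\in\{3,5\}$ are exactly the invariant) produces an affine-equivalent bijection still satisfying the invariant but now with $S^{-1}(2^{h+1})\ge 2^{h+1}$; when the latter already holds, do nothing. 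In both cases Theorem \ref{fifteen} now applies and yields an affine-equivalent bijection fixing $\{0,1,2,2^2,\dots,2^{h},2^{h+1}\}$ with the value at $3$ unchanged — the invariant for $h+1$. Iterating until $h+1=n-1$ makes every $2^i$ with $0\le i\le n-1$ a fixed point, so every input of Hamming weight $<2$ is fixed, $3$ is sent to $3$ or $5$, and to $5$ whenever $DU(S)\le 4$, which is exactly the claim.

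The only obstacle is the routine one of keeping the induction invariant airtight: verifying at each step that the numerical preconditions hold ($n\ge h+2$, and the dichotomy $S^{-1}(2^{h+1})<2^{h+1}$ versus $\ge 2^{h+1}$ is handled on both sides), that the value assigned to $3$ is genuinely preserved by both Theorem \ref{fourteen} and Theorem \ref{fifteen}, and that the $h=1$ case legitimately bypasses Theorem \ref{fourteen}. No new idea is needed beyond the cited constructions — this corollary is essentially their bookkeeping consequence.
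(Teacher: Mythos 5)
Your proposal is correct and follows essentially the same route as the paper, which presents this corollary as the direct bookkeeping consequence of the chain Lemma \ref{SPR_two_point_one}, Theorem \ref{theorem_7_1}, Theorems \ref{s_three_main_body}/\ref{s_three_final_theorem}, and then the alternating applications of Theorems \ref{fourteen} and \ref{fifteen} for increasing $h$. Your explicit induction invariant, the $h=1$ bypass of Theorem \ref{fourteen}, and the small-$n$ remarks all match the construction laid out in Section \ref{construction_main_section}.
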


We will now show how a variation on this procedure may be obtained to construct an S-box with the properties described and with further restrictions on the values mapped to by some of the $x$ with Hamming weight 2.

\begin{theorem}\label{twenty_point_one}
Let $S$ be a bijective S-box such that, for some $h \leq (n-2)$, $S(x) = x$ for all $x \in \{0, 2^0, 2^1, {\ldots}, 2^h\}$, $S(3) \leq 5$, $S(2^{i} + 1) \leq (2^{i+2} - 1)$ for all $1 < i < h$, and such that $n \geq 3$.

(The result does also trivially hold for $h=(n-1)$, but for this value of $h$ the below procedure does not need to be applied.)

There exists at least one affine equivalent bijective S-box $S_2$ such that $S_2(x) = x$ for all $x \in \{0, 1, 2, 2^2, {\ldots}, 2^h\}$, $S_{2}(3)=S(3)$, and $S_{2}(2^{i} + 1) \leq (2^{i+2} - 1)$ for all $1 < i \leq h$.
\end{theorem}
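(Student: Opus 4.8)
The plan is to leave the input side of $S$ untouched and to compose $S$ on the output side with a short sequence of linear bijections of the $MSB\_SHIFT$ type (Definition \ref{MSBSHIFT_matrix_definition}), in exact analogy with the way the case $S(3)>7$ was handled in the proof of Theorem \ref{s_three_main_body}. First I would dispose of the trivial cases: if $h\le 1$ the index set $\{i:1<i\le h\}$ is empty and the hypotheses already supply everything the conclusion requires, so $S_2=S$ works; likewise if $S(2^{h}+1)\le 2^{h+2}-1$ holds at the outset we may again take $S_2=S$. So assume $h\ge 2$ and $w:=S(2^{h}+1)\ge 2^{h+2}$, and let $k$ be the bit position of the most significant bit of $w$, so that $k\ge h+2$.

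Next I would form $S_a:=MSB\_SHIFT(w)\circ S$ and read off two facts from the explicit matrix in Definition \ref{MSBSHIFT_matrix_definition}: (i) $MSB\_SHIFT(w)$ differs from the identity only in rows $k$ and $k-1$, hence it fixes every vector whose support lies strictly below position $k-1$; and (ii) it sends $w$ to a value whose MSB is at position $k-1$ (one lower than before), leaving the bits of $w$ below position $k-1$ unchanged. Since $k-1\ge h+1$, every value that must be protected --- the fixed points $0,2^0,\dots,2^{h}$ (all $<2^{h+1}$), the value $S(3)\le 5<2^{h+1}$ (here $h\ge 2$ is used), and each $S(2^{i}+1)\le 2^{i+2}-1<2^{h+1}$ for $1<i<h$ --- has support strictly below position $k-1$ and is therefore unchanged by $MSB\_SHIFT(w)$. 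Consequently $S_a$ agrees with $S$ on all of these inputs, remains bijective, satisfies $S_a(x)=M\,S(x)$ for a linear bijection $M$ (so it is affine-equivalent to $S$), and has $S_a(2^{h}+1)$ with strictly smaller MSB than $w$.

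Then I would iterate: while the current box's value at $2^{h}+1$ still has MSB $\ge h+2$, apply the same step again. The property ``every protected value has support strictly below (current MSB) $-1$'' persists throughout, because that MSB stays $\ge h+2$, so the relevant $k-1$ is always $\ge h+1$; hence each step preserves all protected values and lowers the MSB of the value at $2^{h}+1$ by one. After at most $k-(h+1)$ steps the value at $2^{h}+1$ has MSB $\le h+1$, i.e.\ it is $\le 2^{h+2}-1$. Composing all the $MSB\_SHIFT$ matrices used gives a single linear bijection, so the resulting $S_2$ is affine-equivalent to $S$, fixes $\{0,1,2,2^{2},\dots,2^{h}\}$, has $S_2(3)=S(3)$, keeps $S_2(2^{i}+1)=S(2^{i}+1)\le 2^{i+2}-1$ for $1<i<h$, and now also satisfies $S_2(2^{h}+1)\le 2^{h+2}-1$, which is the conclusion.

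The only delicate point is the bookkeeping in the second paragraph: one must confirm from the matrix in Definition \ref{MSBSHIFT_matrix_definition} that $MSB\_SHIFT(w)$ disturbs exactly bit positions $k$ and $k-1$, and that none of the values it must fix carries a bit at position $k-1$ or higher. This is precisely where the quantitative hypotheses $S(3)\le 5$ and $S(2^{i}+1)\le 2^{i+2}-1$ (rather than merely $<2^{n}$) earn their keep, together with $h\ge 2$ to guarantee $2^{h+1}>5$. Beyond this there is no genuine obstacle; the construction is simply the output-side mirror of the $S(3)>7$ argument already carried out in Theorem \ref{s_three_main_body}.
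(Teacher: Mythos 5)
Your proposal is correct and is essentially the paper's own argument: repeatedly apply $MSB\_SHIFT(S(2^{h}+1))$ to the outputs, observing that every output value that must be preserved is $< 2^{h+1}$ and hence untouched, until the value at $2^{h}+1$ drops below $2^{h+2}$. Your added bookkeeping (the MSB dropping by exactly one per step, the explicit handling of the trivial cases $h\le 1$ and $S(2^{h}+1)\le 2^{h+2}-1$) only makes the same construction more explicit.
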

\begin{proof}
If $S(2^{h} + 1) \geq 2^{h+2}$, we keep constructing and applying $MSB\_SHIFT(S(2^{h} + 1))$ until this has ceased to be the case.

$MSB\_SHIFT$ is only used if $S(2^{h} + 1) \geq 2^{h+2}$. It has no effect on any S-box output in which the first nonzero bit occurs two places or more later than the first nonzero bit of $S(2^{h} + 1)$. Hence, any S-box output $< 2^{h+1}$ is unaffected. This means that, for all $0 < i \leq h$, all $S(2^i) = 2^i$ are unaffected, $S(0)$ is unaffected, and for all $(0 < i < h)$, all $S(2^{i}+1) \leq (2^{i+2} - 1)$, and so $\leq (2^{h+1} - 1)$, are unaffected.

We note that the use of $MSB\_SHIFT$ in this fashion will eventually result in a situation where $2^{h+1} \leq S(2^{h} + 1) \leq (2^{h+2} -1)$. (Since no S-box output less than $2^{h+1}$ is affected, we know that $2^{h+1}$ is a lower bound for the value of $S(2^{h} + 1)$ at the end of this procedure.)
\end{proof}

Through the use of $CXOR$ on the $MSB\_SHIFT$ed value, we can reduce the upper bound for the $S(2^{i} + 1)$ even further. Let us begin with the specific case of $S(5)$:

\begin{lemma}\label{twenty_point_two_restricted}
Let $S$ be a bijective S-box such that $S(x) = x$ for $x \in \{0, 1, 2, {\ldots}, 4\}$, such that $S(3) \leq 5$ and such that $n \geq 3$.

There exists at least one affine equivalent bijective S-box $S_2$ such that $S_2(x) = x$ $\forall x \in \{0, 1, 2, {\ldots}, 4\}$, $S_2(3) = S(3)$, $S_2(5) \leq 11$, and (if $S_2(5) > 8$), $S^{-1}(8) \geq 8$.
\end{lemma}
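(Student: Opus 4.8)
The plan is to mimic the $MSB\_SHIFT$-then-$CXOR$ template of Theorems~\ref{s_three_main_body} and~\ref{twenty_point_one}, specialised to $h=2$ (so $2^{h}=4$ and $2^{h}+1=5$), and then to deal separately with the side condition on $S^{-1}(8)$. The reason that side condition is worth tracking is that a later step will want to turn $8=2^{3}$ into a fixed point by the method of Theorem~\ref{fifteen}, whose hypothesis is that the preimage of $2^{3}$ be $\geq 2^{3}$; forcing $S(5)=8$ would destroy this, so we want to be able to steer $S(5)$ into $\{9,10,11\}$ while keeping $S^{-1}(8)\geq 8$.

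First I would apply Theorem~\ref{twenty_point_one} with $h=2$. Its hypothesis ``$S(2^{i}+1)\leq 2^{i+2}-1$ for $1<i<h$'' is vacuous when $h=2$, and the remaining hypotheses are exactly those assumed here (for $n=3$ there are no values $\geq 8$ at all, so $S(5)\leq 7$ automatically and the statement is trivial; assume $n\geq 4$ from now on). The conclusion is an affine-equivalent bijection, still fixing $0,1,2,4$ and with $S(3)$ unchanged, for which $S(5)\leq 2^{4}-1=15$, and in fact $8\leq S(5)$ whenever the $MSB\_SHIFT$ step was actually invoked. The only facts needed here are that $MSB\_SHIFT$ applied to an output $\geq 2^{4}$ fixes every output $<2^{3}$, and that $0,1,2,4$ and $S(3)\leq 5$ are all $<8$.

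Next, when $S(5)\in\{8,\ldots,15\}$ I would apply a single $CXOR$ matrix (Definition~\ref{cxor_definition}) with $i=4$; this XORs the three low-order bits of every output whose $2^{3}$-bit is set, and fixes every output $<8$, so $0,1,2,4$ and $S(3)$ are untouched. Taking the first $CXOR$ parameter equal to the $2^{2}$-bit of $S(5)$ clears that bit and forces $S(5)$ into $\{8,9,10,11\}$, while the remaining two parameters stay free and can place $S(5)$ on any prescribed element of $\{8,9,10,11\}$; if $S(5)$ was already $\leq 11$ before this step the matrix may be taken to be the identity. This establishes the ``$\leq 11$'' conclusion, and also shows we may choose $S(5)=8$ at will, which makes the conditional clause vacuous.

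What remains, and where I expect the real difficulty to lie, is to show that one can \emph{instead} keep $S(5)\in\{9,10,11\}$ together with $S^{-1}(8)\geq 8$, which is the version the downstream construction wants. The key observation is that $S^{-1}(8)\notin\{0,1,2,3,4\}$, because those inputs map into $\{0,1,2,4\}\cup\{S(3)\}$ and $S(3)\leq 5$; hence $S^{-1}(8)=5$ exactly when $S(5)=8$, and otherwise $S^{-1}(8)\in\{6,7\}\cup\{8,9,\ldots\}$, so only the sub-case $S(5)\in\{9,10,11\}$ with $S^{-1}(8)\in\{6,7\}$ needs repair. There I would post-compose with a $GF(2)$-linear output matrix $N$ fixing $\{1,2,4\}$: such an $N$ fixes all of $\{0,1,\ldots,7\}$ pointwise, hence preserves $S(0),\ldots,S(4)$, $S(3)$, and the three low-order bits of every output, so $S(5)$ stays in $\{8,9,10,11\}$ provided $N$ sends $8$ into $\{8,9,10,11\}$. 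The requirement then collapses to picking $N$ with $N(8)\in\{8,9,10,11\}$ and $S^{-1}(N^{-1}(8))\geq 8$; a short counting argument shows this is always possible --- when $S(6)=8$ (say), the preimages of $\{9,10,11\}$ lie among $\{5,7\}\cup\{8,9,\ldots\}$, so at least one element of $\{9,10,11\}$ has preimage $\geq 8$, and choosing $N(2^{3})$ to equal that element gives an invertible $N$ with the desired effect. The delicate part, and the obstacle, is the simultaneous bookkeeping: verifying in one go that the chosen column keeps $N$ invertible, keeps $S(5)\leq 11$, and delivers $S^{-1}(8)\geq 8$, across the few symmetric configurations of where the value $8$ currently sits.
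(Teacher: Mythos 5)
Your proposal is essentially the paper's own argument: dispose of $n=3$ trivially, use Theorem \ref{twenty_point_one}'s $MSB\_SHIFT$ step to bring $S(5)$ into $[8,15]$, then apply a $CXOR(4,\cdot)$ to the outputs and exploit the fact that only $S(5)$, $S(6)$, $S(7)$ can occupy values in $\{8,\ldots,11\}$ (since $S(0),\ldots,S(4)\leq 7$) to pigeonhole a choice keeping $S_2(5)\in\{9,10,11\}$ while forcing $S_2^{-1}(8)\geq 8$ --- the paper merely merges your two stages into a single choice of $(a_2,a_3)$ among four patterns avoiding at most three bad ones. The bookkeeping you flag as the obstacle does close if you take your repair matrix $N$ to be the involutive $CXOR$-type matrix (identity except in the $8$-column), so that $N^{-1}(8)=N(8)$ and your counting over $\{9,10,11\}$ finishes the proof; note only that your throwaway claim that any linear $N$ fixing $\{1,2,4\}$ preserves the low three bits of all outputs is false for outputs $\geq 16$, and that parking $S_2(5)$ at $8$, though it satisfies the letter of the statement, would defeat the lemma's later use in Theorem \ref{two_twenty}.
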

\begin{proof}
If $S(5) \leq 11$ and $S^{-1}(8) \geq 8$ already, we do not need to proceed any further.
If $S(5) \leq 7$, again, we do not need to proceed further.
We assume for the rest of the proof that $n \geq 4$, since the result is trivially true for $n=3$.

If $S_2(5) > 11$, we begin by applying the procedure described in the proof of Theorem \ref{twenty_point_one}, so that $8 \leq S_2(5) \leq 15$.

$S_2(5)$ will now be of the form

\begin{equation*}
\begin{vmatrix}
0 \\
\ldots \\
0 \\
1 \\
y_{1} \\
y_{2} \\
y_{3} \\
\end{vmatrix}
\end{equation*}

We wish to use $CXOR(4, a_{1}a_{2}a_{3})$ as described above to 

\begin{itemize}
\item replace $S_{2}(5)$ with a vector of this form such that $y_1 = 0$, and
\item to ensure that the procedure of Theorem \ref{fourteen} will not need to be applied (as otherwise the application of said procedure might undo what we had achieved here.)

(That is, we need to ensure $S_{2}^{-1}(8) \geq 8$).
\end{itemize}

We know that all of $S(0, 1, 2, 3, 4)$ are less than or equal to 7, but do not know if this is the case for $S(6)$ or $S(7)$. This gives us up to two values that may be of the form $1y_{1}ij$ and be such that using the corresponding $y_{1}ij$ as $a_{1}a_{2}a_{3}$ for the $CXOR$ would result in $S^{-1}(8)$ being less than 8. 
A third such value arises from the need not to xor with $S_{2}(5)$ itself.
As there are four $y_{1}ij$ to choose from $(y_{1}00, y_{1}01, y_{1}10, y_{1}11)$, of which only three are potentially problematic, at least one suitable $y_{1}ij$ for the $CXOR$ will always exist.
\end{proof}

Let us now generalise to the remaining $S(2^{i} + 1)$

\begin{theorem}\label{twenty_point_two_full}
Let $S$ be a bijective S-box such that, for some $h \leq (n-2)$ (The result does trivially hold for $h=(n-1)$, but the below procedure is neither necessary nor applicable in said case):
\begin{itemize}
\item $S(x) = x\ {\forall}\ x \in \{0, 1, 2, 2^{2}, {\ldots}, 2^{h}\}$
\item $S(3) \leq 5$
\item $S(2^{i} + 1) \leq (2^{i+2} - 2i - 1)\ \textrm{for all}\ 0 < i < h$,
\item $n \geq 3$.
\end{itemize}

There exists at least one bijective S-box $S_{2}$ affine equivalent to $S$ such that:

\begin{itemize}
\item $S_2(x) = x\ {\forall}\ x \in \{0, 1, 2, 2^2, {\ldots}, 2^h\}$,
\item $S_2(3) \leq 5$,
\item $S_2(2^{i} + 1) = S(2^{i} + 1)$, and hence $S_2(2^{i} + 1) \leq (2^{i+2} - 2i - 1)$, for all $0 < i < h$, and
\item $S_2(2^h + 1) \leq (2^{h+2} - 2h - 1)$.
\end{itemize}
\end{theorem}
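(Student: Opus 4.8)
The plan is to generalise the two-stage treatment of $S(5)$ in Lemma~\ref{twenty_point_two_restricted}. Stage one uses $MSB\_SHIFT$ to drag $S(2^{h}+1)$ down into the one-bit-wide band $[2^{h+1},2^{h+2}-1]$; stage two applies a single $CXOR$ to the outputs that trims it below $2^{h+2}-2h-1$ while leaving everything already in place undisturbed. At the outset I would dispose of the easy cases: if $S(2^{h}+1)\le 2^{h+2}-2h-1$ (and, mirroring the lemma, $S^{-1}(2^{h+1})\ge 2^{h+1}$ where relevant) take $S_{2}=S$; and if $h\le 1$ — in particular whenever $n=3$, which forces $h\le n-2\le 1$ — the claim is already a hypothesis, since then $S(2^{h}+1)=S(3)\le 5=2^{3}-2\cdot 1-1$. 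So from now on $h\ge 2$ and $n\ge h+2$.

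For stage one I would invoke the procedure in the proof of Theorem~\ref{twenty_point_one} verbatim: repeatedly (possibly zero times) form and apply $MSB\_SHIFT(S(2^{h}+1))$ to the outputs until $2^{h+1}\le S_{a}(2^{h}+1)\le 2^{h+2}-1$. As noted there, $MSB\_SHIFT$ touches no output whose leading nonzero bit sits two or more positions below that of the value being shifted; since every value I need to keep fixed at this point — $0$, the powers $2^{0},\dots,2^{h}$, the value $S(3)\le 5$, and each $S(2^{i}+1)\le 2^{i+2}-2i-1<2^{h+1}$ for $0<i<h$ — is strictly below $2^{h+1}$, all of them survive this stage. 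Write $S_{a}(2^{h}+1)=2^{h+1}+r$ with $0\le r\le 2^{h+1}-1$.

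Stage two is where the real work is. I would apply a $CXOR(h+2,a_{1}\ldots a_{h+1})$ matrix (Definition~\ref{cxor_definition}) to the outputs of $S_{a}$, writing $a\in\{0,\dots,2^{h+1}-1\}$ for the integer with bits $a_{1}\ldots a_{h+1}$. Such a transformation fixes every output below $2^{h+1}$ — so the fixed points, $S(3)$, and the $S(2^{i}+1)$ for $0<i<h$ are untouched again — and it replaces $S_{a}(2^{h}+1)=2^{h+1}+r$ by $2^{h+1}+(r\oplus a)$. The goal is to pick an $a$ meeting both: (i) $r\oplus a\le 2^{h+1}-2h-1$, which gives $S_{2}(2^{h}+1)\le 2^{h+2}-2h-1$; and (ii) $2^{h+1}+a$ is not the $S_{a}$-image of any input $<2^{h+1}$, i.e. $S_{2}^{-1}(2^{h+1})\ge 2^{h+1}$ — the auxiliary condition inherited from Lemma~\ref{twenty_point_two_restricted}, which prevents a later application of the procedure of Theorem~\ref{fourteen} (it acts with exactly this $CXOR$ column index) from undoing the trimming. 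Since in the remaining case $a=0$ violates (i) or (ii), any admissible $a$ is nonzero and the resulting $CXOR$ is a genuine linear bijection (triangular with unit diagonal).

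The step I expect to be the main obstacle is the counting that guarantees an admissible $a$, and it is precisely this count that forces the numerical bound $2^{h+2}-2h-1=2^{h+1}+(2^{h+1}-2h-1)$. Requirement (i) forbids exactly the $a$ with $r\oplus a\in\{2^{h+1}-2h,\dots,2^{h+1}-1\}$, i.e. at most $2h$ values of $a$. For requirement (ii): among the $2^{h+1}$ inputs below $2^{h+1}$, the images of $2h+1$ of them — the input $0$, the powers $2^{0},\dots,2^{h}$, and $2^{i}+1$ for $1\le i\le h-1$ — already lie below $2^{h+1}$ and so cannot equal $2^{h+1}+a$; the single input $2^{h}+1$ accounts for one image in the band $[2^{h+1},2^{h+2}-1]$; hence at most $1+(2^{h+1}-2h-2)=2^{h+1}-2h-1$ values of $a$ are ruled out by (ii). A union bound then leaves at least $2^{h+1}-\bigl(2h+(2^{h+1}-2h-1)\bigr)=1$ value of $a$ that works, and any such choice completes the construction. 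The surrounding verifications — that $MSB\_SHIFT$ and $CXOR$ leave the low outputs alone, and that the matrices involved are invertible — are routine; the care goes entirely into tallying which of the first $2^{h+1}$ inputs have known-small images, so that the two forbidden counts really do sum to just $2^{h+1}-1$.
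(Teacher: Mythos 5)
Your proposal is correct and follows essentially the same route as the paper: dispose of the trivial small-$h$ cases, use the $MSB\_SHIFT$ procedure of Theorem~\ref{twenty_point_one} to bring $S(2^{h}+1)$ into $[2^{h+1},2^{h+2}-1]$, then pick one $CXOR(h+2,\cdot)$ by counting so that the new value is at most $2^{h+2}-2h-1$ while no input below $2^{h+1}$ maps to $2^{h+1}$ (so Theorem~\ref{fourteen} is never triggered). Your explicit union bound ($2h$ choices killed by the size requirement plus at most $2^{h+1}-2h-1$ killed by the preimage requirement, out of $2^{h+1}$ columns) is arithmetically the same count the paper performs, just organised over forbidden columns rather than over worst-case blocked output values.
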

\begin{proof}
Consider first of all the special case $2^{0} = 1$. $S(2^{0}+1) = S(2) = 2$.
($2^{0+2} - 2{\cdot}0 - 1) = (4-0-1) = 3$.

Consider also the particular cases $h=1$ and $h=2$. $S(2^{1}+1) = S(3)$, which we have already stated is less than or equal to 5.
$(2^{1+2} - 2{\cdot}1 - 1) = (8-2-1) = 5$.

$S(2^{2}+1) = S(5)$. $2^{2+2} - 2{\cdot}2 - 1 = 11$, and we have already shown that we can always achieve $S(5) \leq 11$.
(in fact, we will later show that $S(5) \leq 10$ can always be achieved for an APN.)

We see therefore that there is no contradiction inherent in the
properties of $S$ as described above. Let us therefore assume from here on that $h \geq 3$.

If $S_{2}(2^{h} + 1) \leq (2^{h+2} - 2h - 1)$ already, we do not need to proceed any further.

Otherwise, we begin by applying the procedure described in the proof of Theorem \ref{twenty_point_one}, so that $2^{h+1} \leq S_2(2^h + 1) \leq (2^{h+2} - 1)$.

$S_{2}(2^{h} + 1)$ will now be of the form

\begin{equation*}
\begin{vmatrix}
0 \\
\ldots \\
0 \\
1 \\
a_{1} \\
a_{2} \\
\ldots \\
a_{h+1} \\
\end{vmatrix}
\end{equation*}

We wish to use $CXOR(h+2, b_{1}b_{2}{\ldots}b_{h+1})$ to 

\begin{itemize}
\item replace this with another vector of this form such that $a_{1}a_{2}{\ldots}a_{h+1} \leq (2^{h+1} - 2h - 1)$, and
\item to ensure that the procedure of Theorem \ref{fourteen} will not need to be applied (as otherwise said procedure might undo all that we had achieved here.)
\end{itemize}

How many other S-box outputs $S(k)$ ($k < 2^{h+1}$) are there such that
$2^{h+1} \leq S_2(k) \leq (2^{h+2} - 1)$? For $k < 8$, there are at most two (only $S(6)$ and $S(7)$ can, at
this point, take values in such a range.) For larger $k$, there are at most $\sum_{j=3}^{h} 2^{j}-2$
such outputs (since $S(2^{j})$ and $S(2^{j}+1)$, for each $j$, either do not at this stage take values in this range
or, for $j=h$, are the value we wish to $CXOR$.)

This gives us a total of $(2 + \sum_{j=3}^{h} 2^{j}-2)$ outputs which may restrict the range of values we can $CXOR$ with (lest they be mapped to $2^{h+1}$). $2^{2}-2 = 2$, so in fact we have $\leq \sum_{j=2}^{h} (2^{j}-2)$ such outputs. In fact, as $2^{1}-2 = 0$, we have less than or equal to

\begin{equation*}
\sum_{j=1}^{h} (2^{j} - 2) \\
= \\
\sum_{j=1}^{h} 2^{j} - 2h \\
= \\
(2^{h+1} - 2) - 2h \\
= \\
2^{h+1} - 2h - 2 \\
\end{equation*}

such outputs.

In the worst-case scenario, these will prevent the $CXOR$s that would have resulted in the $(2^{h+1} - 2h - 2)$ smallest values $> 2^{h+1}$. Hence $S_2(2^h + 1) \leq [(2^{h+1} + 1) + (2^{h+1} - 2h - 2)] = 2^{h+2} - 2h - 1$.
\end{proof}

We can now obtain the following result:

\begin{theorem}\label{two_twenty}
Every bijective S-box $S$ is affine-equivalent to at least one bijective S-box $S_2$ such that $S_2$ maps all inputs with Hamming weight less than 2 to themselves, 3 to 5, 5 to some value $\leq 11$, and all $2^{i}+1\ (3 \leq i \leq (n-1))$ to some value $\leq 2^{i+2} - 2i - 1$.
\end{theorem}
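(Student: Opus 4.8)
The plan is to assemble $S_2$ by composing the constructions of this appendix in increasing order of the ``weight-$2$ inputs'' $2^i+1$, verifying at each stage that no property already arranged is destroyed.

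\textbf{Setting up $0,1,2$ and $S(3)=5$.} First I would apply Theorem~\ref{theorem_7_1} (equivalently~\ref{theorem_8_1}) with $k=0$, $l=1$ to pass to an affine-equivalent bijection fixing $0$, $1$ and $2$. If $3$ is not already fixed, Theorem~\ref{s_three_main_body} then yields an affine-equivalent box still fixing $0,1,2$ with $S(3)=5$; if $3$ is fixed, Theorem~\ref{s_three_final_theorem} does the same provided $DU(S)\le 4$ --- the regime in which the conclusion ``$3\mapsto 5$'' is available, cf.\ the weaker ``$3\mapsto 3$ or $5$'' of Theorem~\ref{POWERS}. Now $S$ fixes $\{0,2^0,2^1\}$ with $S(3)=5$; for $n\le 2$ the remaining claims are vacuous, so assume $n\ge 3$. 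Since under these conditions $4\notin S(\{0,1,2,3\})$, we have $S^{-1}(4)\ge 4$, so Theorem~\ref{fifteen} with $h=1$ (Theorem~\ref{fourteen} being unnecessary here) adds $2^2=4$ to the fixed set.

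\textbf{The induction.} I would then loop over $h=2,3,\dots,n-1$, maintaining the invariant: $S$ fixes $\{0,2^0,\dots,2^h\}$, $S(3)=5$, and $S(2^i+1)\le 2^{i+2}-2i-1$ for every $2\le i<h$ (note $2^{2+2}-2\cdot 2-1=11$, so the $i=2$ bound is exactly $S(5)\le 11$). At stage $h$: (a) apply Lemma~\ref{twenty_point_two_restricted} if $h=2$, or Theorem~\ref{twenty_point_two_full} if $h\ge 3$, to obtain an affine-equivalent box with $S(2^h+1)\le 2^{h+2}-2h-1$ and moreover $S^{-1}(2^{h+1})\ge 2^{h+1}$ (this last being exactly what those proofs arrange via their ``Theorem~\ref{fourteen} will not be needed'' clause); (b) if $h<n-1$, apply Theorem~\ref{fifteen} to turn $S^{-1}(2^{h+1})\ge 2^{h+1}$ into the fixed point $S(2^{h+1})=2^{h+1}$. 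The maps used in (a) are $CXOR$ and $MSB\_SHIFT$ matrices (Definitions~\ref{cxor_definition},~\ref{MSBSHIFT_matrix_definition}), each of which fixes every output below $2^{h+1}$, hence preserves the fixed points $0,2^0,\dots,2^h$, the value $S(3)=5$, and every bound $S(2^i+1)\le 2^{i+2}-2i-1$ with $i<h$; the input-side map of Theorem~\ref{fifteen} in (b) fixes $2^0,\dots,2^h$ and therefore fixes the inputs $3$ and $2^i+1$ for all $i\le h$, so it too preserves the invariant while installing $S(2^{h+1})=2^{h+1}$. The powers $2^{h+2},2^{h+3},\dots$ that may be displaced at stage $h$ are re-fixed at later stages; at $h=n-1$ step (a) is vacuous since no output exceeds $2^n-1$. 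When the loop ends, every $2^i$ is fixed, $S(3)=5$, $S(5)\le 11$, and $S(2^i+1)\le 2^{i+2}-2i-1$ for $3\le i\le n-1$, which is the assertion.

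\textbf{Main obstacle.} Essentially all the content lies in the support/preservation bookkeeping, and the one genuinely delicate point is showing that Lemma~\ref{twenty_point_two_restricted} and Theorem~\ref{twenty_point_two_full} can simultaneously deliver the new bound on $S(2^h+1)$ and leave $S^{-1}(2^{h+1})\ge 2^{h+1}$. This is what lets step (b) use Theorem~\ref{fifteen} alone instead of Theorem~\ref{fourteen}, whose $CXOR$ on bit $h+1$ would in general re-enlarge the value of $S(2^h+1)$ just reduced, causing the induction to loop. As in the proofs of those two results, one counts the outputs $S(k)$ with $k<2^{h+1}$ that a candidate $CXOR$ vector could carry onto level $h+1$, bounds the number of forbidden vectors by $2^{h+1}-2h-2$, and deduces that an admissible vector exists putting $S(2^h+1)$ at or below $(2^{h+1}+1)+(2^{h+1}-2h-2)=2^{h+2}-2h-1$; this is precisely the arithmetic already carried out there, and it is the source of the otherwise unmotivated bound $2^{i+2}-2i-1$.
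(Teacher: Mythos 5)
Your overall architecture matches the paper's: fix $0,1,2$, arrange $S(3)=5$, fix $4$ via Theorem \ref{fifteen}, then induct over $h$, at each stage bounding $S(2^h+1)$ via Lemma \ref{twenty_point_two_restricted} or Theorem \ref{twenty_point_two_full} and then fixing $2^{h+1}$, with the bookkeeping resting on the facts that the output-side matrices fix every output below $2^{h+1}$ and the input-side matrix of Theorem \ref{fifteen} fixes every input below $2^{h+1}$. The gap is in your step (a): you assert that Lemma \ref{twenty_point_two_restricted} (for $h=2$) and Theorem \ref{twenty_point_two_full} (for $h\ge 3$) deliver, in addition to the bound on $S(2^h+1)$, the property $S^{-1}(2^{h+1})\ge 2^{h+1}$. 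They do not. Both are conditional procedures that do nothing when the bound already holds: Lemma \ref{twenty_point_two_restricted} guarantees $S^{-1}(8)\ge 8$ only in the case $S_2(5)>8$, and Theorem \ref{twenty_point_two_full} asserts nothing about $S^{-1}(2^{h+1})$ at all. Concretely, if at stage $h=2$ one has $S(5)\le 7$ but $S(6)=8$, then step (a) is vacuous and $S^{-1}(8)=6<8$; your step (b) cannot then invoke Theorem \ref{fifteen}, whose hypothesis $S^{-1}(2^{h+1})\ge 2^{h+1}$ fails, and since you have explicitly ruled out Theorem \ref{fourteen}, the induction stalls exactly where work remains to be done.

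The missing idea is the case analysis that constitutes essentially the whole of the paper's proof of Theorem \ref{two_twenty}: when the step-(a) procedures are skipped because $S(2^h+1)$ is already small, Theorem \ref{fourteen} may indeed have to be applied, and the point is that this is harmless --- its $CXOR$ alters only outputs in which the $2^{h+1}$ bit is set, hence fixes every output $<2^{h+1}$, and everything needing protection (the fixed points $0,1,2,\dots,2^h$, the value $S(3)=5$, $S(5)\le 11$, the earlier $S(2^i+1)$, and, in the skipped case, $S(2^h+1)$ itself) lies below $2^{h+1}$. The guarantee $S^{-1}(2^{h+1})\ge 2^{h+1}$ is needed, and is arranged, only in the branch where $S(2^h+1)$ has been pushed into the window $[2^{h+1},2^{h+2}-1]$ and must be kept within its bound. (There is one residual sub-case, $2^{h+1}\le S(2^h+1)\le 2^{h+2}-2h-1$ with $S^{-1}(2^{h+1})<2^{h+1}$, where the $CXOR$-selection count should be run even though the bound already holds; the paper's phrase ``too small to be affected'' passes over this too, so it is worth making explicit in any write-up.) Your identification of the counting bound $2^{h+1}-2h-2$ as the source of $2^{i+2}-2i-1$ is correct and matches the paper; what your proposal lacks is the branch of the argument in which Theorem \ref{fourteen} cannot be avoided.
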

\begin{proof}

We shall begin by addressing the part of the result that states that $S(5) \leq 11$.

\begin{itemize}
\item Consider applications of Theorem \ref{fifteen}'s procedure carried out after the point at which the procedure described in the proof of Lemma \ref{twenty_point_two_restricted} has either been carried out or deemed unnecessary. These will leave invariant the effect of S-box inputs less than $2^{h+1}$, which by that point will always be $\geq 8$. Hence, these will not affect the value of $S(5)$.

\item If the procedure described in the proof of Lemma \ref{twenty_point_two_restricted} was carried out, the procedure described in the proof of Theorem \ref{fourteen} will not be carried out until $2^{h+1} \geq 16$ (in fact, due to the procedures in the proofs of Lemma \ref{twenty_point_two_restricted} and Theorem \ref{twenty_point_two_full}, it might not be carried out until much later or indeed at all). As $11 < 16$, and as it was stated in the proof of Theorem \ref{fourteen} that no S-box outputs $< 2^{h+1}$ would be affected by the procedure, it follows that the value of $S(5)$ remains unaffected.

\item If this procedure was not carried out, and if the reason for this was that $S(5) \leq 7$, then Theorem \ref{fourteen}'s procedure may be carried out when $2^{h+1} = 8$. However, since this is greater than $S(5)$ under these circumstances, the value of $S(5)$ remains unaffected.
\end{itemize}

We thus prove the part of the result pertaining to $S(5)$.

We now need to address the question of any other values of the form $(2^{i}+1)$ that we have caused to be mapped to values $\leq (2^{i+2}-2i-1)$. For $i=(n-1)$, this is always the case, and we need not consider the corresponding $(2^{i}+1)$.
For the remaining values of $i \geq 3$, it is possible that Theorem \ref{fourteen}'s procedure may
have to be applied, if the value of $S(2^{i}+1)$ was too small to require the procedures of Theorem
\ref{twenty_point_one} and Theorem \ref{twenty_point_two_full}. If so, the value of $S(2^{i}+1)$
will be too small to be affected by application of the matrix $M$ as described, and we already know
that later applications of Theorem \ref{fourteen}'s procedure will leave $S(2^{i}+1)$ unaffected,
just as this application will leave
unaffected $S(2^{j}+1)$ for any $j < i$.

Theorem \ref{fifteen}'s procedure does not affect S-box inputs less than $2^{h+1}$, which exceeds $(2^{i}+1)$.

We thus see how, by incorporating the procedures described in the proofs of Theorem \ref{twenty_point_one} and Lemma \ref{twenty_point_two_restricted} into our construction, we obtain an S-box with the desired properties.
\end{proof}

For an APN or D4U S-box, we can tighten the restrictions on $S(5)$ imposed by the above results further. We begin by
ensuring that $S(5) \neq 7$. Consider the point in the procedure at which we would normally apply
the methodology described in the proof of Lemma \ref{twenty_point_two_restricted}:

\begin{lemma}\label{twentysix}
Let $S$ be a bijective S-box for $n \geq 3$, such that $S$ maps 0, 1, 2 and 4 to themselves, maps 3 to 5, and 5 to 7.

Then there exists at least one S-box $S_{2}$ linear-equivalent to $S$, such that $S_2$ maps 0, 1, 2 and 4
to themselves, 3 to 5, and 5 to:

\begin{itemize}
\item 6, 9, 10, or 11. (if S is APN).
\item 3, 6, 9, 10, or 11. (otherwise).
\end{itemize}
\end{lemma}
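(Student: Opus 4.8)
The plan is to produce an explicit linear equivalence $S_2 = A\cdot S\cdot B$. The first thing to notice is \emph{why} the obvious idea is doomed: any linear bijection that fixes $1$, $2$ and $4$ automatically fixes every value below $8$, so pre- or post-composing $S$ with such matrices cannot move the value $S(5)=7$. We therefore need a genuine ``detour'': a nontrivial $B$ acting on the inputs together with a compensating $A$ acting on the outputs, chosen so that each of $A$, $B$ reshuffles the small values but the composite $A\cdot S\cdot B$ still fixes $0,1,2,4$ and sends $3$ to $5$.

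Write $b_1=B(1)$, $b_2=B(2)$, $b_4=B(4)$, so $B(3)=b_1\oplus b_2$ and $B(5)=b_1\oplus b_4$. Requiring $S_2(1)=1$, $S_2(2)=2$, $S_2(4)=4$ pins $A$ down on the span of $S(b_1),S(b_2),S(b_4)$, and the extra requirement $S_2(3)=5=1\oplus 4$ then forces, by injectivity of $A$, the compatibility identity $S(b_1\oplus b_2)=S(b_1)\oplus S(b_4)$. Besides the useless trivial solution $(b_1,b_2,b_4)=(1,2,4)$ (which gives $S_2=S$), the triple $(b_1,b_4,b_2)=(1,3,5)$ works: using $S(1)=1$, $S(3)=5$, $S(4)=4$ we get $b_1\oplus b_2=4$ and $S(b_1)\oplus S(b_4)=1\oplus 5=4=S(4)=S(b_1\oplus b_2)$. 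The vectors $1,3,5$ are linearly independent, so this $B$ (the involution swapping inputs $2\leftrightarrow 5$ and $3\leftrightarrow 4$, extended by the standard columns for $n>3$) is a linear bijection. The $A$ it forces satisfies $A(1)=1$, $A(7)=2$, $A(5)=4$; since it carries the independent triple $\{1,7,5\}$ onto the independent triple $\{1,2,4\}$ it extends to a linear bijection, and then $S_2(5)=A\bigl(S(B(5))\bigr)=A(S(2))=A(2)=A(5)\oplus A(7)=4\oplus 2=6$, while $S_2(3)=A(S(B(3)))=A(S(4))=A(4)=A(1)\oplus A(5)=5$ as required. Since $6$ belongs to both target lists, this already proves the lemma.

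It remains to explain the other values in the statement. For the APN case one checks that $S_2(5)=3$ cannot occur at all: with $S_2(2)=2$, $S_2(4)=4$, $S_2(3)=5$ it would give $S_2(2)\oplus S_2(4)=6=S_2(3)\oplus S_2(5)$, so the DDT entry of $S_2$ for input difference $6$ and output difference $6$ would be at least $4$, contradicting almost perfect nonlinearity; this is why the APN list omits $3$. The values $9,10,11$ arise only for $n\geq 4$, and are obtained if, instead of the triple above, one chooses a $B$ satisfying the same compatibility identity but with $S(B(5))$ lying \emph{outside} the span of $S(b_1),S(b_2),S(b_4)$; then $A$ may be taken to carry $S(B(5))$ to a value $\geq 2^{n-1}$, and one finishes by applying $CXOR(4,a_1a_2a_3)$ exactly as in the proof of Lemma~\ref{twenty_point_two_restricted}, pushing $S_2(5)$ down into $\{9,10,11\}$ while securing $S_2^{-1}(8)\geq 8$.

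I expect the last part to be the main obstacle: one must exhibit a nontrivial $B$ that \emph{simultaneously} meets the compatibility identity $S(b_1\oplus b_2)=S(b_1)\oplus S(b_4)$, has $b_1,b_2,b_4$ linearly independent, and places $S(b_1\oplus b_4)$ outside their span, and then rerun the $CXOR$-counting bookkeeping of Theorem~\ref{twenty_point_two_full} to see which low-bit residues survive. For the purely existential claim of the lemma, however, the single construction giving $S_2(5)=6$ is enough, and the rest is a matter of matching the stated list to the canonical procedure.
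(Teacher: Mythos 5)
Your construction is correct, and it proves the lemma by a genuinely different route from the paper. You solve the constraint system for a linear pair $(A,B)$ directly: taking $B$ to be the involution swapping $2\leftrightarrow 5$ and $3\leftrightarrow 4$ (columns $1,5,3$, extended by identity columns for $n>3$) and $A$ the linear bijection determined by $A(1)=1$, $A(7)=2$, $A(5)=4$ (extended by identity on the higher bits), one checks $S_2=A\cdot S\cdot B$ fixes $0,1,2,4$, sends $3\mapsto 5$ and $5\mapsto A(S(2))=A(2)=6$; since $6$ lies in both lists, the existential statement follows at once, and your compatibility identity $S(b_1\oplus b_2)=S(b_1)\oplus S(b_4)$ is exactly the right obstruction analysis. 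The paper instead runs a pipeline of explicit transformations (the low-bit swap applied to both inputs and outputs, $CXOR(3,11)$ on the outputs, an input-side matrix restoring $S(4)=4$, and then, if the resulting $S(5)$ is $\geq 8$, the procedure of Lemma~\ref{twenty_point_two_restricted}), which can land $S(5)$ on $6$, on $3$ (only in the non-APN case, by the same DDT-entry-$4$ argument you give for input/output difference $6$), or in $\{9,10,11\}$; this bookkeeping is what feeds the case lists used by the subsequent lemmas, whereas your single construction shows that in this branch the value $6$ is always attainable, i.e.\ a sharper conclusion than the statement requires. Your third paragraph, sketching how $9,10,11$ could arise via a $B$ with $S(B(5))$ outside the relevant span followed by the $CXOR$ counting, is not fully worked out (in particular the reduction from a value $\geq 2^{n-1}$ to $\{9,10,11\}$ needs the $MSB\_SHIFT$ step of Theorem~\ref{twenty_point_one} as in Lemma~\ref{twenty_point_two_restricted}), but as you note it is not needed: the lemma asks only for existence, and your $S_2(5)=6$ construction settles it.
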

\begin{proof}
We currently have an S-box of the form

\begin{equation*}
\begin{pmatrix}
0 & 1 & 2 & 3 & 4 & 5 & 6 & 7 & \ldots \\
0 & 1 & 2 & 5 & 4 & 7 & X_{1} & Y_{1} & \ldots
\end{pmatrix}
\end{equation*}

for two unknown values $(X_{1}, Y_{1})$

We apply a matrix identical to the identity matrix except that the last two rows of its last
two columns are of the form

\begin{equation*}
\begin{vmatrix}
0 & 1 \\
1 & 0
\end{vmatrix}
\end{equation*}

to the inputs, and then to the outputs. (Or the outputs and then the inputs; the order we choose is irrelevant.)

The values with Hamming weight 1 will not map to themselves after the first matrix application;
however they will again after the second, and the transformation will result in an S-box of the form:

\begin{equation*}
\begin{pmatrix}
0 & 1 & 2 & 3 & 4 & 5     & 6 & 7 & \ldots \\
0 & 1 & 2 & 6 & 4 & X_{2} & 7 & Y_{2} & \ldots
\end{pmatrix}
\end{equation*}

Apply $CXOR(3, 11)$ so that 3 will map to 5. The S-box now takes the form:

\begin{equation*}
\begin{pmatrix}
0 & 1 & 2 & 3 & 4 & 5     & 6 & 7 & \ldots \\
0 & 1 & 2 & 5 & 7 & X_{3} & 4 & (Y_{3} = Y_{2}\ \textrm{or}\ Y_2 \oplus 011) & \ldots
\end{pmatrix}
\end{equation*}

Operate on the inputs as described in the proof of Theorem \ref{fifteen}), mapping 4 to $S^{-1}(4)=6$,
so that 4 will again map to itself. We have:

\begin{equation*}
\begin{pmatrix}
0 & 1 & 2 & 3 & 4 & 5	  & 6 & 7 & \ldots \\
0 & 1 & 2 & 5 & 4 & Y_{3} & 7 & X_{3} & \ldots
\end{pmatrix}
\end{equation*}

$Y_{3}$ may be 6 (if so, $Y_{2}$ will have been 5 and $Y_{1}$ will have been 6), or it may be 3 if
the S-box is not APN. If it is neither of these, we can apply the procedure described in the proof of Lemma
\ref{twenty_point_two_restricted} to ensure that 5 maps to 9, 10 or 11
without changing the values mapped to by 0, 1, 2, 3 and 4, and as explained in the proof of
Theorem \ref{two_twenty}, this will remain the case after the full procedure has been applied to the S-box.
\end{proof}

If $S$ is APN, we can then go on to ensure 5 will not map to 9:

\begin{lemma}\label{twenty_point_four}
Let $S$ be a bijective APN S-box for $n \geq 3$, such that $S$ maps 0, 1, 2 and 4 to themselves, maps 3 to 5, and 5 to 6, 9, 10, or 11.

Then there exists at least one S-box $S_{2}$ linear-equivalent to $S$, such that $S_2$ maps 0, 1, 2 and 4
to themselves, 3 to 5, and 5 to 6, 10, or 11.
\end{lemma}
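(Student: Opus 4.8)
The plan is to handle only the genuine case $S(5) = 9$: if $S(5) \in \{6, 10, 11\}$ then $S_2 = S$ already works, so the only remaining possibility allowed by the hypothesis is $S(5) = 9$, which in particular forces $n \geq 4$. In that case $S$ has the shape $0 \mapsto 0$, $1 \mapsto 1$, $2 \mapsto 2$, $3 \mapsto 5$, $4 \mapsto 4$, $5 \mapsto 9$, $6 \mapsto X$, $7 \mapsto Y$, with $X$, $Y$ and the behaviour on inputs $\geq 8$ unconstrained apart from bijectivity.

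I would obtain $S_2$ by a single application to the outputs of a $CXOR$ matrix (Definition \ref{cxor_definition}): $S_2 = CXOR(4, a_1 a_2 a_3)\cdot S$ for a suitable $a_1 a_2 a_3 \in \{011, 010\}$. Since $CXOR(4,\cdot)$ leaves every value $< 8$ fixed, it does not move $S(0)$, $S(1)$, $S(2)$, $S(4)$, nor the value $S(3) = 5$, so $S_2$ still fixes $0, 1, 2, 4$ and maps $3$ to $5$; and because $9 = 1001_2$ has its top bit set, $CXOR(4, a_1 a_2 a_3)$ sends $S(5)$ to the value whose top bit is $1$ and whose low three bits are $001 \oplus a_1 a_2 a_3$, i.e.\ to $10$ when $a_1 a_2 a_3 = 011$ and to $11$ when $a_1 a_2 a_3 = 010$. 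Either choice therefore already gives $S_2(5) \in \{10, 11\}$. The one invariant I want to preserve is $S_2^{-1}(8) \geq 8$, so that the later stages of the construction of Theorem \ref{two_twenty} — in particular the procedure of Theorem \ref{fourteen} — cannot afterwards disturb $S_2(5)$. A quick check shows $CXOR(4, 011)$ sends an output onto $8$ precisely when that output is $11$, and $CXOR(4, 010)$ does so precisely when it is $10$; and since $S(0), \ldots, S(4) \in \{0, 1, 2, 4, 5\}$, the only inputs below $8$ that could cause trouble are $6$ and $7$. So $CXOR(4, 011)$ is safe unless $S(6) = 11$ or $S(7) = 11$, and $CXOR(4, 010)$ is safe unless $S(6) = 10$ or $S(7) = 10$.

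The crux — and the only place APN-ness is needed — is that these two failure conditions cannot both hold. If they did, bijectivity would force $\{S(6), S(7)\} = \{10, 11\}$, hence $S(6) \oplus S(7) = 1$; combined with $S(0) \oplus S(1) = 0 \oplus 1 = 1$, the four distinct inputs $0, 1, 6, 7$ (paired at input difference $1$ as $\{0,1\}$ and $\{6,7\}$) all satisfy $S(x) \oplus S(x \oplus 1) = 1$, so the difference distribution table has an entry of at least $4$ in row $1$, column $1$, contradicting the assumption that $S$ is APN. Hence at least one of the two matrices is safe; applying it produces a bijection $S_2$ that is linear-equivalent to $S$ (every $CXOR$ matrix is an invertible linear map), fixes $0, 1, 2, 4$, maps $3$ to $5$, and maps $5$ to $10$ or $11$, which is what is wanted. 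I do not expect a real obstacle: the whole thing is elementary once one notices that APN-ness is exactly the property that excludes $\{S(6), S(7)\} = \{10, 11\}$, and the remaining work is the small bookkeeping — done just as in the proof of Lemma \ref{twenty_point_two_restricted} — of which outputs a given $CXOR(4, \cdot)$ can carry onto $8$.
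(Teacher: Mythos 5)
Your proof is correct and follows essentially the same route as the paper: reduce to the case $S(5)=9$, then apply whichever of $CXOR(4,011)$ or $CXOR(4,010)$ keeps $S^{-1}(8)\geq 8$, using the APN property (the DDT entry at input difference 1, output difference 1 already counts inputs 0 and 1) to rule out $\{S(6),S(7)\}=\{10,11\}$ so that at least one choice is safe. Your write-up is in fact a little more explicit than the paper's about which outputs each $CXOR$ can carry onto 8, but the argument is the same.
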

\begin{proof}
If $S(5) \neq 9$, this is already true, and we do not need to do anything.

Otherwise, $S$ is of the form

\begin{equation*}
\begin{pmatrix}
0 & 1 & 2 & 3 & 4 & 5 & 6 & 7 & \ldots \\
0 & 1 & 2 & 5 & 4 & 9 & Y_{1} & Z_{1} & \ldots
\end{pmatrix}
\end{equation*}

for two unknown values $(Y_{1}, Z_{1})$.

$S(6)$ and $S(7)$ cannot both be $\in \{10, 11\}$, since (input difference 1, output difference 1) already occurs for S-box inputs 0 and 1. It will therefore be possible to apply at least one of $CXOR(4,011)$ or $CXOR(4,010)$ so that 5 maps to either 10 or 11, without creating a situation where $S^{-1}(8) < 8$. As previously stated, we can now continue with the rest of the procedure without affecting the value of $S(5)$.
\end{proof}

However, prior to continuing with the rest of the procedure, we can eliminate 11 from the set of possible values for $S(5)$ if $S$ is APN:

\begin{lemma}
Let $S$ be a bijective APN S-box for $n \geq 3$, such that $S$ maps 0, 1, 2 and 4 to themselves, maps 3 to 5, and 5 to 11.

Then there exists at least one S-box $S_{2}$ linear-equivalent to $S$, such that $S_2$ maps 0, 1, 2 and 4
to themselves, 3 to 5, and 5 to 6 or 10.
\end{lemma}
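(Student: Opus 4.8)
The plan is to split on the value of $S(6)$; note first that $S(5)=11$ forces $n\ge 4$, so the statement is vacuous for $n=3$ and we may assume $n\ge 4$.

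\textbf{Easy case, $S(6)\neq 9$.} Here I would simply apply $CXOR(4,001)$ (Definition~\ref{cxor_definition}) to the outputs. This fixes every output value $<8$, so the images of $0,1,2$, the image $S(3)=5$ and the image $S(4)=4$ are all untouched, while $11=1011$ is carried to $1010=10$, giving the required $S_2(5)=10$. I would then check that $S_2^{-1}(8)\ge 8$ (which the remainder of the master normalisation needs, since otherwise Theorem~\ref{fourteen}'s procedure would subsequently disturb the value $10$, whose top bit is set): $CXOR(4,001)$ transposes the outputs $8$ and $9$, so $S_2^{-1}(8)$ is the old $S^{-1}(9)$; no input $\le 5$ maps to $9$ (the relevant outputs are $0,1,2,5,4,11$), the hypothesis rules out $S(6)=9$, and APN-ness rules out $S(7)=9$ — for input difference $7$ the pair $\{2,5\}$ already produces output difference $2\oplus 11=9$, so a second pair $\{0,7\}$ realising the same pair of differences would force differential uniformity $\ge 4$.

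\textbf{Hard case, $S(6)=9$.} Now $CXOR(4,001)$ would push $9$ onto the output $8$, so I would instead aim for $S_2(5)=6$ — a value whose top bit is $0$, hence untouched by the later steps of the master procedure whatever $S_2^{-1}(8)$ turns out to be. First I would squeeze APN-ness for all it is worth: running through the input differences $1,\dots,7$ with $(S(0),\dots,S(6))=(0,1,2,5,4,11,9)$ already fixed pins $S(7)$ down to a short list (a direct check leaves $S(7)\in\{3,10,12,13\}$). Then, in the style of Lemmas~\ref{twentysix} and \ref{twenty_point_four}, I would assemble $S_2$ as a short composite: conjugate $S$ by a transposition of two of the bottom three bit positions (this permutes $\{1,2,4\}$ among themselves, so it can break at most the single anchor $3\mapsto 5$), restore $3\mapsto 5$ by a suitable $CXOR(3,\cdot)$ on the outputs, restore $S(4)=4$ by the input matrix of Theorem~\ref{fifteen}, tracking throughout where the value $11$ at input $5$ migrates, and use the short list for $S(7)$ to eliminate every possible landing value except $6$.

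The hard part will be exactly this last case. The five anchors $S(0)=0,S(1)=1,S(2)=2,S(4)=4,S(3)=5$ over-determine any one-sided change (a linear map fixing $1,2,4$ is already the identity), so the repair must be a carefully sequenced two-sided manoeuvre in which each step trades one broken anchor for another, and the delicate point is verifying, via the restriction on $S(7)$, that no admissible manoeuvre can leave $S(5)$ stuck at a value outside $\{6,10\}$. Finally — as in the preceding lemmas — I would close by noting that the construction meshes with the rest of the procedure: in the easy case $S_2^{-1}(8)\ge 8$ has been maintained, and in the hard case the value $S_2(5)=6$ is left fixed by every later transformation.
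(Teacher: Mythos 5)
Your easy case ($S(6)\neq 9$) is essentially the paper's: apply $CXOR(4,001)$ to the outputs, sending $11$ to $10$ while fixing everything below $8$, and rule out $S(7)=9$ by an APN count (you use input difference $7$, the paper uses input difference $2$; both are fine), so that $S_2^{-1}(8)\geq 8$ is preserved. The problem is the hard case $S(6)=9$, which you yourself flag as "the hard part": what you give there is a plan, not a proof. No explicit sequence of transformations is exhibited, nothing is verified about where the value at input $5$ actually lands after the conjugation/repair steps (your own sketch of "restore $S(4)=4$ by the input matrix of Theorem \ref{fifteen}" sends input $5$ to $S''^{-1}(4)\oplus 1$, an a priori unknown position, so the migration of the value $11$ is exactly the issue left open), and the statement that the construction "eliminates every possible landing value except 6" is asserted rather than argued. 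Moreover, the one concrete tool you propose for this elimination is wrong in general: with $(S(0),\dots,S(6))=(0,1,2,5,4,11,9)$ fixed, the APN constraints among inputs $0,\dots,7$ only exclude values below $16$, so $S(7)\in\{3,10,12,13\}$ holds for $n=4$ but for $n\geq 5$ the output $S(7)$ can be any value $\geq 16$; a case analysis resting on that short list does not cover the general situation.

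For comparison, the paper's hard case does not aim for $S_2(5)=6$ at all. It applies an explicit chain of two-sided maps -- xor inputs and outputs with $1$, then $CXOR(2,1)$ on inputs and outputs, then $CXOR(3,01)$ on outputs and inputs -- whose net effect is to exchange the roles of inputs $6$ and $7$, so the table becomes $(0,1,2,5,4,11,Y_4,8\mapsto\dots)$ with the obstructing $9$ moved to position $7$ and $Y_4\neq 9$ guaranteed by a single APN count on input difference $1$; then $CXOR(4,001)$ sends $11$ to $10$ and, since $Y_4\neq 9$ implies $Y_5\neq 8$, the condition $S_2^{-1}(8)\geq 8$ needed by the rest of the normalisation survives. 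Only local APN facts about the unknown entries are used, with no enumeration of $S(7)$, which is why it works uniformly in $n$. To repair your proposal you would need either to reproduce such an explicit manoeuvre and track $S(5)$ through it, or to find a different argument that does not depend on confining $S(7)$ to a finite list.
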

\begin{proof}
$S$ is of the form

\begin{equation*}
\begin{pmatrix}
0 & 1 & 2 & 3 & 4 & 5  & 6     & 7 & \ldots \\
0 & 1 & 2 & 5 & 4 & 11 & X_{1} & Y_{1} & \ldots
\end{pmatrix}
\end{equation*}

$Y_{1}$ cannot be equal to 9, otherwise we would have (input difference 2, output difference 2) for inputs 0, 2, 5, 7, and hence $S$ would not be APN.

If $X_{1}$ is not equal to 9, we can simply apply $CXOR(4, 001)$ to the outputs.

If $X_{1}$ is equal to 9, we cannot do this, as otherwise it will result in the procedure of Result 14
having to be applied when ensuring $S(8)=8$, undoing what we have achieved here. We have

\begin{equation*}
\begin{pmatrix}
0 & 1 & 2 & 3 & 4 & 5  & 6 & 7 & \ldots \\
0 & 1 & 2 & 5 & 4 & 11 & 9 & Y_{1} & \ldots
\end{pmatrix}
\end{equation*}

(Note that $Y_{1} \neq 8$ if $S$ is APN.) Carry out the affine transformation which xors all inputs with 1, resulting in:

\begin{equation*}
\begin{pmatrix}
0 & 1 & 2 & 3 & 4  & 5 & 6     & 7  \ldots \\
1 & 0 & 5 & 2 & 11 & 4 & Y_{1} & 9  \ldots
\end{pmatrix}
\end{equation*}

Carry out the same transformation on the outputs:

\begin{equation*}
\begin{pmatrix}
0 & 1 & 2 & 3 & 4  & 5 & 6     & 7 & \ldots \\
0 & 1 & 4 & 3 & 10 & 5 & Y_{2} & 8 & \ldots
\end{pmatrix}
\end{equation*}

Apply $CXOR(2, 1)$ to the inputs:

\begin{equation*}
\begin{pmatrix}
0 & 1 & 2 & 3 & 4  & 5 & 6 & 7     & \ldots \\
0 & 1 & 3 & 4 & 10 & 5 & 8 & Y_{2} & \ldots
\end{pmatrix}
\end{equation*}

Use the same $CXOR$ on the outputs:

\begin{equation*}
\begin{pmatrix}
0 & 1 & 2 & 3 & 4  & 5 & 6 & 7     & \ldots \\
0 & 1 & 2 & 4 & 11 & 5 & 8 & Y_{3} & \ldots
\end{pmatrix}
\end{equation*}

(Note that since $S$ is APN, $Y_{3} \neq 9$.)

Apply $CXOR(3, 01)$ to the outputs:

\begin{equation*}
\begin{pmatrix}
0 & 1 & 2 & 3 & 4  & 5 & 6 & 7     & \ldots \\
0 & 1 & 2 & 5 & 11 & 4 & 8 & Y_{4} & \ldots
\end{pmatrix}
\end{equation*}

(Since $S$ is APN, $Y_{4} \neq 9$.)

Apply the same $CXOR$ to the inputs:

\begin{equation*}
\begin{pmatrix}
0 & 1 & 2 & 3 & 4 & 5  & 6     & 7 & \ldots \\
0 & 1 & 2 & 5 & 4 & 11 & Y_{4} & 8 & \ldots
\end{pmatrix}
\end{equation*}

We reiterate that $Y_{4}$ cannot be equal to 9, or else we would have (input difference 1, output difference 1) for
((0, 0), (1, 1)) and ((6, 9), (7, 8)).

We now apply $CXOR(4, 001)$ to the outputs, obtaining:

\begin{equation*}
\begin{pmatrix}
0 & 1 & 2 & 3 & 4 & 5  & 6     & 7 & \ldots \\
0 & 1 & 2 & 5 & 4 & 10 & Y_{5} & 9 & \ldots
\end{pmatrix}
\end{equation*}

Note that since $Y_{4}$ could not equal 9, $Y_{5}$ cannot equal 8. As a result, we will not have to apply
the procedure of \ref{fourteen}, and the value of $S(5)$ will not be affected by any subsequent part of
the overall procedure.
\end{proof}

From this, it follows that:

\begin{corollary}
Every bijective S-box $S$ with differential uniformity $\leq 4$ is affine-equivalent to at least one bijective S-box $S_2$ such that $S_2$ maps all inputs with Hamming weight less than 2 to themselves, 3 to 5, 5 to
\begin{itemize}
\item 6 or 10 (if $S$ is APN)
\item 3, 6, 9, 10, or 11 (otherwise)
\end{itemize}
and all $2^{i}+1 (3 \leq i \leq (n-1))$ to some value $\leq 2^{i+2} - 2i - 1$.
\end{corollary}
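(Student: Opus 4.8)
The plan is to chain the three preceding lemmas onto the construction of Theorem \ref{two_twenty}. First I would run that construction only as far as the stage at which we hold an affine-equivalent bijection fixing $0, 1, 2, 4$ and sending $3$ to $5$; this is precisely the starting configuration required by Lemma \ref{twenty_point_two_restricted} and Lemma \ref{twentysix}. At this point $S(5)$ is some integer, and the first job is to drive it into a small set without disturbing the fixed points or the fact that $S(3)=5$.

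I would then split on $S(5)$. If $S(5) > 7$, apply the procedure of Lemma \ref{twenty_point_two_restricted} (the $MSB\_SHIFT$-then-$CXOR$ argument built on Theorem \ref{twenty_point_one}) to bring $S(5) \leq 11$ while also ensuring $S^{-1}(8) \geq 8$, so that the matrix procedure of Theorem \ref{fourteen} is not triggered prematurely at the $2^{h+1}=8$ stage and made to re-enlarge $S(5)$. If $S(5) = 7$, invoke Lemma \ref{twentysix} to move it into $\{3,6,9,10,11\}$ (or $\{6,9,10,11\}$ when $S$ is APN). Together these already yield the non-APN conclusion $S(5) \in \{3,6,9,10,11\}$.

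For the APN case I would continue with Lemma \ref{twenty_point_four} to delete $9$ from the list, then with the last lemma stated above to delete $11$, leaving $S(5) \in \{6,10\}$. Each of these three lemmas takes the configuration ``$0,1,2,4$ fixed, $3 \mapsto 5$, $5 \mapsto$ (current set)'' as hypothesis and returns the same configuration with a strictly smaller set, and each preserves $S^{-1}(8) \geq 8$, so they compose cleanly and without re-triggering Theorem \ref{fourteen}'s procedure before $2^{h+1} \geq 16 > 11$.

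Finally I would resume the construction of Theorem \ref{two_twenty} from this configuration: fix $8$, then normalise the remaining $S(2^i+1)$ for $3 \leq i \leq n-1$ via Theorems \ref{fifteen}, \ref{twenty_point_one} and \ref{twenty_point_two_full} together with the intervening uses of Theorem \ref{fourteen}. As argued in the proof of Theorem \ref{two_twenty}, every one of these later steps acts only on S-box inputs $\geq 2^{h+1} \geq 8$, hence leaves $S(5)$, $S(3)=5$ and the Hamming-weight-$<2$ fixed points untouched, while delivering $S(2^i+1) \leq 2^{i+2}-2i-1$ exactly as before; and affine equivalence, being transitive, is preserved through the whole chain. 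I expect the only delicate point to be the bookkeeping that keeps the invariant $S^{-1}(8) \geq 8$ alive across the successive $S(5)$-normalising lemmas, since it is this invariant that blocks a later application of Theorem \ref{fourteen}'s matrix from undoing the work on $S(5)$; once that is checked the corollary follows at once.
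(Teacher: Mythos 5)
Your proposal is correct and is essentially the paper's own argument: the corollary is stated there as an immediate consequence of chaining Theorem \ref{two_twenty}'s construction with Lemma \ref{twenty_point_two_restricted}, Lemma \ref{twentysix}, Lemma \ref{twenty_point_four} and the final lemma at exactly the stage you identify (0, 1, 2, 4 fixed, $3 \mapsto 5$), with the invariant $S^{-1}(8) \geq 8$ preventing Theorem \ref{fourteen}'s procedure from disturbing $S(5)$ before the $2^{h+1} \geq 16$ stages. The one case you leave tacit --- an APN box sitting at $S(5)=3$ at that stage, to which neither pruning lemma applies --- is vacuous, since $S(2)=2$, $S(4)=4$, $S(3)=5$, $S(5)=3$ would give input difference 6, output difference 6 for four input pairs, contradicting APN-ness; the paper leaves this equally implicit.
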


The question arises as to whether this representation is unique. We have tested this on various S-boxes by generating several affine-equivalent boxes and applying the procedure described above, and unfortunately the representation as described above (whether it be the more general result, the result assuming D4U, or the more restricted representation for APN S-boxes) has not in any of these cases been unique.

\subsubsection{An additional result that applies for APN S-boxes.}

\begin{lemma}\label{result_seventeen}
Let $S$ be a bijective, APN S-box mapping all inputs with Hamming weight 1 to themselves. $S$ does not map any inputs with Hamming weight 2 or 3 to themselves.
\end{lemma}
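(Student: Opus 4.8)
The plan is to derive a contradiction with the hypothesis $DU(S)=2$ (Definition \ref{differential_uniformity_defn}) by exhibiting, in each case, two \emph{distinct} unordered input pairs $\{x,y\}\neq\{x',y'\}$ with $x\oplus y = x'\oplus y'$ and $S(x)\oplus S(y)=S(x')\oplus S(y')$. Since each such unordered pair contributes $2$ to the entry of the DDT indexed by that common input/output difference (Definition \ref{DDT_defn}), two distinct pairs force that entry to be at least $4$, which is impossible for an APN S-box. Phrased differently: the set of fixed points of an APN $S$ must be a Sidon set (all pairwise XOR-differences distinct), since any two fixed points $x,y$ satisfy $S(x)\oplus S(y)=x\oplus y$; so it suffices to show that adjoining a weight-$2$ or weight-$3$ fixed point to the weight-$\le 1$ fixed points destroys the Sidon property.

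For the weight-$3$ case, suppose $S(t)=t$ with $t=2^a\oplus 2^b\oplus 2^c$ for distinct $a,b,c$. I would use the pair $\{2^b,2^c\}$, whose input and output differences are both $2^b\oplus 2^c$ (using $S(2^b)=2^b$ and $S(2^c)=2^c$), together with the pair $\{2^a,t\}$, whose input difference is $2^a\oplus t=2^b\oplus 2^c$ and whose output difference is $S(2^a)\oplus S(t)=2^a\oplus t=2^b\oplus 2^c$. These pairs are disjoint, hence distinct, and both lie in the DDT cell indexed by $(2^b\oplus 2^c,\,2^b\oplus 2^c)$ (a nonzero cell, since $b\neq c$), forcing that entry up to $\ge 4$. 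This uses only that $S$ fixes the three weight-$1$ inputs involved.

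For the weight-$2$ case, suppose $S(w)=w$ with $w=2^a\oplus 2^b$. Here I would additionally use $S(0)=0$, which is automatic in the setting in which this lemma is applied, since the S-boxes produced by the constructions of Section \ref{construction_main_section} fix every input of Hamming weight $\le 1$. Then the pair $\{2^a,2^b\}$ has input difference $w$ and output difference $2^a\oplus 2^b=w$, while the pair $\{0,w\}$ has input difference $w$ and output difference $S(0)\oplus S(w)=w$; these two distinct pairs both occupy the DDT cell $(w,w)$, so that entry is $\ge 4$, contradicting the APN property. (Equivalently, the fixed-point set $\{0,2^0,\dots,2^{n-1},w\}$ is not Sidon, the difference $w$ being realised both as $0\oplus w$ and as $2^a\oplus 2^b$.)

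I expect the main obstacle to be the weight-$2$ case \emph{without} the assumption $S(0)=0$, i.e.\ under the bare hypothesis that only the weight-$1$ inputs are fixed: one can still show that APN-ness forces $S(0)\neq 0$ and constrains the other possible values of $S(0)$, but then no second pair collides with $\{2^a,2^b\}$ inside the subspace $\langle 2^a,2^b\rangle$, so a collision must be produced among the remaining cosets, which requires tracking the $2$-to-$1$ structure of the derivatives $D_{2^a}S$ and $D_{2^b}S$. By contrast the weight-$3$ argument above is immediate and needs no hypothesis on the value of $S(0)$.
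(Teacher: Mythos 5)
Your proof is correct and takes essentially the same route as the paper's: in each case you force a single DDT cell up to $4$ by exhibiting two distinct input pairs, and your choices of pairs (the two weight-one values XORing to the difference, plus $\{0,w\}$ in the weight-2 case and $\{2^a,t\}$ in the weight-3 case) coincide with the paper's. The dependence on $S(0)=0$ that you flag for the weight-2 case is equally present, though unremarked, in the paper's own proof (it uses the pairs $(0,x)$ and $(x,0)$), and is harmless since the lemma is applied to S-boxes fixing all inputs of Hamming weight $<2$.
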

\begin{proof}
For some value $a \in GF(2^{n})$, let $HW(a)$ denote the Hamming weight of $a$.
If some $x$ with weight 2 was mapped to itself, input difference $x$, output difference $x$ would occur for the two pairs $(0, x)$ and $(x, 0)$.
However, it would also occur for $(2^k, 2^l)$ and $(2^l, 2^k)$ where $(2^k \oplus 2^l) = x$.
This would imply that $S$ had differential uniformity $\geq 4$, contradicting our assertion that it was APN.

If some $x$ with weight 3 was mapped to itself, let $y$ be some input such that $HW(y) = 1$ and $HW(x \oplus y) = 2$. It would follow that $HW(S(x) \oplus S(y)) = 2$. Let $z$ be the value $(x \oplus y)$. From $z$ having weight 2, we see that (input difference $z$, output difference $z$) would occur not only for the two input pairs $(x, y), (y, x)$, but also for $(z_1, z_2), (z_2, z_1)$ where $z_1$ and $z_2$ are the two values with Hamming weight 1 that would xor to give $z$. Again, this contradicts the assumption of almost-perfect nonlinearity.
\end{proof}

\subsection{Consequences for genetic/memetic and ant algorithms}\label{SectionConseqThree}
\subsubsection{Genetic and memetic algorithms}
The concept of \textit{epistasis} in genetic algorithms is introduced in various tutorials on the subject \cite{Whitley1994, Townsend2003}, and is also relevant in the context of memetic algorithms \cite{Cotta_Neri2012}. It is stated that, for such algorithms to be effective, the representation of the entities being evolved should be such that there is ``little interaction between genes''. In the context of Townsend's tutorial \cite{Townsend2003}, the entities were strings of bits, and the individual genes were the individual bits. As far as possible, Townsend indicated, the effect of one bit's value on the fitness of the candidate solution should be independent of any other bit's effect.

While no rigorous mathematical definition based on the level of dependence was given, the representation of the candidate solutions was deemed to have high epistasis if the effects of certain bits on the fitness were in fact highly dependent on the values of other bits. This is not desirable.

In the case we are dealing with (bijections over $GF(2^{n})$), the genes are the positions assigned to the individual output values. Clearly, if we use the unmodified truth table as the representation, the level of epistasis is extremely high. No individual output value contains any information, since given any S-box $S$ with $S(x)=y$ and a given set of affine invariant properties, we may xor the set of outputs with any nonzero $m$-bit value to obtain $S_2$ with $S_{2}(x) \neq y$ and precisely the same set of properties. Even if we ensure $S(0)=0$ to prevent this, using a different matrix to that defined in Theorem \ref{theorem_7_1} means that $(S(1), S(2))$ can still be mapped to any pair of arbitrary nonzero values. Clearly no information on the overall fitness exists until after we have fixed the values for these three truth table entries and at least one other - possibly not even then - and the effect of $S(x)$'s value for any $x > 2$ is clearly dependent on the values for these three and other entries.

We have attempted to counter this in our genetic and memetic algorithms by using the results of Section \ref{construction_main_section} to fix the values of all $S(2^{i})$ and $S(3)$, and to enforce the stated restrictions on the values of the $S(2^{i}+1)$ so that these are consistent with the restrictions that would apply for an APN. Unfortunately, in experiments using the entries in the difference distribution table to calculate fitness, this has led to poorer average fitness than when we have not done so, for the following reasons:

\begin{enumerate}
\item In the experiments, the mutation phase must either block mutations that would violate these restrictions, or allow them but then execute the full procedure of Section \ref{construction_main_section} to render the candidate compliant again.
\begin{itemize}
\item In the first case, this removed potential improvements from the set of mutations that would otherwise have been beneficial in terms of fitness.
\item In the second case, the number of changes in truth table entries resulting from the transformation procedure appears to have impaired the algorithm's convergence.
\end{itemize}
\item The impaired convergence may well result from the non-uniqueness of the representation described. As an example, we evolved an APN S-box for $n=5$ and applied an algorithm that generated S-boxes affine-equivalent to it and then applied the procedure of Section \ref{construction_main_section} to these until it was unable to find any more equivalent boxes satisfying these restrictions.

In this experiment, the value of $S(6)$ took on every value $\in [17, 31]{\backslash}\{24\}$ in at least one of the thus-constructed S-boxes. $S(31)$ took on twenty different values, and the values in other positions were similarly varied. It would appear that the level of epistasis is still extremely high even after the affine transformations are used to fix and restrict truth table values as defined in Section \ref{construction_main_section}, and without knowledge of further transformations which could tighten the restrictions further, it is not clear how this can be addressed. Neither is there any alternate representation for affine equivalence group equivalence classes which would possess less epistasis, and all known alternate representations for bijections over $GF(2^{n})$ are themselves in one-to-one correspondence with different truth tables.
\item The effects as described affected not only the mutation phase, but also the local optimisation phase when memetic algorithms were used. Since the local optimisation phase involved the equivalent of several mutations per individual at each generation, the loss of performance was even more marked.
\end{enumerate}

In addition to the above, there exist more general notions of equivalence than affine-equivalence: EA-equivalence (Definition \ref{EA_equivalence_defn}) and CCZ-equivalence (Definition \ref{CCZ_equivalence_defn}). Both of these generalise affine equivalence, and their invariant properties include the frequency with which each DDT entry occurs, as also the frequencies of absolute values of entries in the cryptographically relevant autocorrelation table and linear approximation table (the latter being particularly important in calculating the crucial nonlinearity property.) This fact almost certainly serves to increase the level of epistasis further, as non affine-equivalent S-boxes may still be CCZ or EA-equivalent, and possess identical fitness in terms of these tables. At this time, we do not know how to exploit the CCZ and EA transformations to impose further restrictions on output values that might counter this, and are forced to leave this as a matter for future research.

Finally, we note that experiments were made in using the restrictions in conjunction with simulated annealing, by preventing moves that would violate the constraints. The loss in performance observed was consistent with that observed for the memetic algorithms as noted.

\subsubsection{Ant algorithms}
In experiments for $n=5$, the restrictions as described again led to worse performance for all parameter choices for the Ant System \cite{Colorni_Dorigo_Maniezzo1996} and for two versions of Ant Colony System \cite{Dorigo_Gambardella1997, Luke2009Metaheuristics}. Why this was so is not so clear as in the case of memetic algorithms, although local optimisation used in these algorithms would have been affected for the same reasons (either due to restricted moves, or retransformations impairing convergence.)
\end{appendices}
\end{document}